\documentclass[dvipsnames,letterpaper,11pt]{article}
\usepackage[margin=1in]{geometry}
\usepackage{booktabs} 
\usepackage[ruled]{algorithm2e} 
\usepackage{xfp} 
\usepackage{tikz}
\usepackage{tikz-network}
\usetikzlibrary{patterns,positioning}
\usetikzlibrary{arrows}
\usetikzlibrary{arrows.meta, decorations.markings}
\usetikzlibrary{patterns.meta}
\usepackage[hidelinks]{hyperref}
\usepackage{cleveref}
\usepackage{amsthm}
\usepackage{stmaryrd}
\usepackage[longnamesfirst,sort]{natbib}
\usepackage{caption}
\usepackage{subcaption}
\usepackage{float}
\usepackage{comment}
\usepackage{cancel}
\usepackage{xcolor}
\usepackage[T1]{fontenc}
\usepackage{todonotes}

\definecolor{blue}{HTML}{34495E}
\definecolor{red}{HTML}{C0392B}
\definecolor{green}{HTML}{F1C40F}
\definecolor{grey}{HTML}{BDC3C7}

\usepackage{amsmath}
\usepackage{tikz}
\usepackage{pgfplots}
\usepackage{pgfplotstable}
\usepackage{soul}
\usepackage{mathtools}

\usepgfplotslibrary{external}
\pgfplotsset{compat=1.3}

\usepackage{thmtools} 
\usepackage{thm-restate}

\SetAlFnt{\small}
\SetAlCapFnt{\small}
\SetAlCapNameFnt{\small}
\SetAlCapHSkip{0pt}
\IncMargin{-\parindent}
\usepackage{enumitem}
\usepackage{bbold}

\newcommand{\NN}{\mathbb{N}}

\newcommand{\PP}{\mathbb{P}}
\newcommand{\RR}{\mathbb{R}}
\newcommand{\EE}{\mathbb{E}}

\newcommand{\calA}{\mathcal{A}}

\newcommand{\Gr}{\mathrm{Gr}}
\newcommand{\spl}{\hat{v}}
\newcommand{\boost}{\hat{V}}

% Environments

\newtheorem{theorem}{Theorem}

\newtheorem{corollary}{Corollary}

\newtheorem{claim}{Claim}

\crefname{claim}{Claim}{Claims}
%\crefname{obs}{Observation}{Observations}
%\crefname{eq}{Equation}{Equations}
%\crefname{ineq}{Inequality}{Inequalities}
%\crefname{prop}{Property}{Properties}

%\creflabelformat{eq}{(#2#1#3)}
%\creflabelformat{ineq}{(#2#1#3)}
%\creflabelformat{prop}{(#2#1#3)}

\newcommand{\email}[1]{\protect\href{mailto:#1}{#1}}

\begin{document}
	
\title{Online Proportional Apportionment\\
}
\author{Javier Cembrano\thanks{Department of Algorithms and Complexity, Max Planck Institut für Informatik; Department of Industrial Engineering, Universidad de Chile; \email{jcembran@mpi-inf.mpg.de}.}
\and Jose Correa\thanks{Department of Industrial Engineering, Universidad de Chile; \email{correa@uchile.cl}.}
\and Svenja M.\ Griesbach \thanks{Centro de Modelamiento Matemático (CNRS IRL2807), Universidad de Chile; Department of Computer Science, RWTH Aachen University; \email{sgriesbach@cmm.uchile.cl}.}
\and Victor Verdugo\thanks{Institute for Mathematical and Computational Engineering and Department of Industrial and Systems Engineering, PUC Chile; \email{victor.verdugo@uc.cl}.}
}

\date{}
\maketitle

\begin{abstract}
Traditionally, the problem of apportioning the seats of a legislative body has been viewed as a one-shot process with no dynamic considerations. While this approach is reasonable for some instances of the problem, dynamic aspects play an important role in many others. In this paper, we initiate the study of apportionment problems in an online setting. Specifically, we introduce an online algorithmic framework to handle proportional apportionment with no information about future events. 
In this model, time is discrete and there are~$n$ parties that receive a certain share of the votes at each time step.
An online algorithm needs to irrevocably assign a prescribed number of seats at each time, ensuring that each party receives its fractional share rounded up or down, and that the cumulative number of seats allocated to each party remains close to its cumulative share up to that time.

We consider deterministic and randomized online apportionment methods. For deterministic methods, we construct a family of adversarial instances that yield a lower bound, linear in $n$, on the worst-case deviation between the seats allocated to a party and its cumulative share.
We show that this bound is best possible and is matched by a natural greedy method.
As a consequence, a method guaranteeing that the cumulative number of seats assigned to each party up to any step equals its cumulative share rounded up or down (global quota) exists if and only if~$n\leq 3$.
Then, we turn to randomized allocations and show that, when $n\leq 3$, we can randomize over methods satisfying global quota with the additional guarantee that each party receives, in expectation, its proportional share in every step.
Our proof is constructive: We show that any method satisfying these properties can be obtained from a flow on a recursively constructed network.
We showcase the applicability of our results to obtain approximate solutions in the context of online dependent rounding procedures for multidimensional instances.
\end{abstract}

\thispagestyle{empty}
\newpage
\setcounter{page}{1}

\section{Introduction}

The {\it proportional apportionment} problem plays a paramount role in the structural electoral design of modern democratic systems, capturing the task of allocating the seats of a representative body (e.g., a parliament or constitutional assembly).
Formally, an instance is given by a positive integer vector $v=(v_1,\ldots,v_n)$ whose entries sum up to a positive integer value $H$.
The goal is to find a non-negative integer vector $a=(a_1,\ldots,a_n)$ such that $\sum_{i=1}^na_i=H$, and $a_i$ should be {\it proportional} to $v_i$.
This problem is typically solved in the following electoral scenarios.
In one of them, we have a set of $n$ states or districts, and each value $v_i$ represents the fraction of the population of state $i$.
In a second scenario, we have a set of $n$ political parties, and each value $v_i$ represents the fraction of votes obtained by party $i$.

Historically, the quality of an apportionment method has been measured according to whether it satisfies a prescribed set of axioms or not. 
Among them, the {\it quota property}, requiring that each party gets its quota rounded up or down, has played a prominent role since Alexander Hamilton, the first US Secretary of the Treasury, proposed a solution to the apportionment problem in 1791, later adopted in 1851.
Hamilton's method first calculates each party's quota, immediately assigns the floor of this number to the party, and then it goes through the parties in decreasing order of their quota residue $v_i - \lfloor v_i \rfloor$ and assigns one more seat to each party until the desired house size $H$ is met; by construction, Hamilton's method satisfies the quota property.
We refer to the books by \cite{balinski2010fair} and \cite{pukelsheim2017} for an extensive mathematical treatment of the quota property and the axiomatic theory of proportional apportionment.

While the classic apportionment theory provides a thorough axiomatic characterization of proportionality, it takes a static point of view in which solutions for an apportionment instance are computed independently of the past and future events.
Under this paradigm, when an apportionment problem is solved repeatedly over time, rounding decisions made {\it locally} at each time do not prevent systematic biases over the cumulative solutions.
However, dynamic representation plays a crucial role in the political organization of societies and the implementation of public policies~\citep{stimson1995dynamic}.
In addition to the natural application to elections that are repeated after each term of office---as discussed by \citet{golz2025apportionment} in the context of the sortition of the European Commission---the challenge of proportional apportionment over time arises in contexts beyond political representation. 
For instance, the Indian reservation policy mandates that publicly funded institutions set aside a fixed share of seats and jobs for designated beneficiary groups, and entitlements that must be rounded to whole numbers~\citep{evren2021affirmative}.
As institutions run successive recruitment rounds, they must ensure their total integral apportionment allocations stay as close as possible to the prescribed cumulative (fractional) quotas; this {\it global} goal is in tension with the allocation monotonicity of reserved seats over the recruitment cycles.
Similar problems are faced by institutions in other contexts, such as the distribution of human resources, public services, and facilities across different organizational or geographical units.

The key problem behind the tension between the local and global quota requirements is that, although we have access to historical data, we lack information about future apportionment instances; a decision taken today can induce deviations from the cumulative quotas for the next period.
Consider the following example (introduced by \citet{golz2025apportionment}) with sequential elections to allocate a single seat among four parties $\{1,2,3,4\}$.
In the first step, each party receives $1/4$ of the votes, so we can assume that a method assigns the seat to an arbitrary party, say party $1$.
In the second step, parties $2$, $3$, and $4$ get $1/3$ of the votes each; we can again assign the seat arbitrarily, say to party $2$.
If in the third step the votes are evenly distributed among parties $3$ and $4$, the one not receiving the seat will have a cumulative entitlement strictly larger than $1$ but no seat allocated.
Therefore, the global quota requirement is violated regardless of the choices made in the method.

\subsection{Our Contribution}

We introduce a modeling and online-algorithmic framework to handle proportional apportionment problems under local and global rounding considerations with no information about future events.
In the following, we summarize the contributions of our work and discuss its consequences beyond apportionment.

\paragraph{A model for online proportional apportionment.} 
We propose a new model for online apportionment with proportional representation.
In this model, there are~$n\in\NN$ parties, and at each time step~$t\in\NN$, they receive a fractional allocation of votes represented by a vector~$v^t$ summing to an integer~$H^t\in\NN$. We consider methods satisfying local quota, meaning that we allocate either $\lfloor v^t_i\rfloor$ or $\lceil v^t_i\rceil$ seats to each party $i\in [n]$. Therefore, throughout the paper we assume without loss of generality that $v^t\in [0,1)^n$.
Upon observing~$v^t$, a method needs to irrevocably assign exactly~$H^t$ seats to parties with strictly positive votes, granting one seat per selected party.
Crucially, the decision at time~$t$ can depend only on observed votes and allocations up to that time.
Future votes cannot be observed or influence the decision.
Conversely, the votes can be set based on the previous choices of the method.

We study which fairness properties can be achieved in this online setting. 
In particular, we focus on three central notions that apply to each party and at each step:
global quota (the total number of allocated seats matches the floor or ceiling of the total number of votes),~$\alpha$-proportionality
(the total number of seats and the total number of votes differ by at most~$\alpha$),
and ex-ante proportionality (the expected total number of seats equals the total number of votes).
While in the offline setting, where the entire sequence of votes is known in advance, there exist allocation methods that satisfy global quota (and therefore~$1$-proportionality) and ex-ante proportionality for any number of parties (\Cref{prop:offline}), we show that achieving similar guarantees in the online setting is significantly more challenging.

\paragraph{Deterministic methods and approximate proportionality.} 
In \Cref{sec:deterministic}, we study deterministic allocation methods and introduce the greedy apportionment method, which in each step assigns the~$H^t$ seats to the parties whose current number of allocated seats deviates most from their cumulative share.
As the first part of our main result (\Cref{thm:app-quota}) we show that this method is~$\smash{\frac{n-1}{2}}$-proportional for any number of parties~$n\in\NN$, and even satisfies global quota when~$n\leq 3$.
The analysis builds on the closely related \emph{leaky-bucket problem}, where \cite{adler2003proportionate} proved~$(H_n-1)$-proportionality (with~$H_n$ denoting the~$n$th harmonic number).
However, the fact that our setting only allows allocations to parties with positive votes worsens the attainable guarantee from logarithmic to linear.
For the second part of \Cref{thm:app-quota}, we show that this bound is tight by constructing a family of hard instances that, for any~$\varepsilon>0$, rule out the possibility of achieving better than~$\big(\smash{\frac{n-1}{2}-\varepsilon}\big)$-proportionality with any deterministic method.

\paragraph{Randomized methods and ex-ante proportionality.} 
In \Cref{sec:online}, we turn to randomized methods.
The main result of this section (\Cref{thm:exante}) is that an online apportionment method satisfying both global quota and ex-ante proportionality exists if and only if~$n\leq 3$.
Our proof is constructive and based on a class of methods derived from a carefully designed flow network for each time step, which encodes all constraints using the full history of votes and allocations.
Moreover, we show that any method satisfying global quota and ex-ante proportionality can be obtained by this recursive flow construction, thereby fully characterizing the class of such methods (\Cref{prop:methods-one-to-one}).
We conclude by briefly explaining why this construction breaks down when~$n\geq 4$.

\paragraph{Consequences in online dependent rounding.}
Our results in Sections \ref{sec:deterministic} and \ref{sec:online} translate directly to the setting of online dependent rounding for multidimensional instances, i.e., we receive an $n$-dimensional marginal vector at each step instead of a single non-negative real value. 
In particular, the case of $n=1$ party captures the online level-set problem studied very recently by \cite{naor2025online}. 
In \Cref{sec:connections}, we showcase the applicability of \Cref{thm:app-quota} and \Cref{thm:exante} in the context of online optimization, and introduce a multidimensional generalization of the multi-stage stochastic covering problem studied by \cite{naor2025online}.
We describe how to achieve online near-feasibility and approximation guarantees as a simple consequence of our results. 

\subsection{Further Related Work}

Not only from a political viewpoint apportionment has been studied for centuries, but also its formal mathematical treatment has a long history, as testified by the books 
of \citet{balinski2010fair} and \cite{pukelsheim2017}.
Of particular relevance to our work is the \textit{quota} axiom, analyzed in depth by \citet{balinski1975quota}.

Our work departs from classic apportionment theory in two key ways: We consider \textit{online} methods that allocate seats in sequential elections without information about future elections, and part of our results concern \textit{randomized} apportionment.
While we initiate the study of online apportionment, \cite{evren2021affirmative} previously explored (offline) methods ensuring proportionality in sequential allocation steps, motivated by public sector recruitment in India.
Their positive results on monotone, proportional-in-expectation, and quota-approximating methods build upon the work of \citet{akbarpour2020approximate} on the decomposition of proportional fractional allocations into proportional-in-expectation and near-feasible integral allocations.
On the other hand, randomized apportionment has received increasing attention recently, decades after the foundational work of \citet{grimmett2004stochastic}.
Indeed, \citet{golz2025apportionment} established the existence of randomized methods satisfying quota, house monotonicity, and expected proportionality; \citet{cembrano2025new} studied randomized methods that allow minor deviations from the house size to circumvent impossibility results; and \citet{correa2024monotone} focused on monotonicity with respect to subsets of parties.
\citet{aziz2019random} also considered lotteries over deterministic apportionment methods in the context of strategyproof peer selection.

From a technical point of view, our work builds on a line of work applying optimization techniques to apportionment.
In particular, network flows have been exploited in the study of bidimensional proportionality~\citep{balinskidemange1989a,balinskidemange1989b,gaffkepukelsheim2008,guenterzachariasen2007} and house-monotone apportionment~\citep{cembrano2025new}; \citet{pukelsheimricca2011} provide an overview of these and other applications.
\citet{golz2025apportionment} also used linear programming techniques in the context of monotone apportionment. 
Also, our positive result for $n\leq 3$ generalizes the expected proportionality and quota properties of a rounding scheme proposed by \citet{naor2025online} as an online version of the classic scheme by \citet{srinivasan2001distributions}.

Also related to our setting is the \emph{bamboo garden trimming} problem introduced by \citet{gkasieniec2017bamboo}.
In this problem, each bamboo grows with a fixed rate (analogous to parties receiving the same number of votes in each election). In each step, one bamboo can be cut to the ground, and the objective is to minimize the maximum height across all bamboos over time.
The best-known approximation guarantees are due to \citet{kawamura2024proof}.
A generalization of this setting is captured by the \emph{cup game}, often studied in the context of load balancing~\citep{liu1973scheduling} and deamortization~\citep{dietz1987two}.
Here, an adversary distributes~$H$ fractional units of water across a subset of~$n$ cups, and the player selects~$H$ cups from which to remove one unit of water each.
This models a scenario with multiple processors or allocations per round and has been analyzed against both oblivious and adaptive adversaries~\citep{bender2019achieving,bender2021randomized,kuszmaul2020achieving}. 
Our analysis of the deterministic greedy apportionment method is closely related to the \emph{leaky bucket problem} studied by \citet{adler2003proportionate}.
In this model, each bucket starts at the same water level and leaks a fractional amount of water in each step.
The goal is to assign integral units of water to buckets in a way that maximizes the minimum fill level over time.
This setting differs from ours in a crucial aspect:
Allocations can be made to any bucket, regardless of its leak rate (or, in our interpretation, even to parties with zero votes). 
A more detailed comparison to our model is provided in \Cref{sec:deterministic}.

\section{Instances, Axioms, and Offline Methods}\label{sec:prelims}

Let $\NN$ (resp. $\NN_0$) denote the strictly positive (resp.\ non-negative) integers.
Let $[n]\coloneqq \{1,2,\ldots,n\}$ (resp.\ $[n]_0\coloneqq \{0,1,\ldots,n\}$) denote the set containing the first elements of these sets up to $n\in \NN$ (resp.\ $n\in \NN_0$).
We identify $[n]$ with a set of $n$ parties.
    In each election~$t\in \NN$, party~$i\in [n]$ receives a fractional seat entitlement~$v^t_i\in [0,1)$ based on the outcome of a vote;\footnote{We recall that the restriction to $v^t_i\in [0,1)$ is without loss in the context of methods that satisfy local quota. This is because, given any vote vector~$v^t\in \mathbb{R}^n_{\geq 0}$, any such method deterministically assigns each party its lower quota~$\lfloor v^t_i\rfloor$, which reduces the problem to allocating the remaining fractional seats~$v^t_i-\lfloor v_i^t\rfloor\in[0,1)$ among the~$n$ parties.}~$H^t\coloneqq\sum_{i=1}^n v^t_i \in \NN$ denotes the total number of seats to be distributed in step~$t$.
We interpret~$v^t_i$ directly as the votes party~$i$ receives in election~$t$ and we write~$v^t\in [0,1)^n$ for the corresponding vector; an \textit{$n$-dimensional instance} is fully described by the sequence~$(v^t)_{t\in \NN}$.
While we define our general instances in the online setting as infinite sequences to capture the fact that we do not know when they will end or what the votes in the next steps will be, we restrict to a finite number of steps when we study offline methods, as well as for the iterative construction of some methods and negative instances.
We refer to a vote sequence over a finite number of steps $T$, $(v^t)_{t\in \{T_0,\ldots,T_0+T\}}$, as a \textit{partial $n$-dimensional instance}.

We now introduce the notion of an offline method and the two key axioms we study throughout this paper.
A \textit{deterministic offline apportionment method} receives a partial $n$-dimensional instance $(v^t)_{t\in [T]}$ and outputs a sequence $(X^t)_{t\in [T]}$, where for every $t\in [T]$, $X^t\subset \{i\in [n]: v^t_i>0\}$ and $|X^t|=H^t$.
In words, an offline apportionment method selects a subset of parties receiving the $H^t$ seats in each step, constrained to those parties receiving strictly positive votes.
When the vote sequence is fixed, we identify a method as the sequence $(X^t)_{t\in [T]}$ (and not as a function mapping a vote sequence to a set sequence).
A \textit{randomized offline apportionment method} is a lottery over deterministic offline apportionment methods.

For an instance $(v^t)_{t\in [T]}$ and $t\in [T]$, we let $V^t_i\coloneqq \sum_{k\in [t]} v^k_i$ denote the votes received by party $i$ up to step $t$.
For an instance $(v^t)_{t\in [T]}$, a (deterministic or randomized) method $(X^t)_{t\in [T]}$, and $t\in [T]$, we let $A^t_i\coloneqq |\{k\in [t]: i\in X^k\}|$ denote the (possibly random) number of seats received by party $i$ up to step $t$.
A deterministic method $(X^t)_{t\in [T]}$ satisfies \textit{global quota} on a certain instance if, for every $t\in [T]$ and $i\in [n]$, it holds $A^t_i \in \{\lfloor V^t_i\rfloor, \lceil V^t_i\rceil\}$.
A randomized method satisfies global quota if all deterministic methods in its support satisfy global quota.
In addition, a randomized method satisfies \textit{ex-ante proportionality} if $\PP[i\in X^t]=v^t_i$ for every $t\in [T]$ and $i\in [n]$.

The following proposition states the existence of offline apportionment methods satisfying both of the axioms introduced above.
It follows from the existence of house monotone and quota compliant methods in the classic (non-sequential) apportionment problem, shown via different approaches by \citet{balinski2010fair,cembrano2025new,golz2025apportionment}.

\begin{restatable}{proposition}{propOffline}\label{prop:offline}
    For every~$n\in \NN$ and $n$-dimensional instance $(v^t)_{t\in [T]}$, there exists an offline apportionment method that satisfies global quota and ex-ante proportionality.
\end{restatable}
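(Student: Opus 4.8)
The plan is to realize the desired randomized method as a convex combination of integral flows in a suitable capacitated network, in the spirit of the flow-based treatments of house-monotone apportionment. Fix the finite instance $(v^t)_{t\in[T]}$ and build a directed graph with a source $s$, a sink $z$, a node $a_t$ for each step $t\in[T]$, and a directed path $c_i^1\to c_i^2\to\cdots\to c_i^T$ for each party $i\in[n]$. Put an arc $(s,a_t)$ with lower and upper capacity $H^t$; an arc $(a_t,c_i^t)$ with capacity $[0,1]$ whenever $v_i^t>0$ (and no arc, equivalently capacity $[0,0]$, otherwise); an arc $(c_i^t,c_i^{t+1})$ with capacity $[\lfloor V_i^t\rfloor,\lceil V_i^t\rceil]$ for $t<T$; and an arc $(c_i^T,z)$ with capacity $[\lfloor V_i^T\rfloor,\lceil V_i^T\rceil]$. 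Writing $x_i^t$ for the flow on $(a_t,c_i^t)$, conservation at $a_t$ forces $\sum_i x_i^t=H^t$, while conservation at the nodes $c_i^t$ gives, by a one-line induction on $t$, that the flow on $(c_i^t,c_i^{t+1})$ equals $\sum_{k\le t}x_i^k$. Hence a flow respecting all lower/upper capacities is feasible exactly when $\sum_i x_i^t=H^t$ for all $t$, $x_i^t\in[0,1]$ with $x_i^t=0$ if $v_i^t=0$, and $\sum_{k\le t}x_i^k\in[\lfloor V_i^t\rfloor,\lceil V_i^t\rceil]$ for all $t,i$. In particular $x_i^t\coloneqq v_i^t$ is a feasible (fractional) flow, since $\sum_i v_i^t=H^t$ and $\sum_{k\le t}v_i^k=V_i^t$ lies between its floor and ceiling.

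Now observe that an \emph{integral} feasible flow is precisely a deterministic offline method satisfying global quota: set $X^t\coloneqq\{i:x_i^t=1\}$, so that $|X^t|=H^t$, $X^t\subseteq\{i:v_i^t>0\}$, and $A_i^t=\sum_{k\le t}x_i^k\in\{\lfloor V_i^t\rfloor,\lceil V_i^t\rceil\}$ for all $t,i$. Since all lower and upper capacities are integers and the node-arc incidence matrix of a directed graph is totally unimodular, the polytope of feasible flows is integral; hence the feasible fractional flow corresponding to $v$ can be written as a convex combination $\sum_j\lambda_j f_j$ of integral feasible flows $f_j$. Let $M_j$ be the deterministic method associated with $f_j$ — each $M_j$ satisfies global quota — and let the randomized method draw $M_j$ with probability $\lambda_j$. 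Then for every $t$ and $i$, $\PP[i\in X^t]=\sum_j\lambda_j\,(f_j)_{(a_t,c_i^t)}=v_i^t$, which is ex-ante proportionality, completing the proof.

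I expect the only genuinely delicate steps to be: (i) the bookkeeping showing the chain arcs carry the cumulative allocations $A_i^t$ — a short induction on $t$ using conservation at $c_i^t$, where the inflow is $x_i^t$ plus the flow from $c_i^{t-1}$ and the outflow is the single arc toward $c_i^{t+1}$; and (ii) passing from a feasible fractional flow to a convex combination of integral feasible flows, which is immediate from integrality of the flow polytope but can also be done by hand by repeatedly routing an integral subflow along a cycle in the support of the fractional flow and rescaling. It is worth recording that feasibility is never vacuous: since $v^t\in[0,1)^n$ and $\sum_i v_i^t=H^t$, more than $H^t$ parties have $v_i^t>0$, so the step-$t$ constraint $\sum_{i:\,v_i^t>0}x_i^t=H^t$ with $x_i^t\le 1$ is satisfiable. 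Finally, as an alternative one could invoke the known existence of randomized house-monotone, quota-compliant, proportional-in-expectation methods for classic apportionment instances; the construction above is essentially its sequential, varying-votes analogue.
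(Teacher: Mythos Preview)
Your proof is correct and follows essentially the same approach as the paper: both construct the identical layered network (your $s,a_t,c_i^t,z$ are the paper's $o,u^t,w_i^t,d$), observe that the proportional assignment $x_i^t=v_i^t$ is a feasible fractional flow, and then invoke integrality of the flow polytope to decompose it into integral flows corresponding to global-quota methods. Your write-up is slightly more explicit about the chain-arc bookkeeping and the integrality argument, but the construction and the logic are the same.
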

The proof of this proposition, which is included for completeness and can be found in \Cref{app:prop-offline}, models apportionment methods via flows on a capacitated network, as illustrated in \Cref{fig:offlineflownetwork}.
This construction was given by \citet{cembrano2025new} to prove the existence of apportionment methods satisfying house-monotonicity and quota compliance in a non-sequential setting, where house monotonicity requires that an increase in the total number of seats to allocate does not cause a drop in the number of seats any party receives.
Specifically, we consider an origin $o$ and vertices $(u^t)_{t\in [T]}$, where there is an arc $(s,u^t)$ with capacity $H^t$ for each $t\in [T]$. 
For each $t\in [T]$, there are also vertices $(w^t_i)_{t\in [T],i\in [n]}$ and arcs $(u^t,w^t_i)$ with capacity $1$ for each $i\in [n]$ such that $v^t_i>0$.
There are also arcs $(w^t_i,w^{t+1}_i)$ for every $t\in [T-1]$ and $i\in [n]$, with lower capacity $\lfloor V^t_i\rfloor$ and upper capacity $\lceil V^t_i\rceil$.
Finally, there is a destination~$d$ and arcs $(w^T_i,d)$ for every $i\in [n]$, with lower capacity $\lfloor V^T_i\rfloor$ and upper capacity $\lceil V^T_i\rceil$.
The flow on each arc $(u^t,w^t_i)$ is interpreted as the number of seats ($0$ or $1$) that party $i\in [n]$ receives in step $t$, so that the flow on each arc $(w^t_i,w^{t+1}_i)$ (or $(w^T_i,d)$ when $t=T$) corresponds to the total number of seats that party $i$ has received up to step $t$.
Any (deterministic) apportionment method corresponding to a feasible flow on this network satisfies global quota because of these arcs' capacities.
The fact that there exists a lottery over such methods satisfying ex-ante proportionality follows from two facts: (i) the fractional flow that sends a flow of $v^t_i$ on each arc $(w^t_i,w^{t+1}_i)$ (or $(w^T_i,d)$ when $t=T$) is feasible, and (ii) the network flow polytope has integral extreme points.
Hence, the proportional fractional apportionment can be obtained as a distribution over deterministic apportionment methods satisfying global quota.
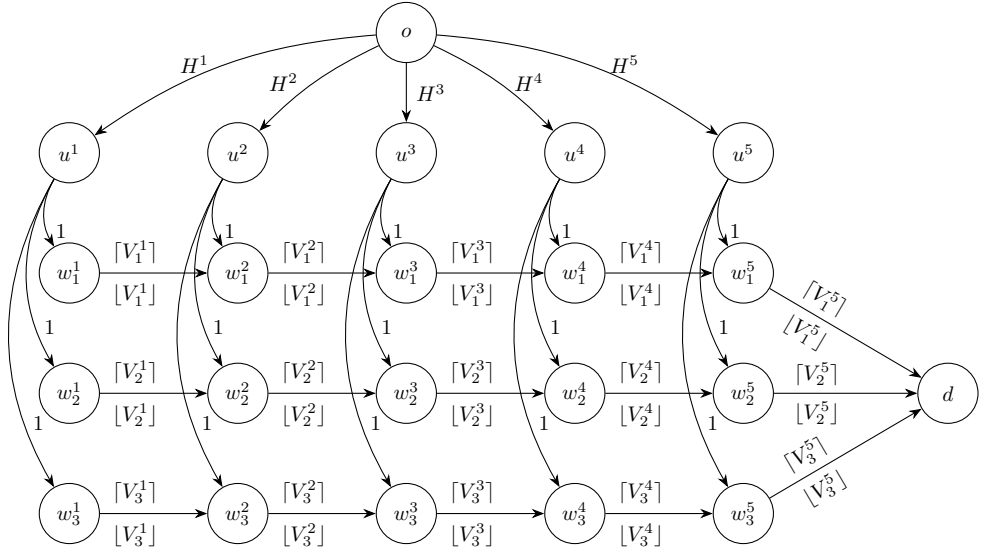
\begin{figure}[t]
\centering
\begin{tikzpicture}[
    node distance=1.5cm and 2cm,
    vertex/.style={draw, circle, minimum size=1cm, font=\small},
    ->, >=Stealth,
    scale=0.8, transform shape
]

    % u^t nodes (horizontal row)
    \foreach \t [evaluate=\t as \x using (\t-3)*2.8] in {1,...,5} {
        \node[vertex] (u\t) at (\x, 0) {$u^{\t}$};
    }

    % Source node o above u3
    \node[vertex] (o) at (0, 2) {$o$};
    \draw (o) to[bend right=15] node[left=5pt, font=\small] {$H^1$} (u1);
    \draw (o) to[bend right=10] node[left=4pt, font=\small] {$H^2$} (u2);
    \draw (o) -- node[right=1pt, font=\small] {$H^3$} (u3);
    \draw (o) to[bend left=10] node[right=4pt, font=\small] {$H^4$} (u4);
    \draw (o) to[bend left=15] node[right=8pt, font=\small] {$H^5$} (u5);

    % w^t_i nodes vertically below u^t
    \foreach \t [evaluate=\t as \x using (\t-3)*2.8] in {1,...,5} {
        \foreach \i [evaluate=\i as \y using -\i*2] in {1,2,3} {
            \node[vertex] (w\t\i) at (\x, \y) {$w^{\t}_{\i}$};
            % u^t -> w^t_i arc with capacity 1
            \draw (u\t) to[bend right=30] node[pos=0.8, right, yshift=1pt, font=\small] {$1$} (w\t\i);
        }
    }

    % Horizontal edges: w^t_i --> w^{t+1}_i with shifted lower/upper labels
    \foreach \t in {1,2,3,4} {
        \foreach \i in {1,2,3} {
            \draw (w\t\i) to
              node[pos=0.33, above, font=\small, yshift=1pt] {$\lceil V^\t_\i\rceil$}
              node[pos=0.33, below, font=\small, yshift=-1pt] {$\lfloor V^\t_\i\rfloor$}
              (w\the\numexpr\t+1\relax\i);
        }
    }

    % Sink node to the right of w52
    \node[vertex] (d) at (9, -4) {$d$};
    \foreach \i in {1,2,3} {
        \draw (w5\i) to
            node[pos=0.3, above, sloped, font=\small] {$\lceil V^5_\i\rceil$}
            node[pos=0.3, below, sloped, font=\small] {$\lfloor V^5_\i\rfloor$}
            (d);
    }

\end{tikzpicture}

    \caption{
    Illustration of the flow network used in the proof of \Cref{prop:offline} for $T=5$ steps and $n=3$ parties.
    A single arc label represents an upper capacity; labels below and above an arc correspond to lower and upper capacities, respectively.
    }
    \label{fig:offlineflownetwork}
\end{figure}

\section{Deterministic Methods and Approximate Proportionality}\label{sec:deterministic}

In this section, we study \emph{deterministic} online apportionment methods.
As our main result in this context, we provide tight bounds on their worst-case deviations from exact proportionality, understood as the maximum difference, in any step, between the number of seats and the cumulative share that a party has received.
Our findings have immediate implications regarding the axioms introduced in \Cref{sec:prelims}: In contrast to the positive result for offline methods, an online method can satisfy global quota if and only if there are at most three parties.
A comparison to results on the closely related \textit{leaky bucket} problem highlights the difficulties of our setting due to the restriction of only assigning seats to parties receiving positive votes in each step.

To formally present our results, we first formally define a method in the online setting.
A (deterministic) \textit{online apportionment method} selects, given an~$n$-dimensional instance~$(v^t)_{t\in \NN}$, a subset of parties~$X^t\subset [n]$ for every $t\in \NN$ as follows:
\begin{enumerate}[label=\normalfont(\roman*)]
    \item For $t=0$, we define~$V^0_i=v^0_i\coloneqq 0$ and~$A^0_i=a^0_i\coloneqq 0$ for every~$i\in [n]$.
    \item For~$t\in \NN$, the \textit{history} is given by the pair~$\theta^{t-1}\coloneqq (V^{t-1},A^{t-1})\in \Theta^{t-1}$, where~$\Theta^{t-1} \coloneqq  \RR^n_+\times \NN^n_0$, and we define the cumulative vote vector as $V^{t-1}_i \coloneqq  \sum_{k=0}^{t-1} v^k_i$ for every $i\in [n]$, and the cumulative allocation vector as~$A^{t-1}_i\coloneqq \sum_{k=0}^{t-1}a^k_i$.
    The function~$X^t\colon \Theta^{t-1}\times [0,1)^{n} \to \binom{[n]}{H^t}$ maps the pair~$(\theta^{t-1},v^t)$ to a subset~$X^t(\theta^{t-1},v^t)\subset \{i\in [n]: v^t_i>0\}$ of size~$H^t$, and the \textit{allocation} $a^t\in \{0,1\}^n$ is derived from this set:~$a^t_i=1$ if and only if~$i\in X^t(\theta^{t-1},v^t)$.
\end{enumerate}    
When the history and vote vector are fixed, we omit these arguments and use~$X^t$ to directly refer to the set (instead of the function).

To quantify the deviations from the proportional number of seats that parties should receive, we introduce the notion of \emph{approximate proportionality}, which measures how close the number of assigned seats to each party is to its cumulative votes.
For~$\alpha\geq 0$ and an~$n$-dimensional instance~$(v^t)_{t\in \NN}$, an online apportionment method~$(X^t)_{t\in \NN}$ is~\textit{$\alpha$-proportional} if, for every~$t\in \NN$ and~$i\in [n]$, the cumulative allocation satisfies
$
   |A^t_i-V^t_i|
   \leq \alpha
$.
We further say that a method is \textit{strictly~$\alpha$-proportional} if the above inequality holds strictly for every~$t\in \NN$ and~$i\in [n]$, i.e., $|A^t_i-V^t_i|< \alpha$.
The following theorem is the main result of this section.
\begin{restatable}{theorem}{thmAppQuota}\label{thm:app-quota}
    Let~$n\in \NN$.
    There exists an online apportionment method that is~$\frac{n-1}{2}$-proportional on every~$n$-dimensional instance, and strictly~$1$-proportional on every~$3$-dimensional instance.
    Conversely, for every constant~$\varepsilon>0$, there exists an~$n$-dimensional instance on which no online apportionment method is~$\big(\frac{n-1}{2}-\varepsilon \big)$-proportional.
\end{restatable}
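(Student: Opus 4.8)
The plan is to establish the two directions of \Cref{thm:app-quota} separately. For the positive direction, I would introduce the greedy method explicitly: at each step~$t$, having observed~$v^t$, assign the~$H^t$ seats to the~$H^t$ parties with~$v^t_i>0$ that maximize the deviation~$V^{t-1}_i-A^{t-1}_i$ (breaking ties arbitrarily). The key quantity to track is the signed deviation~$D^t_i\coloneqq V^t_i-A^t_i$. First I would verify the easy invariant that~$\sum_{i\in[n]} D^t_i=0$ for every~$t$ (both cumulative votes and cumulative seats sum to~$\sum_{k\le t}H^k$), and that each step changes~$D_i$ by~$v^t_i\in[0,1)$ if~$i$ is not selected and by~$v^t_i-1\in(-1,0]$ if~$i$ is selected. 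The crux is a one-sided bound: I would show by induction that~$D^t_i< 1$ for all~$i,t$ and, more carefully, that~$D^t_i\le \tfrac{n-1}{2}$ — actually the clean statement is the symmetric one~$|D^t_i|\le \tfrac{n-1}{2}$, which I would get from combining an upper bound~$D^t_i< 1$ on each coordinate with the zero-sum constraint, since if one coordinate were below~$-\tfrac{n-1}{2}$ the other~$n-1$ coordinates would have to sum above~$\tfrac{n-1}{2}$, contradicting that each is below~$1$ only once~$n$ is large enough — so the real work is the refined argument. The honest approach: prove~$D^t_i\le n-1-\text{(something)}$ is too weak; instead I expect the right induction is a potential-function / amortized argument in the spirit of \cite{adler2003proportionate}'s leaky-bucket analysis, tracking~$\max_i D^t_i$ and showing it cannot exceed~$\tfrac{n-1}{2}$ because whenever it is large, greedy must have been selecting that party, forcing a compensating decrease; combined with~$\sum_i D^t_i=0$ this also bounds~$-D^t_i$ from below. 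For~$n=3$ I would push the same bookkeeping to get the strict bound~$|D^t_i|<1$: with three parties and~$H^t\in\{1,2\}$ there are few cases per step, and one can check directly that greedy keeps all three deviations strictly inside~$(-1,1)$, which then immediately gives global quota via the characterization in \Cref{sec:prelims}.

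For the converse (lower bound), I would construct, for each~$\varepsilon>0$, a finite adversarial instance generalizing the four-party example of~\citet{golz2025apportionment} quoted in the introduction. The idea is a recursive "elimination tournament": start with all~$n$ parties, and in a first phase feed vote vectors that are uniform over progressively shrinking subsets, each time forcing any method to leave some party behind with a deviation increment bounded below. Concretely, in round~$k$ the adversary presents~$H=1$ seat split evenly among a subset~$S_k$ of size~$n-k+1$; whichever party is not chosen over the next few steps accumulates votes without seats. By choosing the split sizes carefully (e.g. making each surviving subset receive votes in ratios that let the adversary always target the party the method has most recently ignored), one forces the worst-off party's cumulative deviation up to~$\tfrac{n-1}{2}-\varepsilon$ after enough rounds. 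The adversary has full power to set~$v^t$ after seeing the method's past choices, so this is a clean online lower-bound construction.

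The main obstacle I anticipate is the positive direction's~$\tfrac{n-1}{2}$ bound for general~$n$: a naive induction on~$\max_i D^t_i$ does not close, because in a single step greedy serves only~$H^t$ of the deviating parties, and~$H^t$ can be as small as~$1$ while many parties simultaneously have deviation close to the current maximum and positive votes. The resolution will require a weighted potential — plausibly something like~$\Phi^t=\sum_i f(D^t_i)$ for a convex~$f$, or an order-statistics argument bounding the~$j$th largest deviation by a decreasing function of~$j$ — together with the zero-sum identity to convert the one-sided bound into the two-sided~$\tfrac{n-1}{2}$. I would also need to confirm tightness of the constant matches the lower-bound construction exactly (not just up to a factor), which means the two constructions must be tuned against each other; getting the~$-\varepsilon$ slack right in the adversary, and confirming no cleverer method beats~$\tfrac{n-1}{2}$ even by a constant, is where the delicate counting lives.
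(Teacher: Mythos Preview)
Your positive direction is on the right track: the paper also analyzes a greedy method and, as you anticipate, uses an order-statistics argument adapted from \citet{adler2003proportionate}. Two remarks. First, the paper's greedy selects parties minimizing $s^{t-1}_i-v^t_i$ (equivalently, maximizing $V^t_i-A^{t-1}_i$, the deficit \emph{after} the new votes), not $V^{t-1}_i-A^{t-1}_i$; this matters for the key claim that consecutive order statistics of the surplus differ by at most~$1$. Second, the actual potential is not a convex $\Phi=\sum_i f(D_i)$ but the sequence $\gamma^t_\ell\coloneqq\tfrac{1}{\ell}\sum_{j\le\ell}s^t_{(j)}+\ell/2$ of shifted prefix averages of the sorted surpluses; the argument locates the first time and smallest index where $\gamma_\ell>\gamma_{\ell+1}$ and derives a contradiction from the gap bound. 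Your initial ``zero-sum plus each coordinate below~$1$'' attempt indeed fails (it only gives $-(n-1)$ on the other side), and you correctly flag this.

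The negative direction, however, has a genuine gap. An elimination tournament with uniform votes over shrinking subsets $S_k$ of sizes $n,n-1,\ldots,2$ gives the worst-off party a deficit of at most $\tfrac{1}{n}+\tfrac{1}{n-1}+\cdots+\tfrac{1}{2}=H_n-1$, which is the leaky-bucket bound, not $\tfrac{n-1}{2}$. Generalizing the four-party example in this style cannot produce a linear bound. The paper's construction is fundamentally different: it relies on a \emph{splitter} lemma showing that for any two parties whose surpluses differ by less than~$1$, a single two-party vote vector forces their surpluses to differ by exactly~$1$ while preserving their average. The adversary then alternates between (i)~an inductive ``booster'' on $n'-2$ inner parties to bring a second party close to the current extreme, and (ii)~a splitter on the two extreme parties to push the maximum (or minimum) out by roughly $1/2^\ell$ at the $\ell$th iteration. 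This geometric accumulation on top of the inductive $(n'-1)/2$ is what yields $(n'+1)/2-\varepsilon$; the uniform-vote tournament has no mechanism to repeatedly widen the same extreme.
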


As a consequence of \Cref{thm:app-quota}, we note the following observation regarding the global quota axiom introduced in \Cref{sec:prelims}.

\begin{corollary}\label{cor:global-quota}
	There exists an online apportionment method satisfying global quota on every $n$-dimensional instance if and only if $n\leq 3$.
\end{corollary}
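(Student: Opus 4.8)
The plan is to derive \Cref{cor:global-quota} as a direct consequence of \Cref{thm:app-quota}, treating the two implications separately.

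\textbf{The positive direction ($n\le 3$).} For $n\in\{1,2,3\}$, \Cref{thm:app-quota} guarantees an online apportionment method that is strictly $1$-proportional, i.e., $|A^t_i-V^t_i|<1$ for every $t\in\NN$ and $i\in[n]$. I would simply observe that for any integer $A$ and real $V$ with $|A-V|<1$ one has $A\in\{\lfloor V\rfloor,\lceil V\rceil\}$: indeed $|A-V|<1$ forces $V-1<A<V+1$, so the only integers in that open interval are $\lfloor V\rfloor$ and $\lceil V\rceil$ (which coincide when $V\in\NN_0$). Hence strict $1$-proportionality implies global quota, and the method from \Cref{thm:app-quota} witnesses the claim. (For $n=1$ the claim is trivial since $v^t_1\in\{0\}$ forces $H^t=0$; but it is subsumed anyway.)

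\textbf{The negative direction ($n\ge 4$).} Here I would argue the contrapositive of the positive statement, using the lower-bound half of \Cref{thm:app-quota}. First note that global quota implies strict $1$-proportionality is \emph{not} quite what we need — we instead use that global quota implies $1$-proportionality in the sense $|A^t_i-V^t_i|\le 1$. More to the point, I would invoke the hard-instance family: for every $\varepsilon>0$ there is an $n$-dimensional instance on which no online method is $\big(\tfrac{n-1}{2}-\varepsilon\big)$-proportional. Taking $n\ge 4$, we have $\tfrac{n-1}{2}-\varepsilon \ge \tfrac{3}{2}-\varepsilon > 1$ for $\varepsilon$ small enough (say $\varepsilon=\tfrac14$), so no online method is $1$-proportional on that instance, and a fortiori no online method satisfies global quota on it (since global quota entails $|A^t_i-V^t_i|\le 1$, which is $1$-proportionality). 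This rules out the existence of an online method satisfying global quota on every $n$-dimensional instance when $n\ge 4$.

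\textbf{Main obstacle.} There is no real obstacle: both directions are short logical consequences of \Cref{thm:app-quota} once one records the elementary fact that $|A-V|<1$ (resp.\ $|A-V|\le 1$) pins $A$ to $\{\lfloor V\rfloor,\lceil V\rceil\}$ — with the only mild subtlety being the strict-vs-nonstrict inequality bookkeeping, which is why \Cref{thm:app-quota} is stated with \emph{strict} $1$-proportionality for $n\le 3$. The one thing worth double-checking is that the lower-bound instance in \Cref{thm:app-quota} genuinely has the full $n$ parties active enough to preclude global quota (not just a degenerate sub-instance), but since $\tfrac{n-1}{2}-\varepsilon>1$ for $n\ge4$ the gap is comfortably bigger than $1$, so no edge cases arise.
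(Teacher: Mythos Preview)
Your proof is correct and mirrors the paper's intended derivation: the corollary is stated as an immediate consequence of \Cref{thm:app-quota} with no separate proof given. One small clarification worth making explicit: \Cref{thm:app-quota} asserts strict $1$-proportionality only for $n=3$, but for $n\le 2$ the $\tfrac{n-1}{2}$-proportionality bound (being at most $\tfrac12$) already yields it, so your positive direction goes through as written.
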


In the remainder of this section, we prove Theorem \ref{thm:app-quota}.
Before presenting our analysis, we describe the closely related \textit{leaky bucket} problem \citep{adler2003proportionate}, using our notation to highlight the parallels. 
The problem can be described as follows.
A set of $n$ buckets is initialized with a common water level that we normalize to $0$.
In each step~$t\in \NN$, a volume $v^t_i\geq 0$ of water leaks from each bucket $i$, such that the total leakage~$H\coloneqq \sum_{i\in [n]}v^t_i$ is an integer; note that we allow negative loads.
An online algorithm must decide how to pour back~$H$ unit-volume refills, with the goal of maximizing the minimum load across all buckets and all time steps.
\citeauthor{adler2003proportionate} showed that a natural greedy algorithm, that refills the currently emptiest bucket one by one at each step, achieves a worst-case deviation from the initial load of at most~$H_n-1$ (where $H_n$ is the $n$th harmonic number), and that this guarantee is best possible.

While the leaky bucket problem is closely related to our setting, there is one key distinction:\footnote{In fact, \citet{adler2003proportionate} consider the deviation from the minimum load \textit{after} the water leaks but \textit{before} the algorithm refills the buckets. However, as noted in their work, this formulation is equivalent to ours up to an additive shift of~$H$.
}
In the apportionment setting, we can only assign seats to parties with positive votes, which corresponds to only allowing refills to buckets that leaked in the current step. 
In Theorem \ref{thm:app-quota}, we show that, perhaps surprisingly, this restriction significantly increases the worst-case deviation: it changes from logarithmic to linear.
Our lower bound is tight and is matched by a simple greedy method, which we describe below. 

\paragraph{The greedy apportionment method.} For an~$n$-dimensional instance~$(v^t)_{t\in \NN}$, we define the \emph{surplus} of party~$i$ at step~$t$ as~$s^t_i \coloneqq A^t_i-V^t_i$, where a negative surplus indicates a \textit{deficit}.
The surplus can be defined recursively by setting~$s^0_i\coloneqq 0$ and $s^t_i \coloneqq s^{t-1}+(a^t_i-v^t_i)$ for every~$t\in \NN$.
For simplicity, we assume throughout this section that, in each step $t$, the parties are sorted from larger to smaller surplus, i.e., $s^t_1\geq s^t_2\geq \cdots \geq s^t_n$.
This assumption is without loss of generality, since we can rename parties after each step to maintain this invariant without affecting the surplus vector~$s^t$.

For the positive direction of the theorem, we consider the natural \textit{greedy apportionment method} $(\Gr^t)_{t\in \NN}$, defined as follows.
For an $n$-dimensional instance~$(v^t)_{t\in \NN}$ and associated house sizes $(H^t)_{t\in \NN}$, the method assigns~$H^t$ seats in step~$t$ to the parties with the largest deficit up to step~$t-1$ plus votes in step $t$; i.e.,
\[
    \Gr^t(V^{t-1},A^{t-1},v^t)\in \arg\min \left\{\sum_{i\in S} (s^{t-1}_i-v^t_i): S\subset \{i\in [n]: v_i>0\},\ |S|=H^t \right\}.
\]
Each party~$i$ in the selected set is then assigned a seat in step~$t$, i.e.,~$a^t_i=1$ for every $i\in \Gr^t(V^{t-1},A^{t-1},v^t)$ and~$a^t_i=0$ for every other~$i$.
The following lemma establishes that this method achieves the proportionality guarantees stated in \Cref{thm:app-quota}.

\begin{restatable}{lemma}{lemGreedy}\label{lem:greedy}
    The greedy apportionment method is $\frac{n-1}{2}$-proportional on every~$n$-dimensional instance for every~$n\in \NN$, and strictly $1$-proportional on every~$3$-dimensional instance.
\end{restatable}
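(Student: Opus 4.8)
\textbf{Proof approach for \Cref{lem:greedy}.}
The plan is to bound the surplus vector $s^t$ of the greedy method by induction on $t$, tracking not just the extreme entries $s^t_1$ and $s^t_n$ but a family of bounds on partial sums of the sorted surplus vector. The key invariant I would try to maintain is that for every prefix size $j\in[n]$,
\[
   \sum_{i=1}^{j} s^t_i \;\le\; \frac{j(n-j)}{2n}\cdot\text{(something)}\qquad\text{or, more simply,}\qquad s^t_1\le \frac{n-1}{2}\ \text{ and }\ s^t_n\ge -\frac{n-1}{2}.
\]
Since $\sum_{i=1}^n s^t_i = \sum_{i=1}^n(A^t_i-V^t_i)=0$ at every step (the house size is met exactly), a bound on the largest surplus immediately yields a bound on the smallest deficit and vice versa, so it suffices to prove one direction; I would prove $s^t_1\le\frac{n-1}{2}$ for general $n$.

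The heart of the argument is a \emph{potential/exchange} analysis of one greedy step. Fix a step $t$ and write the pre-step deficits $d_i \coloneqq s^{t-1}_i - v^t_i$ (the quantity greedy minimizes over). Greedy picks the $H^t$ smallest $d_i$ among parties with $v^t_i>0$, adds $1$ to each, leaving $s^t_i = d_i + \mathbb{1}[i\text{ picked}]$. I would argue: if after the step some party $p$ has $s^t_p > \frac{n-1}{2}$, then $p$ was \emph{not} picked (else its pre-step deficit already exceeded $\frac{n-1}{2}-1 + v^t_p$, and combined with the induction hypothesis $s^{t-1}_p\le\frac{n-1}{2}$ we get $v^t_p < $ something that can be pushed to a contradiction), so $s^t_p = d_p = s^{t-1}_p - v^t_p$; because $p$ was skipped, every picked party $q$ satisfies $d_q \le d_p$, hence $s^t_q \le d_p + 1 = s^t_p + 1 - v^t_p < s^t_p$. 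Thus all parties with $v^t_i>0$ that were picked sit below $s^t_p$. The parties not picked include all parties with $v^t_i=0$ (these keep $s^t_i = s^{t-1}_i$) plus $p$ and possibly others. Now use $\sum_i s^t_i = 0$: the total deficit carried by the non-$p$ parties is $-s^t_p < -\frac{n-1}{2}$, spread over at most $n-1$ parties, so some party has surplus below $-\frac12$; iterating or summing this over the sorted order gives a lower bound of the form $s^t_n \ge s^t_p - (n-1)$, and symmetry with the previous-step bound on $s^{t-1}_n \ge -\frac{n-1}{2}$ closes the loop. The cleanest route is probably to prove simultaneously, by induction, the pair of bounds $s^t_1 - s^t_n \le n-1$ and ($s^t_1 \le \frac{n-1}{2}$, equivalently $s^t_n\ge-\frac{n-1}{2}$), since the $\le n-1$ spread bound is exactly what a single greedy step preserves and the centering $\sum s^t_i=0$ converts spread into the two-sided $\frac{n-1}{2}$ guarantee.

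For the $n=3$ strict-$1$-proportionality refinement, I would redo the one-step analysis by hand for three parties with the sorted convention $s^t_1\ge s^t_2\ge s^t_3$, $s^t_1+s^t_2+s^t_3=0$. Here $H^t\in\{0,1,2\}$ (since $v^t\in[0,1)^3$ sums to an integer $\le 2$). The claim to maintain is $s^t_1 < 1$ and $s^t_3 > -1$; a short case analysis on $H^t$ and on which parties have positive votes — using that greedy adds $1$ to the smallest-deficit parties with positive votes, and that the one party that could grow is one with $v^t_i = 0$, which forces the votes of the other two to sum to an integer in $\{1,2\}$ and hence be large enough that their deficits shrink — should show the strict inequalities are preserved, with the strictness coming precisely from $v^t_i \in [0,1)$ being a strict bound. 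The main obstacle I anticipate is getting the induction hypothesis strong enough: the naive bound $s^{t-1}_1\le\frac{n-1}{2}$ alone does not obviously survive a greedy step (a skipped large-surplus party could, in principle, be skipped again indefinitely), so one really needs the auxiliary spread bound $s^t_1 - s^t_n \le n-1$ — or a prefix-sum strengthening — in the induction, and verifying that greedy's choice rule preserves \emph{that} (in particular handling the subtlety that parties with zero votes are ineligible, which is what blocks the logarithmic leaky-bucket bound) is the delicate part.
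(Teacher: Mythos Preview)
Your proposal has a genuine gap in two places.

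First, the one-step exchange argument is oriented the wrong way. A party's surplus can only \emph{increase} in a step if it is picked (if it is skipped, $s^t_p = s^{t-1}_p - v^t_p \le s^{t-1}_p$, and if $v^t_p=0$ it stays put). So assuming the induction hypothesis $s^{t-1}_p \le \tfrac{n-1}{2}$, the only way to get $s^t_p > \tfrac{n-1}{2}$ is that $p$ \emph{was} picked, not skipped. Your inference ``$p$ was not picked'' is therefore backwards, and the subsequent chain (``every picked $q$ has $d_q\le d_p$'', etc.) collapses.

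Second, and more importantly, the fallback invariant you propose --- maintaining the spread bound $s^t_1 - s^t_n \le n-1$ and converting it via $\sum_i s^t_i = 0$ into the two-sided $\tfrac{n-1}{2}$ bound --- is simply false as stated. For $n=4$ the vector $(2,0,-1,-1)$ sums to $0$ and has spread exactly $3=n-1$, yet $s_1 = 2 > \tfrac32$. Centering plus spread does not recover the maximum; you would need control of \emph{all} prefix sums, not just the first and last entries, and you have not said what that strengthened invariant is or why greedy preserves it.

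The paper's proof does exactly this prefix-average strengthening, adapted from the leaky-bucket analysis of Adler et al. Define $\sigma^k_\ell$ as the average of the $\ell$ largest surpluses at time $k$ and set $\gamma^k_\ell \coloneqq \sigma^k_\ell + \ell/2$. One has $\gamma^k_n = n/2$ always, and initially $\gamma^0_1 < \gamma^0_2 < \dots < \gamma^0_n$. If ever $s^t_1 > \tfrac{n-1}{2}$ then $\gamma^t_1 > n/2 = \gamma^t_n$, so at some first time $k^*$ and smallest index $i^*$ one has $\gamma^{k^*}_{i^*} > n/2$ and $\gamma^{k^*}_{i^*} > \gamma^{k^*}_{i^*+1}$. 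The greedy choice rule then forces $s^{k^*}_{i^*} \le s^{k^*}_{i^*+1} + 1$ (this is where the exchange argument actually lives), and a short calculation shows this contradicts the minimality of $i^*$. The $n=3$ strict case is handled by examining the boundary case $i^*=1$ separately and exploiting $\sum_i s^t_i = 0$ once more.
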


For the proof of this result, which is deferred to \Cref{app:lem-greedy}, we adapt the arguments by \citet{adler2003proportionate}, who established a worst-case deviation of~$H_n-1$ for the greedy algorithm in the leaky bucket problem. 
We outline the key ideas used to bound the maximum surplus; the bound on the deficit follows from analogous reasoning.

For each step~$t\in \NN_0$ and index~$\ell\in [n]$, we define~$\gamma^t_\ell$ as the average of the~$\ell$ largest surpluses at time~$t$, plus~$\ell/2$.
Observe that (i)~$\gamma^t_n=n/2$ for every~$t$; and (ii)~at step~$0$, the sequence~$(\gamma^0_\ell)_{\ell \in [n]}$ is strictly increasing.
Now suppose, for the sake of contradiction, that at some step~$t$ the maximum surplus is strictly larger than~$(n-1)/2$.
Then~$\gamma^t_1>n/2$, implying that there must exist some~$i\in[n-1]$ for which~$\gamma^t_i > \gamma^t_{i+1}$.
Let~$t$ and~$i$ be the earliest time and smallest index for which this occurs.

We now derive a contradiction by analyzing the change in surplus from step~$t-1$ to~$t$.
On the one hand, the total surplus of the~$i$ parties with the largest surplus at step~$t$ must have increased from~$t-1$ to~$t$, implying that some of these parties received a seat in step~$t$, despite the presence of a party outside this set that also received positive votes.
This suggests that the~$i$th and~$(i+1)$th largest surpluses are not too far apart; otherwise, the greedy method would have allocated a seat to the party outside the set of the~$i$ parties with the largest surplus.
On the other hand, by the choice of~$i$ and the definition of~$\gamma^t_\ell$, the gap between the~$i$th and~$(i+1)$ the largest surpluses must be large enough to ensure that~$\gamma^t_{i-1}<\gamma^t_i$, leading to a contradiction.

To establish \emph{strict} approximate proportionality for~$n=3$, we examine the only borderline case that does not lead to a contradiction by the argument above, namely~$i=1$.
Here, we apply a refined averaging argument and show that both the surplus and the deficit of some party would simultaneously exceed~$1$, contradicting the choice of the greedy method in the previous step.
We note that such a contradiction does not arise for other values of~$n$; for instance, when~$n=2$, one can construct a simple instance in which both parties attain a deviation of exactly~$1/2$.
 
\paragraph{Construction of the hard instances.} The proof of the negative direction of \Cref{thm:app-quota} is considerably more involved than the \textit{leaky bucket} case and is based on a careful iterative construction.
We consider instances in which each vote vector corresponds to a two-party election, i.e., the vector~$v^t$ contains exactly two non-zero entries for every~$t\in \NN$. 
The lower bound construction relies on a key property of such instances:
Whenever two parties have surpluses that differ by less than one, a two-party election between them can be used to split their surpluses so that they differ by exactly one, while preserving their average surplus.
This simple yet powerful observation forms the basis of our argument.

\begin{restatable}{lemma}{lemSplitters}\label{lem:splitters}
    Let $n\in \NN$, let $(v^k)_{k\in [t]}$ be a partial $n$-dimensional instance, and let $V^t$ denote the aggregate votes up to step $t$.
    Let $(X^k)_{k\in \NN}$ be any online apportionment method and $A^t$ the corresponding cumulative allocation up to step~$t$.
    Then, for all parties $i,j\in [n]$ with $i\neq j$ such that $0\leq s^t_i-s^t_j<1$, there exists a vote vector $v^{t+1}\in [0,1)^n$ such that, after the allocation $a^{t+1}$ corresponding to $X^{t+1}(V^t,A^t,v^{t+1})$,
    \[
        \vert s^{t+1}_i-s^{t+1}_j\vert = 1,\quad \frac{s^{t+1}_i+s^{t+1}_j}{2} = \frac{s^t_i+s^t_j}{2}.
    \]
\end{restatable}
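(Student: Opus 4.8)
The plan is to construct a single two-party election between $i$ and $j$ that forces any online method to produce the desired surplus configuration, regardless of which of the two parties it picks. Write $d \coloneqq s^t_i - s^t_j \in [0,1)$ for the current surplus gap and $m \coloneqq (s^t_i + s^t_j)/2$ for the current average. I want to choose a one-seat election, i.e.\ $v^{t+1}_i + v^{t+1}_j = 1$ and $v^{t+1}_k = 0$ for all $k \neq i,j$, so that $H^{t+1}=1$ and exactly one of $i,j$ receives the seat; this automatically preserves the sum $s^{t+1}_i + s^{t+1}_j = s^t_i + s^t_j$ (one unit of seat is added, one unit of vote is subtracted in total), giving the second identity for free. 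So the whole task reduces to choosing the split $(v^{t+1}_i, v^{t+1}_j)$ so that \emph{whichever} party the method selects, the new gap has absolute value exactly $1$.

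The key computation: suppose the method gives the seat to $i$. Then $s^{t+1}_i = s^t_i + 1 - v^{t+1}_i$ and $s^{t+1}_j = s^t_j - v^{t+1}_j = s^t_j - (1 - v^{t+1}_i)$, so the new gap is $s^{t+1}_i - s^{t+1}_j = d + 2(1 - v^{t+1}_i)$. If instead the seat goes to $j$, the new gap is $s^{t+1}_i - s^{t+1}_j = d - 2 v^{t+1}_i$. I need one of these to equal $+1$ and the other to equal $-1$ (in some order); the natural choice is to demand $d + 2(1 - v^{t+1}_i) = 1$ when $i$ wins and $d - 2v^{t+1}_i = -1$ when $j$ wins. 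Both equations give the \emph{same} value $v^{t+1}_i = (1 + d)/2$, and correspondingly $v^{t+1}_j = (1 - d)/2$. Since $0 \le d < 1$, we get $v^{t+1}_i \in [1/2, 1)$ and $v^{t+1}_j \in (0, 1/2]$, so both entries lie in $[0,1)$ and both are strictly positive — hence this is a legal vote vector with $H^{t+1}=1$, and both $i$ and $j$ are eligible to receive the seat. (The boundary case $d=0$ gives $v^{t+1}_i=v^{t+1}_j=1/2$, still valid.) In either branch $|s^{t+1}_i - s^{t+1}_j| = 1$, which is the first identity.

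The remaining point is a small sanity check rather than a real obstacle: I should confirm that with this choice the method genuinely has only the two options "seat to $i$" or "seat to $j$" — this holds because an online apportionment method must pick $H^{t+1}=1$ party from $\{k : v^{t+1}_k > 0\} = \{i,j\}$, and the allocation $a^{t+1}$ is then determined. I should also note that the other parties' surpluses are unaffected ($a^{t+1}_k = 0 = v^{t+1}_k$ for $k \neq i,j$), so nothing outside $\{i,j\}$ moves; this is worth stating since later applications will want to control the whole surplus vector. I do not expect any genuinely hard step here — the lemma is the elementary "splitting" building block, and the only thing to be careful about is the direction of the inequality $0 \le s^t_i - s^t_j < 1$ (which is what makes $v^{t+1}_i, v^{t+1}_j$ both fall in the valid range), and the observation that the two branches conveniently force the \emph{same} split.
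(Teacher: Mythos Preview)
Your proposal is correct and takes essentially the same approach as the paper: you construct the identical vote vector $v^{t+1}_i=(1+d)/2$, $v^{t+1}_j=(1-d)/2$ (with $d=s^t_i-s^t_j$), verify it lies in $[0,1)^n$ with both entries positive, and check that either allocation choice yields a gap of absolute value $1$ while preserving the average. The only cosmetic difference is that the paper phrases the key observation as $s^t_i-v^{t+1}_i=s^t_j-v^{t+1}_j$ (the pre-allocation surpluses are equalized), whereas you solve for the vote split by imposing the target gaps directly; these are the same computation.
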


For the proof, which is deferred to \Cref{app:lem-splitters}, we construct a two-party election~$v^{t+1}$ satisfying~$s^t_i-v^{t+1}_i=s^t_j-v^{t+1}_j$.
This ensures that, regardless of the allocation method, one of the two parties will end up with a surplus exactly one unit larger than the other after step~$t+1$.
As a direct consequence, this implies that~$\frac{1}{2}$-proportionality is tight and cannot be improved (even by a non-constant value) for instances with~$n=2$ parties.

In what follows, whenever two parties~$i$ and~$j$ satisfy the conditions of \Cref{lem:splitters} for a given history and apportionment method, we denote the corresponding vote vector by~$\spl(i,j)$ and refer to it as the \textit{$(i,j)$-splitter}.

We now state a lemma that implies the lower bound in \Cref{thm:app-quota}, and in fact yields a more general result.
It states that, for any partial~$n$-dimensional instance up to some step~$r$, any method, and any subset of~$n'$ parties, there exist~$t$ vote vectors that can be added to the partial instance such that the surplus or deficit of some party in this subset is arbitrarily close to~$(n'-1)/2$.

\begin{restatable}{lemma}{lemPropLB}\label{lem:prop-lb}
    Let $n,n'\in \NN$ with $n'\leq n$ and $P\subseteq [n]$ with $|P|=n'$ be a subset of~$n'$ parties in~$[n]$.
    Let $r\in \NN_0$ and $\varepsilon>0$ be fixed, and let $(v^k)_{k\in [r]}$ be a partial $n$-dimensional instance.
    Let $(X^t)_{t\in \NN}$ be any online apportionment method and $A^r$ the corresponding cumulative allocation up to step~$r$. 
    Then, 
    there exists a finite value $t\in \NN$ with $t\geq r$ and a vote sequence $(v^k)_{k\in \{r+1,\ldots,t\}}$ such that, 
    after the allocation $a^t$ corresponding to $X^t(V^{t-1},A^{t-1},v^t)$, there exists a party $i\in P$ such that~$|s^{t}_i| \geq (n'-1)/2-\varepsilon$.
\end{restatable}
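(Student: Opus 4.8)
The plan is to argue by induction on $n'$, using the $(i,j)$-splitter from \Cref{lem:splitters} as the basic gadget that transfers surplus between two parties while keeping their average fixed. For the base case $n'=1$, the claim asks for a party $i\in P$ (here $P=\{i\}$) with $|s^t_i|\geq -\varepsilon$, which is trivially true for $t=r$ since $|s^t_i|\geq 0$; so no votes need to be added. Alternatively one could take $n'=2$ as a base case, where a single application of \Cref{lem:splitters} to the unique pair in $P$ forces their surpluses to differ by exactly one while their average stays at its current value, hence one of them has $|s^{t}_i|\geq 1/2$ once we also observe that a pair of reals differing by $1$ with average $m$ has one of them at distance $\geq 1/2$ from $0$ (in fact $\max\{|m+1/2|,|m-1/2|\}\geq 1/2$ always).

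For the inductive step, suppose the statement holds for $n'-1$ and let $P$ with $|P|=n'$ be given. The idea is to first ``pump up'' the surplus of some party using a subset of size $n'-1$, and then use one more splitter involving the $n'$-th party to push the deviation from $(n'-2)/2$ up to (nearly) $(n'-1)/2$. Concretely: pick any party $p\in P$ and apply the inductive hypothesis to the set $P\setminus\{p\}$ with a smaller error parameter $\varepsilon'$ (to be tuned), obtaining a step $t_1\geq r$ and some party $i\in P\setminus\{p\}$ with $|s^{t_1}_i|\geq (n'-2)/2-\varepsilon'$. Assume WLOG this is a surplus, $s^{t_1}_i\geq (n'-2)/2-\varepsilon'$ (the deficit case is symmetric, exchanging the roles of up/down-rounding). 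Now I want to repeatedly apply splitters between $i$ and its ``neighbours'' so that, without decreasing $s_i$ by much, we can create a chain that eventually lets $s_i$ and $s_p$ differ by one with an average close to $(n'-2)/2$; each splitter between $i$ and a party $j$ with $0\le s_i-s_j<1$ replaces the pair $(s_i,s_j)$ by $(s_i+s_j)/2\pm 1/2$, i.e. it either raises $s_i$ by up to $1/2$ or lowers it by up to $1/2$. The key quantitative point is that the \emph{sum} $s_i+s_p$ is conserved under an $(i,p)$-splitter, and more generally the sum over any fixed set is conserved under splitters internal to that set; meanwhile global quota / $\alpha$-proportionality bounds (via the already-proven positive direction, or directly) control how negative $s_p$ can be, so that after forcing $s_i-s_p=1$ with $s_i+s_p$ essentially equal to what we can guarantee, we extract $s_i\geq (n'-1)/2-\varepsilon$ for a suitable choice of $\varepsilon'$ and a finite number of additional steps.

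The main obstacle — and the part that needs the real work — is controlling the interaction of the splitters: a single splitter involving $i$ can \emph{decrease} $s_i$, so one cannot naively chain them. The resolution I would pursue is to set up the right monotone potential. Following the $\gamma$-type averages used in the proof of \Cref{lem:greedy}, I would track, for the $n'$ parties of $P$, the quantity $\Gamma_\ell \coloneqq$ (sum of the $\ell$ largest surpluses among $P$) $+\,\ell(n'-\ell)/\text{(something)}$, or more simply the sum of the top $\ell$ surpluses in $P$; splitters internal to $P$ preserve the total $\sum_{j\in P}s_j$ and can only make the sorted profile ``more spread out,'' so a suitable convex potential is nondecreasing. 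One then argues that as long as the maximum surplus in $P$ is strictly below $(n'-1)/2-\varepsilon$, there is always a pair $j,j'\in P$ with $0\le s_j-s_{j'}<1$ to which \Cref{lem:splitters} applies and which strictly increases the potential by a fixed amount bounded below; since the potential is bounded above (by $\alpha$-proportionality of \emph{any} method, e.g. the trivial bound $|s_j|\le t$, or better the $(n-1)/2$ bound once we restrict attention to what adversary forces), the process terminates in finitely many steps at a configuration where some party in $P$ has $|s_j|\ge (n'-1)/2-\varepsilon$. Making the "there is always an applicable pair and it increases the potential by a definite amount" step precise — in particular handling the boundary situation where consecutive surpluses differ by exactly $1$ and no splitter applies, which is exactly when the profile is $(m+\tfrac{n'-1}{2}, m+\tfrac{n'-3}{2},\dots,m-\tfrac{n'-1}{2})$ and the extreme party already has $|s|\geq (n'-1)/2-\varepsilon$ for appropriate $m$ — is where the bulk of the estimation lies, and it is essentially a careful bookkeeping of the same averaging inequalities that appear in \Cref{lem:greedy}'s analysis.
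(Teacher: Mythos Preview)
Your plan diverges from the paper's in structure, and as written has two real gaps.

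\textbf{Gap 1: the inductive step by one does not close.} You go from $n'-1$ to $n'$ by first producing a party $i$ with $s_i\gtrsim (n'-2)/2$ and then hoping to use the extra party $p$ to push to $(n'-1)/2$. But a splitter of $(i,p)$ yields $\max\{s_i,s_p\}\to (s_i+s_p+1)/2$, so you need $s_p$ close to $(n'-2)/2$ as well, and nothing in your argument controls $s_p$. Your appeal to ``global quota / $\alpha$-proportionality bounds'' to bound $s_p$ is circular: the lemma is a lower bound against \emph{arbitrary} online methods, so you cannot assume the method in question is $\alpha$-proportional for any useful $\alpha$. The paper avoids this by inducting in steps of two: from $n'$ to $n'+2$, it repeatedly applies the inductive hypothesis to the \emph{middle} $n'$ parties of $P$ (excluding the current max-surplus and max-deficit parties) to manufacture a second party near the current extreme, then splits those two; each such ``boost-then-split'' round gains roughly $1/2^\ell$ at one end, and the process terminates in $O(\log(1/\varepsilon))$ rounds.

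\textbf{Gap 2: the potential argument is not pinned down.} Your proposed potential ``sum of the top $\ell$ surpluses in $P$'' need not strictly increase under a splitter whose two parties lie entirely inside (or entirely outside) the top $\ell$, so you cannot get a uniform positive increment from it. The right object is a \emph{strictly} Schur-convex potential such as $\Phi=\sum_{j\in P}s_j^2$: a splitter on a pair with gap $2d<1$ increases $\Phi$ by exactly $1/2-2d^2>0$, and whenever $\max_{j\in P}|s_j|<(n'-1)/2-\varepsilon$ pigeonhole gives some consecutive gap below $1-2\varepsilon/(n'-1)$, hence a uniform increment $\ge c(\varepsilon,n')>0$; since $\Phi<n'\big((n'-1)/2\big)^2$ throughout, the process terminates. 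If you make this precise you obtain a direct, induction-free proof that is actually cleaner than the paper's recursion --- but your write-up neither names this potential nor proves the uniform increment, and your termination clause (``bounded by $|s_j|\le t$ or by the $(n-1)/2$ bound'') is not a bound: the first grows with $t$, and the second applies only to the greedy method, not to the arbitrary method you face here.
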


The proof of this lemma is deferred to \Cref{app:lem-prop-lb}, but we outline the main ideas here.
The result is proven by induction on~$n'$ in steps of size~$2$.
The base case~$n'=1$ is trivial.
For~$n'=2$, the claim follows from the application of a single splitter:
If the two parties do not already differ in surplus by at least~$1$, the splitter can be used to separate their surpluses accordingly.

For the inductive step, consider a set~$P$ of~$n'+2$ parties.
We apply the inductive hypothesis to a subset of~$n'$ parties in~$P$, excluding the one with the largest surplus and largest deficit.
Repeating this process at most three times, we reach a situation in which two parties in~$P$ have surplus (or deficit) arbitrarily close to~$(n'-1)/2$, say, at least~$(n'-1-\varepsilon)/2$.

For simplicity, assume that two parties in~$P$ have a surplus (and not a deficit) arbitrarily close to~$(n'-1)/2$.
Applying a splitter to these two parties increases the surplus of the one with the larger surplus and decreases that of the other.
In particular, after this splitter, one of the two must have a surplus of at least~$(n'-1-\varepsilon)/2+1/2$.

To continue increasing this value, another splitter is needed.
However, since the other party's surplus is now one unit smaller, a direct application is not feasible.
Instead, we apply the inductive hypothesis once again to a subset of~$n'$ parties in~$P$, again excluding the two with the largest and smallest surplus.
As a result, at least one party has a surplus or a deficit exceeding~$(n'-1-\varepsilon)/2$. 
In the former case, we are again in the position to apply a splitter to the two parties with the largest surpluses, which this time will leave a party with a surplus of at least~$(n'-1-\varepsilon)/2+3/4$.
Otherwise, we now have two parties with a deficit exceeding~$(n'-1-\varepsilon)/2$, and we apply a splitter between them.

By repeatedly applying the induction hypothesis and a subsequent splitter to the surplus or deficit end, we increase the maximum deviation.
In particular, the deviation increases by at least~$1/2^\ell$ after the~$\ell$th iteration of this procedure at the corresponding end.
After finitely many steps, this process yields a surplus or deficit of at least~$(n'+1)/2-\varepsilon$.
\Cref{fig:splitterpath} illustrates the repeated application of splitters to get a surplus or deficit approaching~$(n-1)/2$, for~$n\in \{3,4\}$.
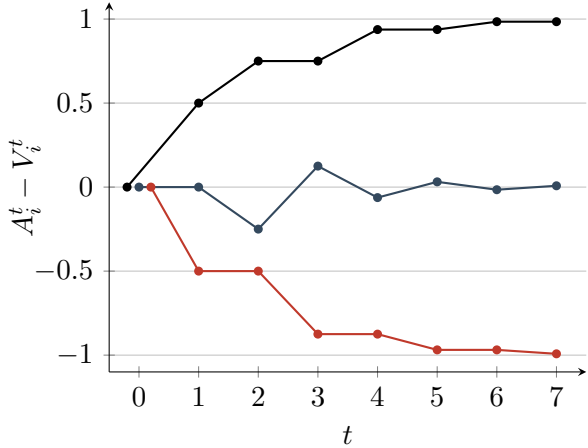
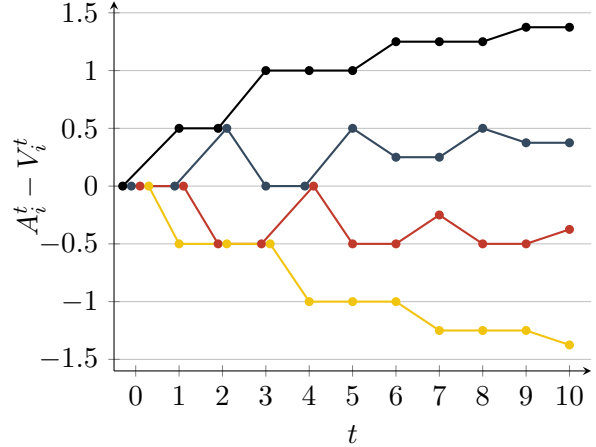
\begin{figure}[t]
\centering

% --- Left Subfigure: n = 3 ---
\begin{subfigure}[t]{0.48\textwidth}
\centering
\begin{tikzpicture}
  \begin{axis}[
      width=\textwidth,
      height=6.5cm,
      xmin=-0.5, xmax=7.5,
      ymin=-1.1, ymax=1.1,
      axis lines=left,
      xtick={0,1,...,7},
      ytick={-1,-0.5,0,0.5,1},
      xlabel={$t$},
      ylabel={$A^t_i - V^t_i$},
      ylabel shift={-10pt},
      grid=major,
      major grid style={line width=.2pt,draw=gray!50},
      xmajorgrids=false,
      enlargelimits=false,
      clip=false,
  ]

  \addplot[black, thick, mark=*, mark size=1.3pt, mark options={fill=black}]
    coordinates {
      (-0.2, 0)
      (1, 0.5)
      (2, 0.75)
      (3, 0.75)
      (4, 0.9375)
      (5, 0.9375)
      (6, 0.984375)
      (7, 0.984375)
    };

  \addplot[blue, thick, mark=*, mark size=1.3pt, mark options={fill=blue}]
    coordinates {
      (0, 0)
      (1, 0)
      (2, -0.25)
      (3, 0.125)
      (4, -0.0625)
      (5, 0.03125)
      (6, -0.015625)
      (7, 0.0078125)
    };

  \addplot[red, thick, mark=*, mark size=1.3pt, mark options={fill=red}]
    coordinates {
      (0.2, 0)
      (1, -0.5)
      (2, -0.5)
      (3, -0.875)
      (4, -0.875)
      (5, -0.96875)
      (6, -0.96875)
      (7, -0.9921875)
    };

  \end{axis}
\end{tikzpicture}
\caption{For $n=3$, a surplus of $63/64$ and a deficit of $127/128$ are reached up to $t=7$.}
\end{subfigure}
\hfill
% --- Right Subfigure: n = 4 ---
\begin{subfigure}[t]{0.48\textwidth}
\centering
\begin{tikzpicture}
  \begin{axis}[
      width=\textwidth,
      height=6.5cm,
      xmin=-0.5, xmax=10.5,
      ymin=-1.6, ymax=1.6,
      axis lines=left,
      xtick={0,1,...,10},
      ytick={-1.5,-1,-0.5,0,0.5,1,1.5},
      xlabel={$t$},
      ylabel={$A^t_i - V^t_i$},
      ylabel shift={-10pt},
      grid=major,
      major grid style={line width=.2pt,draw=gray!50},
      xmajorgrids=false,
      enlargelimits=false,
      clip=false,
  ]

  \addplot[black, thick, mark=*, mark size=1.3pt, mark options={fill=black}]
    coordinates {
      (-0.3, 0)
      (1, 0.5)
      (1.9, 0.5)
      (3, 1)
      (4, 1)
      (5, 1)
      (6, 1.25)
      (7, 1.25)
      (8, 1.25)
      (9, 1.375)
      (10, 1.375)
    };

  \addplot[blue, thick, mark=*, mark size=1.3pt, mark options={fill=blue}]
    coordinates {
      (-0.1, 0)
      (0.9, 0)
      (2.1, 0.5)
      (3, 0)
      (3.9, 0)
      (5, 0.5)
      (6, 0.25)
      (7, 0.25)
      (8, 0.5)
      (9, 0.375)
      (10, 0.375)
    };

  \addplot[red, thick, mark=*, mark size=1.3pt, mark options={fill=red}]
    coordinates {
      (0.1, 0)
      (1.1, 0)
      (1.9, -0.5)
      (2.9, -0.5)
      (4.1, 0)
      (5, -0.5)
      (6, -0.5)
      (7, -0.25)
      (8, -0.5)
      (9, -0.5)
      (10, -0.375)
    };

  \addplot[green, thick, mark=*, mark size=1.3pt, mark options={fill=green}]
    coordinates {
      (0.3, 0)
      (1, -0.5)
      (2.1, -0.5)
      (3.1, -0.5)
      (4, -1)
      (5, -1)
      (6, -1)
      (7, -1.25)
      (8, -1.25)
      (9, -1.25)
      (10, -1.375)
    };

  \end{axis}
\end{tikzpicture}
\caption{For $n=4$, a surplus and a deficit of $11/8$ are reached up to $t=10$.}
\end{subfigure}

\caption{Illustration of adversarially constructed partial instances to induce surpluses and deficits close to $(n-1)/2$, for $n\in \{3,4\}$. At each step $t\geq 1$, an $(i,j)$-splitter is applied for parties $i,j$ whose surpluses then split by one unit at the next step, while keeping the average constant. When multiple parties have the same surplus at some step, they are depicted with a small horizontal offset for ease of understanding.}
\label{fig:splitterpath}
\end{figure}

\paragraph{Relation to standard apportionment methods.}
When the same vote vector is repeated in every time step, an online apportionment method naturally corresponds to a house-monotone (offline) method, and vice versa.
Interestingly, the greedy apportionment method introduced in this work does not coincide with any of the standard offline rules.
Although it might appear similar to the Hamilton method, as both allocate seats sequentially to the parties with the largest deficit, this equivalence does not hold.
For instance, for three parties with vote vector~$v^t=(2/3, 29/120, 11/120)$, our greedy method yields the allocation~$(4,2,1)$ after seven steps, whereas the Hamilton method with seven seats outputs~$(5,2,0)$.
This discrepancy is closely related to the Alabama paradox, as the Hamilton method may withdraw a seat from a party when the house size increases, while an online method cannot revoke past allocations.

The Balinski-Young quota method, in contrast, satisfies both house-monotonicity and quota compliance and could therefore be adapted to operate in an online fashion while respecting global quota at every step.
This is particularly noteworthy since our greedy method, although optimal in the worst case, does not exhibit this property in this special setting.
To see this, consider three large parties and two small ones, with repeated vote vector~$v^t=(49/150,49/150,49/150,1/100,1/100)$.
After~$41$ steps, the cumulative allocation is~$(13,13,13,1,1)$.
At time~$t=42$, the next seat is assigned to one of the large parties,
as their deficits exceed those of the small ones. 
However, at~$t=43$ the other two large parties both have remainders~$43\cdot49/150-13=157/150>1$, forcing both to receive an additional seat to maintain global quota.

\section{Randomized Methods and Ex-ante Proportionality}\label{sec:online}
We now shift our focus to randomized online apportionment methods.
We introduce the formal notion of such a method in what follows.
A \emph{randomized online apportionment method} selects, given an~$n$-dimensional instance~$(v^t)_{t\in \NN}$, a random subset of parties~$X^t\subset [n]$  for every~$t\in \NN$, as follows:
\begin{enumerate}[label=\normalfont(\roman*)]
    \item For~$t=0$, we define~$V^0_i\coloneqq 0$ and~$A^0_i\coloneqq a^0_i\coloneqq0$ for every~$i\in [n]$.
    \item For~$t\in \NN$, we define the cumulative vote vector as~$V^{t-1}_i \coloneqq \sum_{k=0}^{t-1} v^k_i$ for every~$i\in [n]$, where~$v^0\coloneqq 0$, and the cumulative allocation vector as~$A^{t-1}_i\coloneqq \sum_{k=0}^{t-1}a^k_i$.
    The set of \textit{feasible histories}~$\Theta^{t-1}$ is given by all histories~$(V^{t-1},A^{t-1})$ that occur with strictly positive probability under the method up to step~$t-1$.
    There is a function
    \begin{align*}
    y^t\colon \Theta^{t-1}\times [0,1)^n \times \binom{[n]}{H^t} \to [0,1],
\end{align*}
such that for each feasible history~$(V^{t-1},A^{t-1})\in \Theta^{t-1}$, ~$y^t(V^{t-1},A^{t-1},v^t,\cdot)$ is a probability distribution over subsets of~$[n]$ of size~$H^t$ and~$y^t(V^{t-1},A^{t-1},v^t,S)=0$ for every~$S$ such that there is a party~$i\in S$ with~$v_i^t=0$.
The outcome in step~$t$ is then given by a random subset~$X^t(V^{t-1},A^{t-1},v^t)$ distributed according to~$y^t(V^{t-1},A^{t-1},v^t,\cdot)$, and the allocation~$a^t=a^t(V^{t-1},A^{t-1},v^t) \in \{0,1\}^n$ is derived from this set:~$a^t_i=1$ if and only if~$i\in X^t(V^{t-1},A^{t-1},v^t)$.
\end{enumerate}

If the method~$(y^t)_{t\in\NN}$ is only defined up to a finite time~$T\in \NN$, we refer to it as a \emph{partial online apportionment method}.
Given a method~$(y^t)_{t\in\NN}$ and an instance~$(v^t)_{t\in\NN}$, let~$\calA^t\coloneqq \{A^t: (V^t,A^t)\in \Theta^t\}$ be the set of cumulative allocation vectors that occur with strictly positive probability after step~$t$.
Moreover, let~$\pi(y^t,V^t,A^t)$ denote the probability of reaching allocation~$A^t$ at step~$t$ under method~$(y^t)_{t\in\NN}$.
We initialize with~$\calA^0\coloneqq\{0\}$,~$V^0\coloneqq0$, and~$\pi(y^{0},V^{0},A^{0})\coloneqq1$.
Now, ex-ante proportionality translates to the randomized setting as follows.
    An online apportionment method~$(y^t)_{t\in\NN}$ is \emph{ex-ante proportional} if for every step~$t\in\NN$ and party~$i\in[n]$, the expected number of seats assigned to~$i$ is equal to its share, i.e.,
    \begin{equation}
    \sum_{A^{t-1}\in\calA^{t-1}} \pi(y^{t-1},V^{t-1},A^{t-1}) \cdot a^{t}_i(V^{t-1},A^{t-1},v^t) =v^{t}_i.\label{eq:EAPdef}
    \end{equation}
We call a partial online apportionment method~$(y^t)_{t\in[T]}$ ex-ante proportional if it satisfies Equation~\eqref{eq:EAPdef} for every step~$t\in[T]$.

\Cref{cor:global-quota} states that an online apportionment method can satisfy global quota for all~$n$-dimensional instances~$(v^t)_{t\in\NN}$ if and only if~$n\leq 3$.
The main result of this section establishes that, for $n\leq 3$, we can randomize over methods satisfying global quota while additionally fulfilling ex-ante proportionality, thus giving ex-ante and ex-post proportionality guarantees.

\begin{theorem}\label{thm:exante}
    There exists an online apportionment method satisfying global quota and ex-ante proportionality for all~$n$-dimensional instances if and only if~$n\leq 3$.
\end{theorem}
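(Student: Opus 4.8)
The forward direction is immediate: by \Cref{cor:global-quota}, for $n\ge 4$ no deterministic online apportionment method satisfies global quota, and the same holds for randomized ones. Indeed, the adversarial instance behind \Cref{cor:global-quota} is built by iterating the splitters of \Cref{lem:splitters}, whose effect is to make the relevant surplus difference equal to $1$ regardless of which party receives the seat; hence running the construction of \Cref{lem:prop-lb} adaptively against the realized surpluses (with $P=[n]$ and $\varepsilon=1/4$) drives some party's deviation to at least $(n-1)/2-1/4>1$ on every realization, which global quota forbids. So the whole content lies in the backward direction, and for $n\le 3$ the plan is to build a randomized method by a recursive, step-by-step flow construction.

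\paragraph{Reduction to a one-step flow.} I would induct on the time step $t$, maintaining the invariant that the partial method $(y^k)_{k\in[t]}$ satisfies global quota and ex-ante proportionality; in particular, every reachable $A^t\in\calA^t$ obeys $A^t_i\in\{\lfloor V^t_i\rfloor,\lceil V^t_i\rceil\}$ and party $i$ is allocated $V^t_i$ seats in expectation. Since every past step distributed exactly $H^k$ seats and $\sum_i V^t_i=\sum_{k\le t}H^k$, the \emph{number} of parties rounded up in any reachable history is pinned to the fixed integer $r^t\coloneqq\sum_i\bigl(V^t_i-\lfloor V^t_i\rfloor\bigr)$; only their identities vary. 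Extending the method on a new vote vector $v^t$ then amounts to choosing, for each $A^{t-1}\in\calA^{t-1}$, a distribution over subsets $X^t\subseteq\{i:v^t_i>0\}$ of size $H^t$ with $A^{t-1}_i+a^t_i\in\{\lfloor V^t_i\rfloor,\lceil V^t_i\rceil\}$, whose $\pi$-weighted average of the seat indicators $a^t_i$ equals $v^t_i$ for every $i$. This is exactly a feasible flow on a network that refines the offline construction of \Cref{fig:offlineflownetwork}: it keeps a separate branch per reachable history $A^{t-1}$ weighted by $\pi(y^{t-1},V^{t-1},A^{t-1})$, its step-$t$ arcs enforce the $0/1$ seat bound and the global-quota window, and the ex-ante requirements are equalities at the party sinks. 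The promised recursive flow construction is precisely this step-by-step assembly, and carrying it out also yields the bijection with all such methods claimed in \Cref{prop:methods-one-to-one}.

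\paragraph{Feasibility for $n\le 3$.} The crux is showing this flow is always feasible when $n\le 3$. Here $H^t=\sum_i v^t_i<n\le 3$, so $H^t\in\{0,1,2\}$. It suffices to produce, for each history, seat \emph{marginals} $(p_i)_i$ in the per-party box imposed by global quota and summing to $H^t$, since any such vector is a convex combination of valid $H^t$-subsets. Conditional on a history, write $g_i\coloneqq\lceil V^{t-1}_i\rceil-V^{t-1}_i$; party $i$ is \emph{forced} to be chosen iff it is not rounded up in $A^{t-1}$ and $v^t_i>g_i$, and \emph{forbidden} iff it is rounded up and $v^t_i\le g_i$. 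A short pigeonhole argument on the fractional parts $V^{t-1}_i-\lfloor V^{t-1}_i\rfloor$ (whose sum over $i$ is the fixed integer $r^{t-1}$) shows that no reachable history has two forced parties when $H^t=1$ nor two forbidden parties when $H^t=2$ --- a second such party would push a third party outside its global-quota window --- so every history admits valid conditional marginals. It then remains to match $\pi$-weighted averages to $v^t$: by the inductive hypothesis party $i$ is rounded up with probability $V^{t-1}_i-\lfloor V^{t-1}_i\rfloor$, so its $\pi$-weighted feasible seat-probability range is $[0,g_i]$ when $v^t_i\le g_i$ and $[g_i,1]$ otherwise, which in either case contains $v^t_i$; the remaining Gale--Hoffman conditions coupling histories with parties involve at most three parties and one total-$H^t$ constraint per history, and are handled by a short case check on $H^t$ and the cut size. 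With feasibility established, $y^t$ is recovered by decomposing a feasible flow into deterministic selections (a finite convex combination, by integrality of the flow polytope and trivially for at most three parties), each respecting the global-quota arcs; the fractional flow's marginals yield ex-ante proportionality at step $t$, closing the induction.

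\paragraph{Main obstacle.} I expect the feasibility argument --- specifically the Gale--Hoffman verification coupling histories with parties --- to be the delicate step, and it is exactly here that $n\le 3$ is essential: with at most three parties and at most two seats per step there is enough slack to route $v^t$ through the constrained histories, whereas for $n\ge 4$ this breaks down, as already witnessed by the four-party instance of \citet{golz2025apportionment} recalled in the introduction, where a reachable state admits no valid continuation --- matching the forward direction.
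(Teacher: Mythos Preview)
Your proposal is correct and follows essentially the same route as the paper: a recursive one-step flow construction whose feasibility for $n\le 3$ is the crux, verified by a case analysis on $H^t$ and the configuration of parties at their upper quota. The paper aggregates histories by the upper-quota set $u(A^{t-1})$ (so the left side of the network has at most three vertices), and your anticipated ``short case check on $H^t$ and the cut size'' is exactly the paper's Lemma~\ref{lem:flowpolytope}, carried out as an explicit eight-case verification---not quite short, but routine once the structure you describe is in place.
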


We remark that the impossibility for $n\ge 4$ follows directly from \Cref{cor:global-quota}.
On the other hand, when $n=2$, a simple construction suffices to prove the positive direction.
Indeed, we can apply the systematic sampling procedure (a.k.a., {\it Grimmett's apportionment method}~\citeyearpar{grimmett2004stochastic}) to party~$1$, defined as follows:
Sample a value~$\lambda\in[0,1)$ uniformly at random and assign a seat to party~$1$ in step~$t$ if and only if~$\lfloor V_1^{t-1} + \lambda \rfloor<\lfloor V_1^{t} + \lambda \rfloor$.
This method satisfies global quota and ex-ante proportionality.\footnote{We remark that this observation was also made by \cite{naor2025online} in the context of online dependent rounding; in Section \ref{sec:connections}, we further develop the consequences of our results for dependent rounding algorithms in online settings.}
To obtain an online apportionment method for two parties, we can simply assign the seat in step~$t$ to party~$2$ whenever party~$1$ does not receive it.
Since~$v_2^t=1-v_1^t$ in every step, ex-ante proportionality for party~$1$ directly implies ex-ante proportionality for party~$2$.
Moreover, global quota is also satisfied: Whenever one party must be allocated a seat by the global quota constraint, the other party must not be allocated a seat due to the complementary constraint.

For three parties, however, simple constructions no longer suffice due to weaker dependencies between the global quota constraints of different parties.
Consider, for example, an instance~$(v^t)_{t\in\NN}$ where the first vote vector is~$v^1=(0.6,0.3,0.1)$ and suppose that in a particular realization, the seat in step~$1$ was allocated to party~$1$.
Now let~$v^2=(0,0.8,0.2)$.
Since party~$3$ did not receive a seat in step~$1$ and~$v^2_3>0$, global quota implies no restriction on whether it may or may not receive a seat in step~$2$.
At the same time, ex-ante proportionality requires that party~$3$ is assigned a seat with probability~$0.2$.
However, allocating a seat to party~$3$ under the current allocation would violate global quota for party~$2$, which also has not received a seat in step~$1$ but is surpassing its previous upper share in step~$2$.
This illustrates that, for~$n=3$, ensuring both global quota and ex-ante proportionality requires tracking and conditioning on the full history of seat allocations.
Consequently, designing valid online apportionment methods becomes significantly more intricate than in the two-party case.

We overcome this challenge and prove \Cref{thm:exante} by introducing a family of randomized online apportionment methods that we refer to as \emph{network flow methods}.
These methods are defined recursively and rely on constructing a suitable flow network in each step.
Consider an instance~$(v^k)_{k\in \NN}$ and a partial online allocation method~$(y^k)_{k\in [t]}$ that satisfies global quota and ex-ante proportionality.
For any cumulative allocation~$A^t\in\calA^t$, define the set of parties that have received their upper share of seats after step~$t$ as
\[
u(A^t)\coloneqq \Big\{ i\in[n] : A^t_i=\Big\lceil \textstyle\sum_{k=1}^t v_i^k\Big\rceil  \Big\}.
\]
Let~$U^t\coloneqq \{u(A^t) : (V^t,A^t)\in\Theta^t\}$ be the set of all such sets observed with strictly positive probability up to step~$t$.
With a slight overload of notation, we define
\[
\pi(u)\coloneqq\sum_{A^t\in \calA^t:u(A^t)=u}\pi(y^t,V^t,A^t)
\]
as the probability that the upper-quota set~$u\in U^t$ occurs after step~$t$.
We now define the flow network~$\text{FN}(V,A,v,\pi)$, which constitutes the basis of our recursive construction.

\paragraph{Flow network construction.}
Let~$(v^k)_{k\in \NN}$ be an~$n$-dimensional instance and~$(y^k)_{k\in [t]}$ a partial online apportionment method that satisfies global quota and ex-ante proportionality up to step~$t\in\NN$.
    Define~$U\coloneqq \{u(A): (V,A)\in\Theta^t\}$.
    The corresponding flow network~$\normalfont\text{FN}(V,A,v,\pi)$ for step~$t+1$ is given by
    \begin{enumerate}[label=\normalfont(\alph*)]
        \item vertex set~$P\coloneqq \{o,d\}\cup U \cup [n]$,
        \item edge set~$E\coloneqq \{(o,u): u\in U\} \cup \{(u,i): u\in U, i\in [n]\} \cup \{(i,d): i\in [n]\}$, and
        \item upper and lower capacities $c,\ell\colon E\to \RR_+$ given by
        \begin{align*}
            c((o,u))&=\pi(u),\\
            c((u,i))&=
            \begin{cases}
                0 &\text{ if } i\in u \text{ and } \lceil V_i \rceil =\lceil V_i+v_i\rceil
                \text{ or }v_i=0\\
                {\pi(u)}/{\sum_{i=1}^n v_i} &\text{ else, }\\
            \end{cases}\\
            c((i,d))&={v_i}/{\textstyle \sum_{i=1}^n v_i},\\
            \ell((o,u))&=0,\\
            \ell((u,i))&=
            \begin{cases}
                {\pi(u)}/{\sum_{i=1}^n v_i} &\text{ if } i\notin u \text{ and } \lfloor V_i\rfloor +1=\lfloor V_i+v_i\rfloor,\\
                0 &\text{ else, }\\
            \end{cases}\\
            \ell((i,d))&=0.
        \end{align*}
    \end{enumerate}
We illustrate the construction of a network flow method by the following example.
Consider an instance with three parties, and suppose that in the first election the votes are given by~$v^1=(0.6,0.3,0.1)$, which yields~$H^1=1$.
In this case, the unique partial network flow method at~$t=1$ is defined by
\begin{align*}
    y^1(V^0,A^0,v^1,\{i\})=v_i^1 \quad \text{ for all } i\in [3],
\end{align*}
which implies the edge capacities~$c(o,\{i\})=\pi(\{i\})=v^1_i$ in the flow network~$\normalfont\text{FN}(V^1,A^1,v^2,\pi)$.

Now consider the second vote vector~$v^2=(0.5,0.2,0.3)$ with~$H^2=1$.
The global quota property enforces that party~$1$ must receive a seat in step~$2$ if it was not allocated one in round~$1$, as~$\lfloor V_1^1 \rfloor +1=\lfloor V_1^1 +v_1^2 \rfloor$.
Conversely, for each party~$i\neq 1$, the condition~$\lceil V_i^1 \rceil =\lceil V_i^1+v_i^2\rceil$ implies that they must not be allocated a seat in round~$2$ if they were already selected in round~$1$.
These constraints are encoded in the capacities of the edges~$e=(u,i)$ in the flow network.
The current vote vector~$v^2$ also determines the capacities~$c((i,d))=v_i^2$ for every~$i\in[3]$.

The complete flow network is depicted in \Cref{fig:flownetwork}.
Edges with zero upper capacity are omitted, and edges with equal lower and upper capacities are highlighted in red.
The network admits a unique feasible~$(o,d)$-flow~$f$ of value~$1$, given by
\begin{align*}
    f(e)=
    \begin{cases}
        0.1 & \text{ if } e=(\{1\},1),\\
        0.2 & \text{ if } e=(\{1\},2),\\
        0.3 & \text{ if } e=(\{1\},3),\\
        0 & \text{ if } e=(\{2\},3) \text{ or } e=(\{3\},2),\text{ and } \\
        c(e) & \text{ otherwise.}
    \end{cases}
\end{align*}
For a detailed analysis of feasibility, we refer to Case~1.1.2 in the proof of \Cref{lem:flowpolytope} in \Cref{app:lem:flowpolytope}.
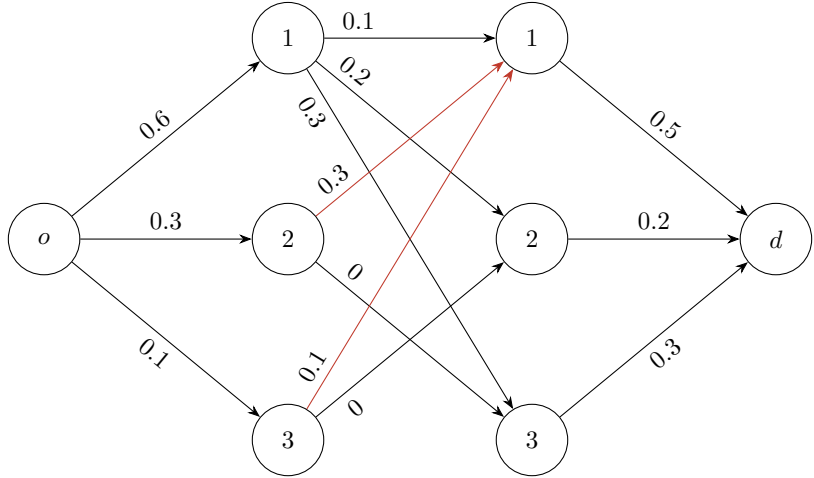
\begin{figure}[t]
    \centering
    \begin{tikzpicture}[
      node distance=1.8cm and 2.4cm,
      vertex/.style={draw, circle, minimum size=1cm, font=\small}, % custom style for nodes
      ->, >=Stealth,
      scale=0.95, transform shape
    ]
    
        % Nodes
        \node[vertex] (o) at (0,0) {$o$};
        \node[vertex] (a2) [right=of o] {$2$};
        \node[vertex] (a1) [above=of a2] {$1$};
        \node[vertex] (a3) [below=of a2] {$3$};
        \node[vertex] (b1) [right=of a1] {$1$};
        \node[vertex] (b2) [right=of a2] {$2$};
        \node[vertex] (b3) [right=of a3] {$3$};
        \node[vertex] (d) [right=of b2] {$d$};
        
        % Edges with plain labels
        \draw (o) -- node[above, sloped, font=\small] {$0.6$} (a1);
        \draw (o) -- node[above, sloped, font=\small] {$0.3$} (a2);
        \draw (o) -- node[below, sloped, font=\small] {$0.1$} (a3);
        
        \draw (b1) -- node[above, sloped, font=\small] {$0.5$} (d);
        \draw (b2) -- node[above, sloped, font=\small] {$0.2$} (d);
        \draw (b3) -- node[below, sloped, font=\small] {$0.3$} (d);
        
        % \draw (o) -- node[above, sloped, font=\small] {$\pi(\{1\})=0.6$} (a1);
        % \draw (o) -- node[above, sloped, font=\small] {$\pi(\{2\})=0.3$} (a2);
        % \draw (o) -- node[below, sloped, font=\small] {$\pi(\{3\})=0.1$} (a3);
        
        % \draw (b1) -- node[above, sloped, font=\small] {$v^{t+1}_1=0.5$} (d);
        % \draw (b2) -- node[above, sloped, font=\small] {$v^{t+1}_2=0.2$} (d);
        % \draw (b3) -- node[below, sloped, font=\small] {$v^{t+1}_3=0.3$} (d);
        
        % Feasible edges
        \draw (a1) -- node[pos=0.2, above, sloped, font=\small] {$0.1$} (b1);
        \draw (a1) -- node[pos=0.15, above, sloped, font=\small] {$0.2$} (b2);
        \draw (a1) -- node[pos=0.1, below, sloped, font=\small] {$0.3$} (b3);
        \draw[red] (a2) -- node[pos=0.15, above, sloped, font=\small] {\textcolor{black}{$0.3$}} (b1);
        \draw (a2) -- node[pos=0.15, above, sloped, font=\small] {$0$} (b3);
        \draw[red] (a3) -- node[pos=0.1, above, sloped, font=\small] {\textcolor{black}{$0.1$}} (b1);
        \draw (a3) -- node[pos=0.15, below, sloped, font=\small] {$0$} (b2);

    \end{tikzpicture}
    \caption{
    The flow network~$\normalfont\text{FN}(V^1,A^1,v^2,\pi)$ for~$v^1=(0.6,0.3,0.1)$ and~$v^2=(0.5,0.2,0.3)$.
    Edges with upper capacity zero are omitted. 
    The upper capacities of the remaining edges are~$c((o,\{i\}))=\pi(\{i\})=v_i^1$,~$c((\{i\},j))=\pi(\{i\})/H^1=v_i^1$, and~$c((i,d))=v_i^2$.
    An edge is highlighted in red if the lower capacity equals the upper capacity.
    The lower capacities of the remaining edges are zero.
    The labels indicate the unique flow of value~$1$.
    }
    \label{fig:flownetwork}
\end{figure}

We now describe how to construct a \emph{partial network flow method}~$(y^k)_{k\in [t]}$ recursively for any number of parties~$n$.

\paragraph{Step~$t=1$.}
We define~$y^1$ as a probability distribution over subsets of~$[n]$ of size~$H^1$.
Since vector~$v^1$ lies in the hypersimplex~$\Delta(n,H^1)$, there exists a convex decomposition~$v^1=\sum_{j=1}^m \lambda_j\cdot z_j$, where each~$z_j\in \{0,1\}^n$ has exactly~$H^1$ ones and~$\lambda_1,...,\lambda_m\in [0,1]$ with~$\sum_{j=1}^m\lambda_j=1$.
Let~$Z_j\subseteq [n]$ denote the support of~$z_j$, i.e.,~$i\in Z_j$ if and only if~$z_{j,i}=1$.
We then define 
\begin{align*}
    y^1(V^0,A^0,v^t,Z_j)&=\lambda_j &&\text{ for every }j\in[m] \text{ and }\\
    y^1(V^0,A^0,v^t,S)&=0 && \text{ for all other subsets } S\subseteq [n].
\end{align*}
\paragraph{Step~$t\ge 2$.}
Assume we have already constructed a partial network flow method~$(y^k)_{k\in [t]}$.
Let~$V^t$, and~$A^t$ denote the cumulative vote and allocation vectors, and~$\pi(u)$ the distribution over upper-quota sets~$u\in U^t$.
Construct the flow network~$\text{FN}(V^t,A^t,v^{t+1},\pi)$ for step~$t+1$.
If this network admits a feasible~$(o,d)$-flow~$f$ of value~$1$, we can extend the partial method to step~$t+1$ as follows.

Fix a feasible allocation~$A^t$ that occurs with strictly positive probability.
Let~$u=u(A^t)$ denote the corresponding upper-quota set.
Using flow~$f$, we define the fractional assignment vector~$z(u)\in[0,1]^n$ by
\[
z_i(u)\coloneqq f(u,i)\cdot \frac{H^{t+1}}{\pi(u)}.
\]
By construction,~$z(u)$ lies in the hypersimplex~$\Delta(n,H^{t+1})$.
Again, decompose~$z(u)=\sum_{j=1}^m \lambda_j \cdot z_j$ into a convex combination of~$\{0,1\}$-vectors~$z_j\in \{0,1\}^n$ with exactly~$H^{t+1}$ ones and let~$Z_j$ be their supports.
We define
\begin{align*}
    y^{t+1}(V^t,A^t,v^{t+1},Z_j) &= \lambda_j &&\text{ for every }j\in[m] \text{ and }\\
    y^{t+1}(V^t,A^t,v^{t+1},S)&=0 && \text{ for all other subsets } S\subseteq [n].
\end{align*}
This completes the construction of the partial network flow method.

We now show that a partial network flow method satisfies all three desired properties.

\begin{restatable}{lemma}{lemflownetworkproperties}\label{lem:flownetwork-properties}
    Let~$(y^t)_{t\in[T]}$ be a partial network flow method.
    Then, for every~$t\in [T]$ and every feasible history~$(V^t,A^t)\in\Theta^t$ that occurs with strictly positive probability, we have~$A^t_i\in\{\lfloor V_i^t\rfloor,\lceil V^t_i\rceil\}$
    for every~$i\in[n]$.
    Furthermore,~$(y^t)_{t\in[T]}$ satisfies Equation~\eqref{eq:EAPdef} for every~$t\in [T]$.
\end{restatable}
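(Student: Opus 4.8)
The plan is to prove the two assertions of \Cref{lem:flownetwork-properties}---the \emph{global quota} property and \emph{ex-ante proportionality}---simultaneously by induction on $t\in[T]$, using the recursive structure of the network flow method. The base case $t=1$ is direct: the decomposition $v^1=\sum_j\lambda_j z_j$ into $\{0,1\}$-vectors with exactly $H^1$ ones means that every reachable allocation $A^1$ satisfies $A^1_i=z_{j,i}\in\{0,1\}$, and since $v^1_i\in[0,1)$ we have $\lfloor V^1_i\rfloor=0$ and $\lceil V^1_i\rceil\in\{0,1\}$ (with the ceiling equal to $0$ only when $v^1_i=0$, in which case $z_{j,i}=0$ necessarily); hence $A^1_i\in\{\lfloor V^1_i\rfloor,\lceil V^1_i\rceil\}$. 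Ex-ante proportionality at $t=1$ holds because $\sum_j\lambda_j z_{j,i}=v^1_i$ by construction of the convex decomposition.

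For the inductive step, assume $(y^k)_{k\in[t]}$ satisfies both properties, so that the quantities $\pi(u)$ for $u\in U^t$ are well-defined and the network $\text{FN}(V^t,A^t,v^{t+1},\pi)$ exists; by hypothesis of the lemma this partial method extends to step $t+1$, so a feasible $(o,d)$-flow $f$ of value $1$ exists. First I would verify that for each reachable $A^t$ with upper-quota set $u=u(A^t)$, the vector $z(u)$ with $z_i(u)=f(u,i)\cdot H^{t+1}/\pi(u)$ indeed lies in the hypersimplex $\Delta(n,H^{t+1})$: the coordinate sum equals $H^{t+1}$ because flow conservation at vertex $u$ gives $\sum_i f(u,i)=f(o,u)$, and $f(o,u)=\pi(u)$ since the only way to route the total flow of value $1$ through the $\pi(u)$-summing cut $\{(o,u):u\in U^t\}$ at full value is to saturate every such edge; membership in $[0,1]^n$ follows from the edge capacities $c((u,i))\le \pi(u)/\sum_i v_i^{t+1}$ and $H^{t+1}=\sum_i v_i^{t+1}$. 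Then the convex $\{0,1\}$-decomposition of $z(u)$ guarantees $a^{t+1}_i\in\{0,1\}$ with $a^{t+1}_i=0$ whenever $v_i^{t+1}=0$ (as the corresponding edge has capacity $0$), so the method is well-defined as an online apportionment method.

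\textbf{Global quota at step $t+1$.} For a party $i$ and a reachable allocation $A^t$ with $u=u(A^t)$: the new cumulative allocation is $A^{t+1}_i=A^t_i+a^{t+1}_i$ and I must show $A^{t+1}_i\in\{\lfloor V^{t+1}_i\rfloor,\lceil V^{t+1}_i\rceil\}$. By the inductive hypothesis $A^t_i\in\{\lfloor V^t_i\rfloor,\lceil V^t_i\rceil\}$, and since $v^{t+1}_i\in[0,1)$ the value $V^{t+1}_i=V^t_i+v^{t+1}_i$ has $\lfloor V^{t+1}_i\rfloor\in\{\lfloor V^t_i\rfloor,\lfloor V^t_i\rfloor+1\}$ and similarly for the ceiling. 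The argument splits according to whether $i\in u$ (party is at its upper quota after step $t$) and whether a carry occurs, i.e.\ $\lfloor V^t_i\rfloor+1=\lfloor V^t_i+v^{t+1}_i\rfloor$. The key point is that the capacities of the edge $(u,i)$ are precisely engineered so that: if $i\in u$ and $\lceil V^t_i\rceil=\lceil V^{t+1}_i\rceil$ then $c((u,i))=0$, forcing $a^{t+1}_i=0$, which keeps $A^{t+1}_i=\lceil V^{t+1}_i\rceil$; and if $i\notin u$ with a carry, then $\ell((u,i))=c((u,i))>0$ forces $a^{t+1}_i=1$, which brings $A^{t+1}_i$ up to match the increased floor. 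A short case check in each of the four combinations---using that $A^t_i=\lceil V^t_i\rceil$ exactly when $i\in u$---closes this part.

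\textbf{Ex-ante proportionality at step $t+1$.} I need Equation~\eqref{eq:EAPdef}, namely $\sum_{A^t\in\calA^t}\pi(y^t,V^t,A^t)\cdot a^{t+1}_i(V^t,A^t,v^{t+1})=v^{t+1}_i$. Here $a^{t+1}_i(V^t,A^t,v^{t+1})$ should be read as the \emph{conditional expectation} of the indicator $a^{t+1}_i$ given history $(V^t,A^t)$, which by the decomposition equals $z_i(u(A^t))=f(u(A^t),i)\cdot H^{t+1}/\pi(u(A^t))$. Grouping the sum over $A^t$ by their upper-quota set $u$ and using $\pi(u)=\sum_{A^t:u(A^t)=u}\pi(y^t,V^t,A^t)$, the left-hand side collapses to $\sum_{u\in U^t}\pi(u)\cdot f(u,i)\cdot H^{t+1}/\pi(u)=H^{t+1}\sum_{u\in U^t}f(u,i)$. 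By flow conservation at vertex $i$ this equals $H^{t+1}\cdot f(i,d)$, and since $f$ has value $1$ and must saturate the cut $\{(i,d):i\in[n]\}$ whose capacities $v^{t+1}_i/\sum_k v^{t+1}_k$ sum to $1$, we get $f(i,d)=v^{t+1}_i/H^{t+1}$, yielding exactly $v^{t+1}_i$.

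The main obstacle I anticipate is not the algebra but the \emph{bookkeeping} around conditional probabilities: one must be careful that $\pi(u)>0$ for every $u\in U^t$ (so the division defining $z(u)$ is legitimate), that the event "reach allocation $A^t$" is correctly refined into the event "reach upper-quota set $u$", and that two distinct reachable allocations $A^t\ne \tilde A^t$ with the same $u(A^t)=u(\tilde A^t)$ genuinely receive the same conditional assignment vector $z(u)$---this is what makes the grouping-by-$u$ step valid and is the conceptual heart of why the construction only needs to track upper-quota sets rather than full allocation vectors. I would also note explicitly that the saturation claims (every $(o,u)$ edge and every $(i,d)$ edge carries flow exactly equal to its upper capacity) follow from a standard cut argument: both $\{(o,u):u\in U^t\}$ and $\{(i,d):i\in[n]\}$ are $(o,d)$-cuts of capacity $1$, so a flow of value $1$ must saturate each of them.
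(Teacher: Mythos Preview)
Your proposal is correct and follows essentially the same approach as the paper's proof: induction on $t$, with the base case handled via the hypersimplex decomposition of $v^1$, and the inductive step split into (i) global quota, argued from the capacity constraints on the edges $(u,i)$ (forcing $z_i(u)=0$ when $i\in u$ and the ceiling is unchanged, and $z_i(u)=1$ when $i\notin u$ and the floor carries), and (ii) ex-ante proportionality, argued by grouping histories by their upper-quota set and invoking flow conservation at vertex $i$. If anything, your write-up is slightly more explicit than the paper's in justifying the saturation of the $(o,u)$- and $(i,d)$-cuts and the membership $z(u)\in\Delta(n,H^{t+1})$, points the paper leaves implicit.
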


The proof uses induction to show that a partial network flow method satisfies global quota and ex-ante proportionality by construction.
We verify that only parties with strictly positive votes are assigned seats and that these lie within the share bounds using the structure of the network.
The flow conservation finally yields that the expected number of seats assigned in each step matches the vote vector.
The full proof is deferred to \Cref{app:lem:flownetwork-properties}.

If, for every~$t\in\NN$, the corresponding flow network~$\text{FN}(V^t,A^t,v^{t+1},\pi)$ admits a flow of value~$1$, the recursively defined partial network flow method $(y^t)_{t\in[T]}$ extends to a \emph{network flow method} $(y^t)_{t\in\NN}$.
By \Cref{lem:flownetwork-properties}, such a method satisfies global quota and ex-ante proportionality.
The next lemma guarantees that, for any number of parties~$n\leq 3$ and partial network flow method~$(y^t)_{t\in[T]}$, the corresponding flow network~$\text{FN}(V^T,A^T,v^{T+1},\pi)$ always admits a feasible flow of value~$1$.
Thus, for any~$T\in\NN$, the partial network flow method~$(y^t)_{t\in[T]}$ can be extended to a partial network flow method~$(y^t)_{t\in[T+1]}$.
Consequently, a partial network flow method yields a valid network flow method.
\begin{restatable}{lemma}{lemflowpolytope}\label{lem:flowpolytope}
    Let~$n\leq 3$ and let~$(y^k)_{k\in[t]}$ be a partial network flow method with corresponding flow network~$\normalfont\text{FN}(V^t,A^t,v^{t+1},\pi)$ for step~$t+1$.
    Then this flow network admits a feasible~$(o,d)$-flow of value~$1$, i.e., the partial network flow method can be extended to~$(y^k)_{k\in[t+1]}$.
\end{restatable}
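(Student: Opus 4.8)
The plan is to establish feasibility of the flow network $\text{FN}(V^t,A^t,v^{t+1},\pi)$ via the max-flow min-cut theorem (or equivalently Hoffman's circulation condition for networks with lower bounds): a feasible $(o,d)$-flow of value $1$ exists if and only if for every partition $(S,\bar S)$ with $o\in S$, $d\in\bar S$, the lower-bound demand crossing from $\bar S$ to $S$ does not exceed the upper-capacity supply crossing from $S$ to $\bar S$. Since $\sum_{u\in U}\pi(u)=1$ and $\sum_i v_i^{t+1}=H^{t+1}$, a flow of value $1$ through the normalized network corresponds to routing exactly $\pi(u)$ out of each $u$ and exactly $v_i^{t+1}/H^{t+1}$ into each $i$; the content is that the middle layer of edges $(u,i)$, with their lower bounds forced by global quota, can carry this. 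So after reducing to Hoffman's condition, I would first handle $n\le 2$ trivially and then do a careful case analysis for $n=3$.

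The structural observations I would use to tame the case analysis: (i) a party $i$ appears in the lower-bound set (i.e.\ $\ell((u,i))>0$ for some $u$) exactly when $\lfloor V_i\rfloor+1=\lfloor V_i+v_i^{t+1}\rfloor$, meaning $i$ is \emph{forced} to receive a seat whenever $i\notin u$; (ii) a party $i$ is \emph{blocked} from $u$ (i.e.\ $c((u,i))=0$) exactly when $i\in u$ and $\lceil V_i\rceil=\lceil V_i+v_i^{t+1}\rceil$, or $v_i=0$; (iii) because the previous partial method satisfied global quota, each $A_i^t\in\{\lfloor V_i^t\rfloor,\lceil V_i^t\rceil\}$, so "$i\in u$" means $A_i^t=\lceil V_i^t\rceil$ and "$i\notin u$" means $A_i^t=\lfloor V_i^t\rfloor$. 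Forced-and-blocked cannot both hold for the same $(u,i)$ when $i\notin u$ versus $i\in u$, so for a fixed $u$ the forced parties and blocked parties are disjoint. With $n=3$ and $H^{t+1}\in\{1,2,3\}$ (the $H^{t+1}=3$ case is degenerate since every $v_i>0$ gets a seat, and $H^{t+1}\in\{0\}$ is vacuous), the number of genuinely distinct configurations of (which parties are forced, which parties are blocked from which $u\in U$) is small, and I would argue that in each one the fractional "target" flow — push $v_i^{t+1}/H^{t+1}$ into $i$, splitting the inflow to $i$ across the admissible $u$'s in proportion to $\pi(u)$ subject to the lower bounds — satisfies all capacity and conservation constraints. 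A convenient device is to exhibit the flow directly (as done in the worked example in \Cref{fig:flownetwork}) rather than check all cuts: set $f((o,u))=\pi(u)$ and $f((i,d))=v_i^{t+1}/H^{t+1}$, and solve the transportation subproblem on the bipartite $U\times[n]$ layer with row sums $\pi(u)/H^{t+1}\cdot H^{t+1}$... here one must be careful with the normalization, but the point is it is a transportation problem with prescribed margins and a lower-bound pattern, and feasibility reduces to checking that for every $u$ the sum of lower bounds on its outgoing admissible edges is at most $\pi(u)$ and that the forced demands into each $i$ do not overshoot $v_i^{t+1}/H^{t+1}$.

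The main obstacle I anticipate is precisely the tightest configuration: the one exhibited in the paper's introductory example with $v^1=(0.6,0.3,0.1)$, $v^2=(0.5,0.2,0.3)$, where two parties are simultaneously blocked (from the realized $u$) and a third is forced, leaving essentially no slack — the flow is \emph{unique}. Verifying feasibility there amounts to checking that the forced lower bound $\pi(u)/H^{t+1}$ on the single forced edge does not exceed $c((u',\,i))$ summed over the $u'$ that can absorb the remaining mass, and that the residual margins still balance; this is the case the paper explicitly flags (Case 1.1.2). I would isolate this "doubly-tight" pattern, show that it forces $H^{t+1}=1$ and pins down $u$ (only one party lies in $u$, the other two are forced or zero-vote), and then verify the unique candidate flow is feasible by direct substitution, leaning on $v_1+v_2+v_3=1$ and the global-quota bounds $A_i^t\in\{\lfloor V_i^t\rfloor,\lceil V_i^t\rceil\}$ inherited from the induction hypothesis. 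The reason $n=3$ works and $n\ge4$ fails — to be noted but not proved here — is that with three parties at most two can be simultaneously constrained in a way that "uses up" the available probability mass, while with four parties one can engineer a configuration (the $(0.6,0.3,0.1)$-then-three-way-split instance of \citealp{golz2025apportionment}) in which the lower bounds overdetermine the middle layer and no feasible flow exists.
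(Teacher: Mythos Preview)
Your overall route matches the paper's: reduce to $n=3$, fix the outer layers $f((o,u))=\pi(u)$ and $f((i,d))=v_i^{t+1}/H^{t+1}$, and argue feasibility of the bipartite transportation subproblem on the middle layer $U\times[3]$ by cases. Two concrete gaps, however, would stop the argument from closing.

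First, your stated feasibility criterion is too weak. You write that ``feasibility reduces to checking that for every $u$ the sum of lower bounds on its outgoing admissible edges is at most $\pi(u)$ and that the forced demands into each $i$ do not overshoot $v_i^{t+1}/H^{t+1}$.'' These row/column checks are necessary but not sufficient: the zero-capacity (blocked) cells make this a transportation problem with forbidden cells, whose feasibility is governed by Hall-type conditions over \emph{all} subsets of $U$. In the paper's Case~1.1.1 ($H^{t+1}=1$, each $u$ a singleton, no party's ceiling increases) there are no lower bounds at all --- your two checks are vacuous --- yet every diagonal edge $(\{i\},i)$ has capacity zero, and one must still verify $\pi(\{i\})\le v_j^{t+1}+v_k^{t+1}$ for each $i$. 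Second, and this is what makes that Hall check (and all the others) go through, you never invoke the \emph{ex-ante proportionality} of the partial method $(y^k)_{k\in[t]}$; you only cite global quota. Global quota tells you $A_i^t\in\{\lfloor V_i^t\rfloor,\lceil V_i^t\rceil\}$ but says nothing about $\pi(u)$. The crucial identity $\pi(\{i\})=V_i^t-\lfloor V_i^t\rfloor$ (and its $|u|=2$ analogue) --- the paper's Property~\eqref{eq:P[u]} --- comes from ex-ante proportionality, and without it you cannot relate $\pi(u)$ to the vote residues and none of the case checks can be completed. The paper organizes the analysis by $(H^{t+1},|u|,\text{number of parties whose ceiling increases})$ into eight configurations and handles each either by an explicit flow or a transshipment reduction; your sketch anticipates only one of them (Case~1.1.2) in any detail.
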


The proof of \Cref{lem:flowpolytope} is deferred to \Cref{app:lem:flowpolytope}.
It proceeds by a case distinction over the possible configurations of cumulative allocation vectors and vote increments for the three parties.
Feasibility is shown either by explicitly constructing a valid flow or by reducing the network to a transshipment problem on a bipartite graph.
This ensures that in every case the flow network admits a feasible flow of value~$1$, completing the recursive construction.
\Cref{thm:exante} now follows by combining \Cref{lem:flownetwork-properties,lem:flowpolytope}.

Having shown that network flow methods satisfy all desired properties and are well-defined for up to three parties, we now establish that this family of methods in fact contains all online apportionment methods satisfying all three properties.

\begin{restatable}{proposition}{lemmethodsonetoone}\label{prop:methods-one-to-one}
    Any online apportionment method satisfying global quota and ex-ante proportionality is a network flow method.
\end{restatable}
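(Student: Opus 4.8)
The plan is to show that the recursive flow construction is not just \emph{a} way to build valid methods but the \emph{only} way, by reconstructing the flow from an arbitrary method satisfying both axioms and verifying it is feasible for the network at each step. Fix an online apportionment method $(y^t)_{t\in\NN}$ satisfying global quota and ex-ante proportionality, and fix an instance $(v^t)_{t\in\NN}$. I would argue by induction on $t$ that the restriction $(y^k)_{k\in[t]}$ coincides with a partial network flow method. The base case $t=1$ is immediate: since $y^1$ must be a distribution over $H^1$-subsets of the parties with positive votes whose marginals equal $v^1$ (by ex-ante proportionality), it is exactly a convex decomposition of $v^1\in\Delta(n,H^1)$ into $\{0,1\}$-vectors, which is precisely what the construction at step $t=1$ allows; moreover every such decomposition is permitted, so $y^1$ is \emph{some} valid choice.

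For the inductive step, assume $(y^k)_{k\in[t]}$ is a partial network flow method, so the quantities $V^t$, $A^t$, $U^t$, $\pi(u)$ are all well-defined from the method's own trajectory distribution. The key step is to define, from $y^{t+1}$, a flow $f$ on $\text{FN}(V^t,A^t,v^{t+1},\pi)$ and show it is feasible of value $1$. For each upper-quota set $u\in U^t$ and party $i\in[n]$, set
\[
f(u,i)\;\coloneqq\;\frac{1}{H^{t+1}}\!\!\sum_{\substack{A^t\in\calA^t\\ u(A^t)=u}}\!\!\pi(y^t,V^t,A^t)\cdot \PP\big[i\in X^{t+1}\mid A^t\big],
\]
and set $f(o,u)\coloneqq\sum_{i}f(u,i)=\pi(u)/H^{t+1}$ — wait, here I need to be careful: the network has $c((o,u))=\pi(u)$, so I should instead scale the whole flow so that each $(o,u)$ edge carries flow $\pi(u)\cdot(H^{t+1}/H^{t+1})$; concretely define $f(u,i)$ as the above expression \emph{without} the $1/H^{t+1}$ factor, i.e.\ the joint probability that $u$ occurs and $i$ is selected at $t+1$, so that $\sum_i f(u,i)=H^{t+1}\pi(u)/H^{t+1}=\pi(u)$ after noting $\sum_i\PP[i\in X^{t+1}\mid A^t]=H^{t+1}$. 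Then $f(o,u)=\pi(u)=c((o,u))$, and $f(i,d)\coloneqq\sum_{u}f(u,i)=\PP[i\in X^{t+1}]=v^{t+1}_i$ by ex-ante proportionality, matching $c((i,d))=v^{t+1}_i$ (note the ambient network in the excerpt is normalized by $\sum_i v_i=H^{t+1}$, so one works with the rescaled flow of value $1$; I would just keep this normalization consistent throughout). Flow conservation at the $u$- and $i$-vertices holds by construction. The capacity constraints on the middle edges $(u,i)$ are exactly where global quota enters: if $i\in u$ and $\lceil V^t_i\rceil=\lceil V^t_i+v^{t+1}_i\rceil$, then a party that already hit its upper quota cannot receive another seat without exceeding $\lceil V^{t+1}_i\rceil$, so $\PP[i\in X^{t+1}\mid A^t]=0$ for every $A^t$ with $u(A^t)=u$, giving $f(u,i)=0=c((u,i))$; if $v^{t+1}_i=0$ the feasibility of the method forbids selecting $i$, again $f(u,i)=0$; and if $i\notin u$ with $\lfloor V^t_i\rfloor+1=\lfloor V^t_i+v^{t+1}_i\rfloor$, global quota \emph{forces} $i$ to be selected, so $\PP[i\in X^{t+1}\mid A^t]=1$ and $f(u,i)$ attains the lower capacity $\ell((u,i))$. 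The remaining middle edges have $f(u,i)$ between $0$ and $\pi(u)$ (scaled), within $[0,\pi(u)]$ resp.\ the normalized bound.

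Having produced a feasible flow $f$ of value $1$ realizing the given method's step-$(t+1)$ marginals, it remains to check that $y^{t+1}$ is a legitimate \emph{output} of the construction \emph{for this particular $f$}: for each feasible $A^t$ with $u(A^t)=u$, the conditional distribution $y^{t+1}(V^t,A^t,v^{t+1},\cdot)$ is a distribution over $H^{t+1}$-subsets of parties with positive votes whose marginal vector is $z(u)\coloneqq f(u,\cdot)\cdot H^{t+1}/\pi(u)\in\Delta(n,H^{t+1})$ — this marginal depends on $A^t$ only through $u(A^t)$, which is the one subtle point and follows because the flow aggregates over all $A^t$ with a common $u$; strictly, the construction permits \emph{any} conditional distribution with the prescribed marginal $z(u)$, and in particular the one used by $y^{t+1}$. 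Hence $(y^k)_{k\in[t+1]}$ is a partial network flow method, closing the induction; since this holds for all $t$, $(y^t)_{t\in\NN}$ is a network flow method. The main obstacle I anticipate is the bookkeeping around the $A^t$-versus-$u(A^t)$ dependence: the construction makes the step-$(t+1)$ allocation distribution depend on $A^t$ only via $u(A^t)$, whereas an arbitrary method may a priori condition on the finer information in $A^t$; one must argue either that global quota plus ex-ante proportionality already forces the relevant marginals to depend only on $u(A^t)$ (which they need not — only the \emph{aggregate} flow does), or, more carefully, relax the construction's definition so that the conditional distribution at $A^t$ is allowed to be an arbitrary lift of $z(u(A^t))$, and observe that the excerpt's construction already leaves this choice free (it only fixes the marginal $z(u)$, not how it is decomposed per history). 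I would make this relaxation explicit and then the identification is clean.
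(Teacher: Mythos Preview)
Your approach is the same as the paper's: reconstruct a feasible value-$1$ flow in $\text{FN}(V^t,A^t,v^{t+1},\pi)$ from the given method's step-$(t{+}1)$ conditionals, check the middle-edge capacities via global quota and flow conservation via ex-ante proportionality, then close the induction by reading off the decomposition. That is exactly what the paper does.

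The ``main obstacle'' you flag at the end is a non-issue, and your proposed fix (relaxing the construction) is unnecessary. Under global quota every coordinate satisfies $A^t_i\in\{\lfloor V^t_i\rfloor,\lceil V^t_i\rceil\}$, so the cumulative allocation $A^t$ is \emph{uniquely determined} by $V^t$ together with the set $u(A^t)$ of coordinates sitting at their ceiling. In other words, for each $u\in U^t$ there is exactly one $A^t\in\calA^t$ with $u(A^t)=u$; your ``aggregate over all $A^t$ with a common $u$'' is a sum over a singleton, and the marginal of $y^{t+1}(V^t,A^t,v^{t+1},\cdot)$ is literally $z(u(A^t))$ by construction of $f$. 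So there is no finer $A^t$-dependence to worry about, and the identification is already clean without any relaxation. (Separately, your normalization bookkeeping went astray: your \emph{first} definition $f(u,i)=\tfrac{1}{H^{t+1}}\sum_{A^t:u(A^t)=u}\pi(y^t,V^t,A^t)\,\PP[i\in X^{t+1}\mid A^t]$ was already correct, since $\sum_i\PP[i\in X^{t+1}\mid A^t]=H^{t+1}$ gives $\sum_i f(u,i)=\pi(u)=c((o,u))$ and $\sum_u f(u,i)=v^{t+1}_i/H^{t+1}=c((i,d))$; the ``wait'' and the revised definition both contain arithmetic slips.)
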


In the proof, we construct a network flow for each step that replicates the decisions of a given online apportionment method satisfying global quota and ex-ante proportionality.
We verify that all capacity constraints are met, flow conservation is attained, and that the flow has a value of~$1$, which establishes that any such method can be obtained as a network flow method.
The full proof is deferred to \Cref{app:prop:methods-one-to-one}.

With the preceding proposition, we conclude that every online apportionment method satisfying global quota and ex-ante proportionality is a network flow method and can therefore be obtained via the construction described above.
Indeed, the proof of \Cref{lem:flowpolytope} provides an explicit recursive procedure for building the method as it defines the flow network at each step~$t$ based on the current allocation and voting vectors.

For the case of $n=3$ parties, it is worth mentioning that in all but two cases, the resulting flow of value~$1$ is uniquely determined.
Consequently, the seat assignment may only differ when either
\begin{enumerate}[label=(\alph*)]
    \item one seat is to be allocated and each cumulative allocation~$u\in U^t$ assigns the upper share of seats to a single party, while in the current step, no party receives enough votes to increase their upper share, or
    \item two seats are to be allocated, and each cumulative allocation~$u\in U^t$ assigns the upper share of seats to two parties, while in the current step, all parties receive enough votes to increase their upper share.
\end{enumerate}
In both cases, the flow network admits multiple feasible flows of value~$1$, leading to different valid seat allocations.

We conclude this section by discussing barriers to further positive results in the realm of randomized online apportionment methods.

\paragraph{Network construction and impossibility for $n\ge 4$.} 
The following example illustrates why the flow construction fails when $n\ge 4$, as the corresponding network no longer admits a feasible flow of value~$1$.
Consider an instance with four parties and voting vector~$v^1=(1/2,1/2,1/2,1/2)$.
A partial method assigns~$H^1=2$ seats in step~$1$.
Without any loss, assume that~$y^1(V^0,A^0,v^1,\{1,2\})=p>0$, i.e., parties~$1$ and~$2$ obtain a seat in the same allocation with strictly positive probability.

Now consider the voting vector~$v^2=(1/2,1/2,0,0)$.
In this case, for all~$i\in[n]$, we have~$\lceil V^1\rceil=\lceil V^1 +v^2\rceil$, so global quota implies that no party may receive an additional seat in step~$2$ if parties~$1$ and~$2$ received a seat each in step~$1$.
The flow network captures this by setting the upper capacity of all edges~$(\{1,2\},i)$ to zero.
Consequently, the edge~$(o,\{1,2\})$ must be assigned a flow of value~$0$ even though it has a strictly positive capacity of~$p$.
Since the total capacity of the outgoing edges of~$o$ equals~$1$, no flow of value~$1$ can exist.
This demonstrates that the construction of a partial network flow method fails at step~$t=2$.

\paragraph{Negative correlation.}
\citet{naor2025online} showed that, when $n=1$, a notion of negative correlation over time can be added on top of proportionality notions.
In our notation, negative correlation requires that for every party $i\in [n]$ and every pair of time steps $t,t'\in \NN$, \( \PP[a^t_i=1 \mid a^{t'}_i=1] \leq \PP[a^t_i=1]\).
The result of \citeauthor{naor2025online}\ implies that, when $n=1$, there exists an online apportionment method satisfying global quota, ex-ante proportionality, and negative correlation.\footnote{In fact, \citeauthor{naor2025online}\ used stronger forms of negative correlation, but we here explain why even this weaker notion is not possible with three parties.}
It is easy to see that this can be immediately extended to $n=2$ parties, since one party receives a seat if and only if the other does not.
However, it does not extend to three parties.
To see this, consider the following instance with vote vectors~$v^1=(0.5,0.5,0), v^2=(0,0.6,0.4),v^3=(0,0.2,0.8)$.
The outcome of the unique network flow method yields the following distribution over assignments:
\begin{align*}
    &\mathbb{P}[a_1^1=1,a_2^2=1,a_3^3=1]=0.5, \quad
    &\mathbb{P}[a_2^1=1,a_2^2=1,a_3^3=1]=0.1,\\
    &\mathbb{P}[a_2^1=1,a_3^2=1,a_2^3=1]=0.2, \quad
    &\mathbb{P}[a_2^1=1,a_3^2=1,a_3^3=1]=0.2.
\end{align*}
In particular, we observe that~$\mathbb{P}[a_2^3=1\vert a_2^1=0]=0<0.2=\mathbb{P}[a_2^3=1]$, which shows that negative correlation is violated.

\paragraph{Oblivious adversary.}
The impossibility result for instances with more than three parties relies on an adaptive adversary.
A natural question is how likely a method is to violate the quota axiom when facing an oblivious adversary.
Using the same four-party instance mentioned in the introduction, 
\citet{golz2025apportionment}
showed that any method violates quota with probability at least~$1/12$ since the undesirable allocation from the first two steps arises with this probability.
More generally, repeating this four-party construction~$\lfloor n/4 \rfloor$ times yields a violation probability of at least~$1-\big( \frac{11}{12}\big)^{\lfloor n/4\rfloor }$, which converges to~$1$ as~$n$ increases.

\section{Consequences in Online Dependent Rounding}\label{sec:connections}

Our results in Sections \ref{sec:deterministic} and \ref{sec:online} translate directly to the setting of online dependent rounding for multidimensional instances, i.e., we receive an $n$-dimensional marginal vector at each step instead of a single non-negative real value. 
In particular, the case of $n=1$ party captures the online level-set problem studied very recently by \cite{naor2025online}. 
Our Theorem \ref{thm:exante} shows the existence of an online dependent rounding algorithm with no deviation from the rounded cumulatives (global quota), and meeting the marginals for 3-dimensional instances (ex-ante proportionality).
Furthermore, in \Cref{prop:methods-one-to-one} we show that all such algorithms can be recovered via our network flow construction.
On the other hand, when we relax the condition of meeting the marginals on expectation, our greedy apportionment algorithm from Theorem \ref{thm:app-quota} is an online rounding algorithm for general $n$-dimensional instances that achieves optimal deviations from the rounded cumulatives.

In the rest of this section, and to showcase the applicability of our results in the context of online optimization, we introduce a multidimensional generalization of the multi-stage stochastic covering studied by \cite{naor2025online}, and we describe how to achieve near-feasibility and approximation guarantees as a simple consequence of our rounding results. 

\subsection{Multidimensional Multi-stage Hypergraph Stochastic Covering}
In the multidimensional multi-stage hypergraph stochastic covering problem (MMHSC), we are given a hypergraph $(N,A)$ with vertices $N$ and hyperedges $E$, and the hypergraph is $d$-uniform with $d\ge 2$, i.e., the size of each hyperedge is equal to $d$. 
We have $n$ types of resources, denoted by $[n]$, which are to be allocated on each step (stage) $t\in [T]$; the capacity of a node $u$ during stage $t$ is a non-negative integer number denoted by $C(u,t)$.
On the other hand, each hyperedge $e\in E$ has a demand of $D(i,t)$ for each resource $i$ and each step $t$, and $D(i,\cdot)$ is monotone non-decreasing.
A fractional solution for this problem corresponds to an allocation $y\colon N\times [n] \times [T]\to \RR_+$ satisfying the following:
\begin{align}
\sum_{u\in e}\sum_{\ell=1}^ty(u,i,\ell)&\ge D(i,t) \quad \text{ for each }i\in [n], t\in [T],\text{ and }e\in E,\label{eq:MHSC1}\\
\sum_{i=1}^ny(u,i,t)&\le C(u,t)\quad \text{ for each }u\in N\text{ and }t\in [T].\label{eq:MHSC2}
\end{align}
Constraints \eqref{eq:MHSC1} ensure that the total cumulative allocation for a hyperedge $e$ is at least the demand $D(i,t)$ for each resource $i$ and step $t$, and \eqref{eq:MHSC2} guarantee that on each step the total number of goods allocated to a vertex $u$ does not exceed the capacity $C(u,t)$. 
Then, given a fractional solution $y^{\star}$, at each step $t\in [T]$ the decision-maker receives the values $y^{\star}(u,i,t)$, i.e., the solution restricted to $t$.
The goal of the decision-maker is to round this solution online, i.e., at each step $t$, each value of the solution restricted to $t$ must be rounded to the floor or ceiling, and at the same time provide guarantees on the solution quality, i.e., objective value or feasibility.
In particular, our problem captures the multi-stage
stochastic hypergraph cover problem by \cite{naor2025online} when $n=1$, $D(1,t)=0$ for each $t<T$, $D(1,T)>0$, and $C(u,t)=\infty$ for each $u$ and $t$.
We remark that \cite{byrka2018approximation} study the graph case (i.e., $d=2$) and developed a 2-approximation algorithm, improving on the $2T$-approximation by \cite{swamy2012sampling}. 

In what follows, we show how to use Theorem \ref{thm:app-quota} and Theorem \ref{thm:exante} to construct integer solutions for MMHSC, online, with guarantees in terms of near-feasibility or objective approximation factor.
For the sake of exposition simplicity, suppose that the fractional solution $y^{\star}$ is binding on every constraint \eqref{eq:MHSC2}, i.e., there is no spare capacity over the vertices at every step.

\paragraph{Online near-feasible solutions.} 
We use Theorem \ref{thm:app-quota} to construct an online near-feasible solution for MMHSC.
For each vertex $u$ and every step $t$, consider $v_{i}^t(u)=y^{\star}(u,i,t)-\lfloor y^{\star}(u,i,t)\rfloor$ for each resource $i$.
At every $t$, for each vertex $u$ we use Theorem \ref{thm:app-quota} to get binary values $a^t_1(u),\ldots,a_n^t(u)$ satisfying the following properties: 
\begin{enumerate}[label=(\roman*)]
    \item $\sum_{i=1}^na_i^t(u)=C(u,t)-\sum_{i=1}^n\lfloor y^{\star}(u,i,t)\rfloor$,\label{det:ii}
    \item $|\sum_{\ell=1}^ta_i^{\ell}(u)-\sum_{\ell=1}^tv_i^{\ell}(u)|\le (n-1)/2$.\label{det:iii}
\end{enumerate}
At every $t$, the solution implemented by the decision-maker is equal to $Y(u,i,t)=\lfloor y^{\star}(u,i,t)\rfloor + a_i^t(u)$ for each $u$ and each $i$.
Observe that \ref{det:ii} directly implies that 
$\sum_{i=1}^nY(u,i,t)= C(u,t)$ for each $u\in N,$
i.e., \eqref{eq:MHSC2} is satisfied. 
On the other hand, for each $i\in [n]$ and $e\in E$, 
\begin{align*}
\sum_{u\in e}\sum_{\ell=1}^tY(u,i,t)
&=\sum_{u\in e}\left(\sum_{\ell=1}^t\lfloor y^{\star}(u,i,\ell)\rfloor+\sum_{\ell=1}^ta_i^{\ell}(u)\right)\\
&\ge \sum_{u\in e}\left(\sum_{\ell=1}^t\lfloor y^{\star}(u,i,\ell)\rfloor+\sum_{\ell=1}^tv_i^{\ell}(u)-\frac{n-1}{2}\right)\\
&= \sum_{u\in e}\left(\sum_{\ell=1}^ty^{\star}(u,i,\ell)-\frac{n-1}{2}\right)\ge D(i,t)-\frac{d(n-1)}{2},
\end{align*}
where the first inequality holds from \ref{det:iii} and the last inequality is a consequence of $y^{\star}$ satisfying \eqref{eq:MHSC1} and the hypergraph being $d$-uniform.
We conclude that $Y$ satisfies \eqref{eq:MHSC1} within an additive violation of $d(n-1)/2$.
In particular, when $n=3$, \Cref{cor:global-quota} implies that the violation is at most $d-1$.

\paragraph{Online approximate min-cost solutions.} Suppose now that the solution $y^{\star}$ received by the decision-maker implies a cost $\text{Cost}(y^{\star})$, where Cost is a linear function, and we have $n=3$ resources.
We use Theorem \ref{thm:exante} to construct an online $\alpha$-approximate solution for MHSC under resource augmentation, where $\alpha\coloneqq \max_{j\in \{1,2,3\},\ell\in [T]}(d+D(j,\ell)-1)/D(j,\ell)$.
For each vertex $u$ and every step $t$, consider $v_{i}^t(u)=\alpha y^{\star}(u,i,t)-\lfloor \alpha y^{\star}(u,i,t)\rfloor$ for each resource $i$.
At every $t$ and for each vertex $u$, we use Theorem \ref{thm:app-quota} to get binary values $a^t_1(u),\ldots,a_n^t(u)$ satisfying the following properties: 
\begin{enumerate}[label=(\roman*)]
    \item $\lfloor \alpha C(u,t)\rfloor-\sum_{i=1}^n\lfloor \alpha y^{\star}(u,i,t)\rfloor\le \sum_{i=1}^na_i^t(u)\le \lceil \alpha C(u,t)\rceil-\sum_{i=1}^n\lfloor \alpha y^{\star}(u,i,t)\rfloor$,\label{rand:ii}
    \item $\sum_{\ell=1}^ta_i^{\ell}(u)\in \Big\{\Big\lfloor\sum_{\ell=1}^tv_i^{\ell}(u)\Big\rfloor,\Big\lceil\sum_{\ell=1}^tv_i^{\ell}(u)\Big\rceil\Big\}$,\label{rand:iii}
    \item $\EE[a_i^t(u)]=v_i^t(u)$.\label{rand:ex}
\end{enumerate}
At every step $t$, the solution implemented by the decison-maker is equal to $Y(u,i,t)=\lfloor \alpha y^{\star}(u,i,t)\rfloor + a_i^t(u)$ for each $u$ and each $i$.
Observe that \ref{det:ii} directly implies that 
$\sum_{i=1}^nY(u,i,t)\le \lceil\alpha C(u,t)\rceil$ for each $u\in N,$
i.e., \eqref{eq:MHSC2} is satisfied under a resource augmentation of factor $\alpha$. 
On the other hand, for each resource $i\in [3]$, following the analysis of \cite{naor2025online} we can easily show that 
$\sum_{u\in e}\sum_{\ell=1}^tY(u,i,t)\ge D(i,t)$,
that is, $Y$ satisfies \eqref{eq:MHSC2}.
Finally, thanks to the ex-ante proportionality property \ref{rand:ex}, we conclude that 
$\EE[\text{Cost}(Y)]\le \alpha\cdot \text{Cost}(y^{\star})$.
\section{Concluding Remarks}

In this work, we have introduced an online version of the apportionment problem, where indivisible seats are to be allocated immediately and irrevocably to parties in every time step in proportion to the parties' votes, which are revealed online and can be set adaptively.
We have imposed a notion of local quota compliance upfront, requiring that each party receives, in each step, a number of seats equal to its fraction of votes, rounded up or down.
We have established that a greedy method provides the best-possible deviation, linear in $n$, between the cumulative votes and the cumulative seats that a party has received at any step.
This implies that a global quota axiom, requiring the same principle as local quota but for the cumulative values in each step, can be satisfied if and only if $n\leq 3$.
When this condition holds, we have shown that the proportionality guarantee given by global quota can be enhanced using randomization: For every instance, a lottery over methods satisfying global quota assigns to each party its proportional number of seats in expectation. 
Our proof is based on network flow techniques, providing a full characterization of such methods and allowing for efficient implementation.

Our work has natural connections and implications for related settings in online dependent rounding, as our results in Sections \ref{sec:deterministic} and \ref{sec:online} translate directly to online rounding guarantees for multidimensional instances where we receive an $n$-dimensional marginal vector at each step.
In particular, the case of $n=1$ party captures the online level-set problem studied very recently by \cite{naor2025online}. 

Several directions for future work arise.
Perhaps the more immediate one concerns the validity of \Cref{thm:exante} when global quota is replaced by approximate proportionality for $n\geq 4$. Specifically, one could aim to randomize over deterministic methods that attain the best possible proportionality guarantees from \Cref{thm:app-quota} while fulfilling ex-ante proportionality.

On the other hand, it is natural to ask whether lower expected deviations or stronger axiomatic properties can be achieved in an online model where the adversary is more restricted.
Two interesting possibilities include the study of randomized methods against an oblivious adversary, that cannot set the votes based on previous realizations of the method's randomness, and a fully stochastic model, where votes come from (potentially correlated) known distributions in each step.

\section*{Acknowledgements}

This work was partially supported by a Structural Democracy Fellowship through the Brooks School of Public Policy at Cornell University and by Centro de Modelamiento Matemático (CMM) BASAL fund FB210005 for center of excellence from ANID-Chile.

\newpage
\section*{Appendix}
\appendix

\section{Proofs Deferred from \Cref{sec:prelims}}

\subsection{Proof of \Cref{prop:offline}}\label{app:prop-offline}

\propOffline*

\begin{proof}
We fix $n\in \NN$ and an $n$-dimensional instance $(v^t)_{t\in [T]}$.
We consider a capacitated network~$G\coloneqq (P,E)$ given by
    \begin{enumerate}[label=\normalfont(\alph*)]
        \item vertex set~$P\coloneqq \{o,d\}\cup \{u^t\}_{t\in [T]} \cup \{w^t_i\}_{t\in [T],i\in [n]}$,
        \item edge set
        \begin{align*}
            E \coloneqq{} & \{(o,u^t):t\in [T]\}\cup \{(u^t,w^t_i):t\in [T],i\in [n]\}\\
            & \cup \{(w^t_i,w^{t+1}_i): t\in [T-1],i\in [n]\}\cup \{(w^T_i,d):i\in [n]\},
        \end{align*}
        and
        \item upper and lower capacities $c,\ell\colon E\to \RR_+$ given by
        \begin{align*}
            c((o,u^t)) =H^t,\quad
            & c((u^t,w^t_i)) = 1,\quad
            c((w^t_i,w^{t+1}_i)) = \lceil V^t_i\rceil,\quad
            \ell((w^t_i,w^{t+1}_i)) = \lfloor V^t_i \rfloor,\\
            & c((w^T_i,z)) = \lceil V^T_i\rceil,\quad
            \ell((w^T_i,d)) = \lfloor V^T_i \rfloor.
        \end{align*}
    \end{enumerate}

    For any feasible integral $(o,d)$-flow $f\colon E\to \NN_0$ on this network, we define the apportionment method $(X^t(f))_{t\in \NN}$ by setting $X^t(f)=\{i\in \NN: f((u^t,w^t_i))=1\}$ for every $t\in \NN$.
    \begin{claim}\label{claim:offline-global}
        For any feasible integral $(o,d)$-flow $f\colon E\to \NN_0$ on $G$, the offline apportionment method $(X^t(f))_{t\in [T]}$ satisfies global quota.        
    \end{claim}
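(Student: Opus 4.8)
The claim to prove is \Cref{claim:offline-global}: any feasible integral $(o,d)$-flow $f$ on the network $G$ yields an offline apportionment method $(X^t(f))_{t\in[T]}$ satisfying global quota. The plan is to first observe that $(X^t(f))_{t\in[T]}$ is a legitimate offline method, and then to identify the cumulative allocation $A^t_i$ with the flow on a specific arc of $G$, so that the arc's capacity bounds directly encode the global quota inequality $A^t_i\in\{\lfloor V^t_i\rfloor,\lceil V^t_i\rceil\}$.

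\emph{Step 1: legitimacy of the method.} Fix $t\in[T]$ and $i\in[n]$. Since $f$ is a feasible flow and $c((u^t,w^t_i))=1$ while flows are integral, $f((u^t,w^t_i))\in\{0,1\}$, so $|X^t(f)|\le n$ and each party is selected at most once. Conservation at $u^t$ gives $\sum_{i\in[n]}f((u^t,w^t_i))=f((o,u^t))\le c((o,u^t))=H^t$; to see that it is exactly $H^t$, note that the only source of flow into the $w$-layer at time $t$ is $u^t$, and the total flow routed through the network equals $\sum_t H^t$ only if every $(o,u^t)$ arc is saturated — which is forced because the lower capacities $\ell((w^T_i,d))=\lfloor V^T_i\rfloor$ together with $\sum_i V^T_i = \sum_i\sum_t v^t_i = \sum_t H^t$ and the fact that $\sum_i\lceil V^T_i\rceil\ge\sum_t H^t$... (this is the one spot I would need to argue carefully; see below). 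Granting $f((o,u^t))=H^t$ for all $t$, we get $|X^t(f)|=H^t$. Also, if $v^t_i=0$ then there is no arc $(u^t,w^t_i)$ in $E$, so $i\notin X^t(f)$; hence $X^t(f)\subseteq\{i:v^t_i>0\}$ as required.

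\emph{Step 2: identifying $A^t_i$ with an arc flow.} By flow conservation at vertex $w^t_i$ (whose only incoming arcs are $(w^{t-1}_i,w^t_i)$ for $t\ge 2$, or nothing for $t=1$, plus $(u^t,w^t_i)$, and whose only outgoing arc is $(w^t_i,w^{t+1}_i)$ for $t<T$ or $(w^T_i,d)$ for $t=T$), an easy induction on $t$ shows
\[
 f\bigl((w^t_i,w^{t+1}_i)\bigr) \;=\; \sum_{k=1}^t f\bigl((u^k,w^k_i)\bigr) \;=\; \bigl|\{k\in[t]: i\in X^k(f)\}\bigr| \;=\; A^t_i,
\]
and likewise $f((w^T_i,d))=A^T_i$. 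Now the capacity constraints on exactly these arcs say $\lfloor V^t_i\rfloor=\ell((w^t_i,w^{t+1}_i))\le f((w^t_i,w^{t+1}_i))\le c((w^t_i,w^{t+1}_i))=\lceil V^t_i\rceil$ (and the analogue at $t=T$). Combining, $\lfloor V^t_i\rfloor\le A^t_i\le\lceil V^t_i\rceil$, i.e.\ $A^t_i\in\{\lfloor V^t_i\rfloor,\lceil V^t_i\rceil\}$ since $A^t_i$ is an integer. As $t$ and $i$ were arbitrary, $(X^t(f))_{t\in[T]}$ satisfies global quota.

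\emph{Main obstacle.} The only nontrivial point is confirming that every arc $(o,u^t)$ carries exactly $H^t$ units, equivalently that the flow value is exactly $\sum_{t\in[T]} H^t$; a priori a feasible flow could route less. I would handle this by a counting argument at the sink: the total flow into $d$ equals $\sum_{i\in[n]} f((w^T_i,d))=\sum_{i\in[n]} A^T_i$, which by Step 2's identity equals $\sum_{i}\sum_{t}f((u^t,w^t_i))=\sum_{t} f((o,u^t))$; and the lower capacities at the sink force $\sum_i f((w^T_i,d))\ge\sum_i\lfloor V^T_i\rfloor$, while $\sum_i f((o,u^t))\le\sum_t H^t=\sum_i V^T_i$. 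This pins the value down only if $\sum_i\lfloor V^T_i\rfloor = \sum_i V^T_i$, which generally fails. I expect the clean fix is that the statement of \Cref{prop:offline} really intends \emph{maximum} $(o,d)$-flows, or equivalently that one should only consider flows saturating all $(o,u^t)$ arcs (these exist by part (i) of the construction sketch in the main text: the fractional flow sending $v^t_i$ on each $(u^t,w^t_i)$ saturates them, and the polytope of such value-$\sum_t H^t$ flows has integral extreme points). So in writing the proof I would restrict attention to flows of value $\sum_{t\in[T]}H^t$ — or, following the construction, to integral points of the flow polytope with this prescribed value — and then Steps 1--2 go through verbatim.
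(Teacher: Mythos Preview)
Your Step~2 is exactly the paper's proof: identify $A^t_i$ with the flow on the arc $(w^t_i,w^{t+1}_i)$ (or $(w^T_i,d)$) via flow conservation, and read off $\lfloor V^t_i\rfloor\le A^t_i\le\lceil V^t_i\rceil$ from that arc's lower and upper capacities. This is all the claim requires, and this part is correct.

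Your Step~1 and the ``main obstacle'' are extraneous to the claim as stated. Global quota is defined solely as the condition $A^t_i\in\{\lfloor V^t_i\rfloor,\lceil V^t_i\rceil\}$ for all $t,i$; it says nothing about $|X^t(f)|=H^t$. Crucially, the capacity argument of Step~2 establishes global quota for \emph{any} feasible integral flow, irrespective of its value, because the lower capacities $\ell((w^t_i,w^{t+1}_i))=\lfloor V^t_i\rfloor$ already force $A^t_i\ge\lfloor V^t_i\rfloor$ even for small flows. So there is no gap in the global quota argument, and the paper's proof simply omits your Step~1. Your worry that $(X^t(f))$ may fail to be a \emph{valid} apportionment method (i.e., $|X^t(f)|\neq H^t$) is a separate issue, orthogonal to the claim; the paper handles it implicitly by only ever applying the claim to integral flows obtained as extreme points of the flow polytope at the prescribed value $\sum_t H^t$ (the decomposition of $f^*$), where saturation of $(o,u^t)$ is automatic.
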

    \begin{proof}
    To prove this claim, we simply observe that, for any feasible integral $(o,d)$-flow $f\colon E\to \NN_0$ and every $i\in [n]$, it holds
    \[
        |\{k\in [t]: i\in X^k(f)\} = \sum_{k=1}^t f((u^t,w^t_i)) = f((w^t_i,w^{t+1}_i)) \in \{\lfloor V^t_i\rfloor, \lceil V^t_i\rceil\}
    \]
    for every $t\in [T-1]$, and 
    \[
        |\{k\in [T]: i\in X^k(f)\} = \sum_{k=1}^T f((u^t,w^t_i)) = f((w^T_i,d)) \in \{\lfloor V^T_i\rfloor, \lceil V^T_i\rceil\}.
    \]
    Indeed, in both cases the first equality follows from the definition of $(X^t(f))_{t\in [T]}$, the second one from the flow conservation constraint at vertex $w^t_i$, and the inclusion from the capacity constraint on the edge $(w^t_i,w^{t+1}_i)$ or $(w^T_i,d)$.
    \end{proof}
    
    We now consider the fractional $(o,d)$-flow $f^*\colon E\to \RR_+$ defined as follows:
    \begin{align*}
        f^*((o,u^t))&=H^t \quad \text{for every }t\in [T],\\
        f^*((u^t,w^t_i)) & = v^t_i \quad \text{for every } t\in [T],\ i\in [n],\\
        f^*((w^t_i,w^{t+1}_i)) & = V^t_i \quad \text{for every } t\in [T-1],\ i\in [n],\\
        f^*((w^T_i,d)) & = V^T_i \quad \text{for every } i\in [n].
    \end{align*}
    The flow $f^*$ is easily seen to be a feasible $(o,d)$-flow on $G$.
    Therefore, because of the integrality of the network flow polytope~\citep{ahuja1988network}, it admits a convex decomposition into integer $(o,d)$-flows.
    That is, there exist an integer $q\in \NN$, integer $(o,d)$-flows $\{f_j\}_{j\in [q]}$, where $f_j\colon E\to \NN_0$ for every $j\in [q]$, and coefficients $\{\lambda_j\}_{j\in [q]}$, where $\lambda_j\in (0,1)$ for every $j\in [q]$ and $\sum_{j=1}^q \lambda_j=1$, such that
    \begin{equation}
        \sum_{j= 1}^q\lambda_jf_j(e) = f^*(e) \quad \text{for every } e \in E.\label{eq:convex-combination}
    \end{equation}
    We now consider the randomized offline apportionment method that runs each deterministic method $(X^t(f_j))_{t\in \NN}$ with probability $\lambda_j$, for $j\in [q]$.
    From \Cref{claim:offline-global}, we conclude that this method satisfies global quota.
    From Equation~\eqref{eq:convex-combination}, we conclude that it satisfies ex-ante proportionality.
\end{proof}

\section{Proofs Deferred from \Cref{sec:deterministic}}

\subsection{Proof of \Cref{lem:greedy}}\label{app:lem-greedy}

\lemGreedy*

\begin{proof}
	We begin with the case of general~$n$.
    We let, for the sake of contradiction, $(v^k)_{k\in \NN}$ be an $n$-dimensional instance and $t\in \NN$ be a time step such that $|s^t_i|> (n-1)/2$ for some $i\in [n]$.
    Since the parties are ordered from larger to smaller surplus, this implies either $s^t_1> (n-1)/2$ or $s^t_n < -(n-1)/2$; we will reach a contradiction in either case.

    We first consider the case where $s^t_1> (n-1)/2$, and introduce some additional notation.
    For each $k\in [t]_0$ and $\ell\in [n]$, we let
    $\sigma^k_\ell \coloneqq \frac{1}{\ell}\sum_{i\in [\ell]}s^k_i$ denote the average surplus of the $\ell$ parties with the largest surplus and we define $\gamma^k_\ell \coloneqq \sigma^k_\ell + \ell/2$.
    Observe that, for every $k\in [t]_0$ and $i\in [n-1]$
    \begin{equation}\label{eq:surplus-eqv}
    \begin{aligned}
            \gamma^k_i > \gamma^k_{i+1} & \Longleftrightarrow \frac{1}{i}\sum_{j\in [i]}s^k_j + \frac{i}{2} > \frac{1}{i+1}\sum_{j\in [i+1]}s^k_j + \frac{i+1}{2} \\
            & \Longleftrightarrow \left( 1+\frac{1}{i}\right) \sum_{j\in [i]}s^k_j -\frac{i+1}{2} >  \sum_{j\in [i+1]}s^k_j
            \Longleftrightarrow  s^k_{i+1} < \sigma^k_i -\frac{i+1}{2}.
    \end{aligned}
    \end{equation}
    Note that $s^t_1> (n-1)/2$ implies $\gamma^t_1 > n/2=\gamma^t_n$.
    Conversely, we have $1/2=\gamma^0_1<\gamma^0_2<\cdots <\gamma^0_n = n/2$, because $s^0_i=0$ for every $i\in [n]$. 
    Since $\gamma_n^k=n/2$ for all $k\in[t]_0$, there exists $k\in [t]$ such that $\gamma^k_i > n/2$ and $\gamma^k_i> \gamma^k_{i+1}$ for some $i\in [n-1]$; let $k^*$ be the minimum such $k$.
    Once $k^*$ is fixed, we fix $i^*$ to the minimum $i\in [n-1]$ such that $\gamma^{k^*}_{i^*}> n/2$ and $\gamma^{k^*}_{i^*}> \gamma^{k^*}_{i^*+1}$.
    The following claim will allow us to reach a contradiction.
    \begin{claim}\label{claim:surplus-diff}
        It holds $s^{k^*}_{i^*}\leq s^{k^*}_{i^*+1}+1$.
    \end{claim}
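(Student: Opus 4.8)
The plan is to establish \Cref{claim:surplus-diff} by a careful analysis of what happens in the transition from step $k^*-1$ to step $k^*$, using the minimality of $k^*$ and $i^*$ together with the defining property of the greedy method. First I would record the two structural consequences of how $k^*$ and $i^*$ were chosen. On the one hand, minimality of $k^*$ gives that $\gamma^{k^*-1}_\ell \le \gamma^{k^*-1}_{\ell+1}$ for all $\ell \in [n-1]$ (the chain of $\gamma$-values was still non-decreasing at step $k^*-1$); equivalently, by the equivalence \eqref{eq:surplus-eqv}, $s^{k^*-1}_{\ell+1} \ge \sigma^{k^*-1}_\ell - (\ell+1)/2$ for each such $\ell$. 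On the other hand, minimality of $i^*$ together with $\gamma^{k^*}_{i^*} > \gamma^{k^*}_{i^*+1}$ forces $\gamma^{k^*}_{i^*-1} \le \gamma^{k^*}_{i^*}$ (when $i^*\ge 2$; the case $i^*=1$ is the borderline case flagged in the proof sketch and handled separately for $n=3$), which via \eqref{eq:surplus-eqv} means $s^{k^*}_{i^*}$ is not too small relative to $\sigma^{k^*}_{i^*-1}$.

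The core of the argument is a counting/averaging step on the set $L \coloneqq \{1,\dots,i^*\}$ of the $i^*$ parties with the largest surplus \emph{at step $k^*$}. Since $\gamma^{k^*}_{i^*} > n/2 = \gamma^{k^*}_n \ge \gamma^{k^*-1}$-type bounds coming from minimality of $k^*$, the total surplus $\sum_{j\in L} s^{k^*}_j$ must have strictly increased from step $k^*-1$ to step $k^*$ — intuitively, the average surplus over the top $i^*$ parties went up past the threshold for the first time. Because each party's surplus changes by $a^{k^*}_j - v^{k^*}_j \in (-1, 1]$ in one step, and $\sum_{j} v^{k^*}_j = H^{k^*}$ while at most $H^{k^*}$ parties receive a seat, an increase in the aggregate surplus of $L$ means some party in $L$ received a seat in step $k^*$. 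I would then argue that there is a party outside $L$, say party $p$ with $p > i^*$, that had strictly positive votes in step $k^*$ (otherwise the seats went only to the top $i^*$ parties and one checks the surplus bookkeeping still forces the inequality, or more simply: if every party receiving votes lies in $L$ and all of $H^{k^*}$ seats go to $L$, the claim is immediate since the top $i^*$ surpluses barely changed order). Given such a $p$, the greedy rule applied at step $k^*$ says that the seat handed to the party in $L$ was preferred over giving it to $p$, i.e. $s^{k^*-1}_{i^*} - v^{k^*}_{i^*} \le s^{k^*-1}_p - v^{k^*}_p$ (after suitable identification of which party in $L$ is the relevant one — one wants the $i^*$th ranked). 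Translating this back through the one-step surplus update $s^{k^*}_j = s^{k^*-1}_j + a^{k^*}_j - v^{k^*}_j$ and using $a^{k^*}_{i^*} = 1$, $a^{k^*}_p = 0$, one obtains $s^{k^*}_{i^*} - 1 \le s^{k^*}_p \le s^{k^*}_{i^*+1}$, since $p$ is outside the top $i^*$ and hence has surplus at most the $(i^*+1)$th largest, i.e. $s^{k^*}_p \le s^{k^*}_{i^*+1}$. This gives exactly $s^{k^*}_{i^*} \le s^{k^*}_{i^*+1} + 1$.

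The main obstacle I expect is making the informal claim ``the aggregate surplus of the top-$i^*$ set strictly increased, so some party in it got a seat, and some party outside it still had positive votes'' fully rigorous, because the \emph{identity} of the top-$i^*$ parties can change between steps $k^*-1$ and $k^*$, so one must be careful whether $L$ is defined by the step-$k^*$ ranking or the step-$(k^*-1)$ ranking, and the greedy comparison is naturally phrased in terms of the step-$(k^*-1)$ surpluses-plus-votes. The clean way around this is: let $L$ be the top $i^*$ parties at step $k^*$; note $\sum_{j\in L} s^{k^*}_j = i^* \sigma^{k^*}_{i^*} = i^*(\gamma^{k^*}_{i^*} - i^*/2) > i^*(n/2 - i^*/2)$, while minimality of $k^*$ bounds $\sum_{j\in L} s^{k^*-1}_j \le i^*\sigma^{k^*-1}_{i^*} \le i^*(\gamma^{k^*-1}_n - i^*/2) = i^*(n/2 - i^*/2)$ using that the $\gamma$-chain was monotone at $k^*-1$ and that the sum of the step-$(k^*-1)$ surpluses over \emph{any} $i^*$ parties is at most $i^*\sigma^{k^*-1}_{i^*}$. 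Hence $\sum_{j\in L}(s^{k^*}_j - s^{k^*-1}_j) > 0$, i.e. $\sum_{j\in L}(a^{k^*}_j - v^{k^*}_j) > 0$; since $\sum_{j\in L} a^{k^*}_j \le |L| = i^* $ and the total votes are $H^{k^*}$, and a seat can only be given to a party with positive votes, a short case check yields both that $L$ received at least one seat and that not all positively-voted parties lie in $L$ unless $H^{k^*}$ seats all went to $L$ in which case the surplus ordering argument closes the claim directly. Once this bookkeeping is pinned down, the greedy comparison and the one-step update finish the proof as above.
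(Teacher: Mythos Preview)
Your overall approach coincides with the paper's: show that the top-$i^*$ aggregate surplus strictly increased from step $k^*-1$ to $k^*$ (via minimality of $k^*$), deduce that some party in $L$ received a seat while some party outside $L$ had positive votes but no seat, and then invoke the greedy comparison plus the one-step update to get the inequality. Two points in your write-up need tightening, however.

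First, your claim that ``minimality of $k^*$ gives $\gamma^{k^*-1}_\ell \le \gamma^{k^*-1}_{\ell+1}$ for all $\ell$'' is not what minimality says: at step $k^*-1$ it only rules out the \emph{conjunction} $\gamma^{k^*-1}_\ell > n/2$ and $\gamma^{k^*-1}_\ell > \gamma^{k^*-1}_{\ell+1}$. What you actually need---and what does follow---is $\gamma^{k^*-1}_{i^*}\le n/2$ (equivalently $\gamma^{k^*}_{i^*}>\gamma^{k^*-1}_{i^*}$): if $\gamma^{k^*-1}_{i^*}>n/2$, then since $\gamma^{k^*-1}_n=n/2$ there is a last index $\ell\ge i^*$ with $\gamma^{k^*-1}_\ell>n/2\ge \gamma^{k^*-1}_{\ell+1}$, contradicting minimality. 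The paper argues exactly this way.

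Second, and more importantly, you write ``using $a^{k^*}_{i^*}=1$'' and parenthetically that ``one wants the $i^*$th ranked.'' This is not justified: the party in $L$ that received a seat need not be the $i^*$th-ranked one at step $k^*$. The clean fix (and the paper's route) is to take \emph{any} $i\in L$ with $a^{k^*}_i=1$ and \emph{any} $j\notin L$ with $a^{k^*}_j=0<v^{k^*}_j$ (the latter exists because the complement's aggregate surplus strictly decreased). Greedy gives $\hat s_i - v^{k^*}_i \le \hat s_j - v^{k^*}_j$, hence $s^{k^*}_i \le s^{k^*}_j + 1$; then use $s^{k^*}_{i^*}\le s^{k^*}_i$ (since $i\in L$) and $s^{k^*}_j\le s^{k^*}_{i^*+1}$ (since $j\notin L$) to conclude. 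Your ``otherwise'' case-check about all positively-voted parties lying in $L$ is unnecessary once you observe the complement's surplus decreased.
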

    \begin{proof}
        We begin by noting that $\gamma^{k^*}_{i^*}>\gamma^{k^*-1}_{i^*}$.
        If this was not true, we would have $\gamma^{k^*-1}_{i^*} \geq \gamma^{k^*}_{i^*} > n/2$, and thus $\gamma^{k^*-1}_{i}> \gamma^{k^*-1}_{i+1}$ for some $i\in [n-1]$, contradicting the choice of $k^*$.
        The inequality $\gamma^{k^*}_{i^*}>\gamma^{k^*-1}_{i^*}$ implies $\sigma^{k^*}_{i^*}>\sigma^{k^*-1}_{i^*}$, which in turn implies $\sum_{j=1}^{i^*} s^{k^*}_j > \sum_{j=1}^{i^*} s^{k^*-1}_j$.
        Since 
        \[
            \sum_{j=1}^{n} s^{k^*}_j = \sum_{j=1}^{n} s^{k^*-1}_j = 0,
        \]
        there must exist parties $i\in [i^*]$ and $j\in \{i^*+1,\ldots,n\}$ such that the surplus of the former increases from step~$k^*-1$ to step~$k^*$ and that of the latter decreases.
        Formally, if $\hat{s}_i$ is the surplus of party $i$ in step $k^*-1$  and $\hat{s}_j$ is the surplus of party $j$ in step $k^*-1$ (which are not necessarily equal to $s^{k^*-1}_i$ and $s^{k^*-1}_j$ due to reordering), it holds  $s^{k^*}_i>\hat{s}_i$ and $s^{k^*}_j<\hat{s}_j$.
        Since $\hat{s}_\ell=s^{k^*}_\ell+(a^{k^*}_\ell-v^{k^*}_\ell)$ for each $\ell\in [n]$, these inequalities yield  $a^{k^*}_i>v^{k^*}_i$ and $a^{k^*}_j<v^{k^*}_j$.
        Since $v^{k^*}_\ell\in [0,1)$ and $a^{k^*}_\ell\in \{0,1\}$ for each $\ell\in [n]$, we conclude that $a^{k^*}_i=1$ and $a^{k^*}_j=0<v^{k^*}_j$.
        
        Suppose now that the statement is not true, i.e., $s^{k^*}_{i^*}>s^{k^*}_{i^*+1}+1$.
        Since $i\in [i^*]$, $j \in \{i^*+1,\ldots,n\}$, and the parties are ordered from larger to smaller surplus, this implies 
        \[
        	s^{k^*}_{i}>s^{k^*}_{j}+1 \Longleftrightarrow \hat{s}_{i}+(a^{k^*}_i-v^{k^*}_i) >\hat{s}_{j}+(a^{k^*}_j-v^{k^*}_j)+1 \Longrightarrow \hat{s}_{i}-v^{k^*}_i >\hat{s}_{j}-v^{k^*}_j.
        \]
        But since the greedy method assigns seats, in step $k^*$, to the parties $\ell$ with the smallest value of $\hat{s}_\ell-v^{k^*}_\ell$ among those that receives positive votes, this contradicts the above statement that $a^{k^*}_i=1$ and $a^{k^*}_j=0<v^{k^*}_j$.
    \end{proof}
    We now reach an immediate contradiction when $i^*=1$, since in this case Property~\eqref{eq:surplus-eqv} implies $s^{k^*}_{2} < s^{k^*}_{1}-1$ and \Cref{claim:surplus-diff} states the opposite.
    When $i^*\geq 2$, we obtain
    \[
        s^{k^*}_{i^*}\leq s^{k^*}_{i^*+1}+1 < \sigma^{k^*}_{i^*}-\frac{i^*+1}{2}+1 = \frac{(i^*-1)\sigma^{k^*}_{i^*-1}+s^{k^*}_{i^*}}{i^*} - \frac{i^*-1}{2}.
    \]
    Indeed, the first inequality comes from \Cref{claim:surplus-diff} and the second one from Property~\eqref{eq:surplus-eqv}; the equality follows by splitting the average $\sigma^{k^*}_{i^*}$ of the first $i^*$ terms into the average $\sigma^{k^*}_{i^*-1}$ of the first $i^*-1$ terms and the $i^*$th term $s^{k^*}_{i^*}$.
    Rearranging terms, we obtain $s^{k^*}_{i^*}< \sigma^{k^*}_{i^*-1}-i^*/2$.
    But then, Property~\eqref{eq:surplus-eqv} implies $\gamma^{k^*}_{i^*-1}> \gamma^{k^*}_{i^*}>n/2$,
    contradicting the choice of $i^*$ as the minimum $i\in [n-1]$ such that $\gamma^{k^*}_{i}>n/2$ and $\gamma^{k^*}_{i}> \gamma^{k^*}_{i+1}$.

    We now consider the case where $s^t_n< -(n-1)/2$; the proof is mostly analogous to the previous case.
    We redefine the notation for this case.
    For each $k\in [t]_0$ and $\ell\in [n]$, we let
    $\sigma^k_\ell \coloneqq \frac{1}{\ell}\sum_{i=n+1-\ell}^n s^k_i$ denote the average surplus of the $\ell$ parties with the smallest surplus and we define $\gamma^k_\ell \coloneqq \sigma^k_\ell - \ell/2$.
    Observe that, for every $k\in [t]_0$ and $i\in [n-1]$
    \begin{equation}\label{eq:deficit-eqv}
    \begin{aligned}
            \gamma^k_i < \gamma^k_{i+1} & \Longleftrightarrow \frac{1}{i}\sum_{j=n+1-i}^n s^k_j - \frac{i}{2} < \frac{1}{i+1}\sum_{j=n-i}^n s^k_j - \frac{i+1}{2} \\
            & \Longleftrightarrow \left( 1+\frac{1}{i}\right) \sum_{j=n+1-i}^n s^k_j +\frac{i+1}{2} <  \sum_{j=n-i}^n s^k_j
            \Longleftrightarrow  s^k_{n-i} > \sigma^k_i +\frac{i+1}{2}.
    \end{aligned}
    \end{equation}
    Note that $s^t_n< -(n-1)/2$ implies $\gamma^t_1 < -n/2=\gamma^t_n$.
    Conversely, we have $-1/2=\gamma^0_1>\gamma^0_2>\cdots >\gamma^0_n = -n/2$, because $s^0_i=0$ for every $i\in [n]$.
    Since $\gamma^k_n=-n/2$ for every $k\in [t]_0$, there exists $k\in [t]$ such that $\gamma^k_i < -n/2$ and $\gamma^k_i< \gamma^k_{i+1}$ for some $i\in [n-1]$; let $k^*$ be the minimum such $k$.
    Once $k^*$ is fixed, we fix $i^*$ to the minimum $i\in [n-1]$ such that $\gamma^{k^*}_{i^*}< -n/2$ and $\gamma^{k^*}_{i^*}< \gamma^{k^*}_{i^*+1}$.
    The following claim will allow us to reach a contradiction.
    \begin{claim}\label{claim:deficit-diff}
        It holds $s^{k^*}_{n+1-i^*}\geq s^{k^*}_{n-i^*}-1$.
    \end{claim}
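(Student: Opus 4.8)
The plan is to transcribe the proof of \Cref{claim:surplus-diff} with all inequalities and extremal roles reversed (``smallest'' in place of ``largest'', minus signs flipped), so the argument proceeds in three steps: (i) show the bottom-$i^*$ average strictly decreased from step $k^*-1$ to $k^*$; (ii) extract from flow-of-surplus conservation a party inside the bottom-$i^*$ set that did \emph{not} get a seat yet has positive votes, and a party outside it that \emph{did} get a seat; (iii) assume the claimed gap fails and derive a contradiction with the greedy rule.

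First I would prove $\gamma^{k^*}_{i^*} < \gamma^{k^*-1}_{i^*}$. If instead $\gamma^{k^*-1}_{i^*} \le \gamma^{k^*}_{i^*} < -n/2$, then since $\gamma^{k^*-1}_n = -n/2$ we may take the largest index $i \in \{i^*,\dots,n\}$ with $\gamma^{k^*-1}_i < -n/2$; this forces $i \le n-1$ and $\gamma^{k^*-1}_i < -n/2 \le \gamma^{k^*-1}_{i+1}$, so the pair $(k^*-1,i)$ already witnesses the defining property, contradicting the minimality of $k^*$ (recall $k^*\ge 1$ since the $\gamma^0_\ell$ are strictly decreasing). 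Next, $\gamma^{k^*}_{i^*} < \gamma^{k^*-1}_{i^*}$ gives $\sigma^{k^*}_{i^*} < \sigma^{k^*-1}_{i^*}$. Let $B=\{n+1-i^*,\dots,n\}$ be the set of the $i^*$ parties with the smallest surplus at step $k^*$, and write $\hat s_\ell$ for the surplus at step $k^*-1$ of the party labelled $\ell$ at step $k^*$; the multiset $\{\hat s_\ell\}_{\ell\in[n]}$ equals $\{s^{k^*-1}_\ell\}_{\ell\in[n]}$, so $\sum_{\ell\in B}\hat s_\ell \ge i^*\sigma^{k^*-1}_{i^*}$. Hence $\sum_{\ell\in B} s^{k^*}_\ell = i^*\sigma^{k^*}_{i^*} < i^*\sigma^{k^*-1}_{i^*} \le \sum_{\ell\in B}\hat s_\ell$, so the total surplus of $B$ strictly decreased; since $\sum_{\ell=1}^n s^{k^*}_\ell = \sum_{\ell=1}^n \hat s_\ell = 0$, the total surplus of the complement strictly increased. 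Using $s^{k^*}_\ell = \hat s_\ell + (a^{k^*}_\ell - v^{k^*}_\ell)$ with $a^{k^*}_\ell\in\{0,1\}$ and $v^{k^*}_\ell\in[0,1)$, this yields a party $i\in B$ with $a^{k^*}_i = 0$ and $v^{k^*}_i>0$, and a party $j\notin B$ with $a^{k^*}_j = 1$ (so $v^{k^*}_j>0$, as greedy never assigns a seat to a party with zero votes).

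Finally, suppose toward a contradiction that $s^{k^*}_{n+1-i^*} < s^{k^*}_{n-i^*}-1$. Since $s^{k^*}_{n+1-i^*} = \max_{\ell\in B}s^{k^*}_\ell \ge s^{k^*}_i$ and $s^{k^*}_{n-i^*} = \min_{\ell\notin B}s^{k^*}_\ell \le s^{k^*}_j$, this forces $s^{k^*}_i < s^{k^*}_j - 1$; substituting $s^{k^*}_i = \hat s_i - v^{k^*}_i$ and $s^{k^*}_j = \hat s_j + 1 - v^{k^*}_j$ and simplifying gives $\hat s_i - v^{k^*}_i < \hat s_j - v^{k^*}_j$. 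But in step $k^*$ the greedy method picks the parties minimizing $\hat s_\ell - v^{k^*}_\ell$ among those with positive votes; both $i$ and $j$ have positive votes, $j$ was picked and $i$ was not, contradicting $\hat s_i - v^{k^*}_i < \hat s_j - v^{k^*}_j$. This proves the claim, which then feeds into Property~\eqref{eq:deficit-eqv} exactly as \Cref{claim:surplus-diff} fed into \eqref{eq:surplus-eqv}. The one step I would treat most carefully is the middle one: it must be run with the genuine set $B$ of bottom-$i^*$ parties at step $k^*$ rather than with positional indices, invoking minimality of the bottom-$i^*$ sum over size-$i^*$ subsets at step $k^*-1$ to turn $\sigma^{k^*}_{i^*} < \sigma^{k^*-1}_{i^*}$ into a statement about $B$ alone; everything else is a mechanical sign-flip of the surplus case.
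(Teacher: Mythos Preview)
Your proof is correct and follows essentially the same approach as the paper's, which is a direct sign-flip of the surplus case. Your explicit handling of the relabeling via the set $B$ and the subset-minimality inequality $\sum_{\ell\in B}\hat s_\ell \ge i^*\sigma^{k^*-1}_{i^*}$ is in fact slightly more careful than the paper, which glosses over why the bottom-$i^*$ sum at step $k^*$ being smaller than the bottom-$i^*$ sum at step $k^*-1$ yields a party in $B$ whose surplus strictly decreased.
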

    \begin{proof}
        We begin by noting that $\gamma^{k^*}_{i^*}<\gamma^{k^*-1}_{i^*}$.
        If this was not true, we would have $\gamma^{k^*-1}_{i^*} \leq \gamma^{k^*}_{i^*} < -n/2$, and thus $\gamma^{k^*-1}_{i}< \gamma^{k^*-1}_{i+1}$ for some $i\in [n-1]$, contradicting the choice of $k^*$.
        The inequality $\gamma^{k^*}_{i^*}<\gamma^{k^*-1}_{i^*}$ implies $\sigma^{k^*}_{i^*}<\sigma^{k^*-1}_{i^*}$, which in turn implies $\sum_{j=n+1-i^*}^{n} s^{k^*}_j < \sum_{j=n+1-i^*}^{n} s^{k^*-1}_j$.
        Since 
        \[
            \sum_{j=1}^{n} s^{k^*}_j = \sum_{j=1}^{n} s^{k^*-1}_j = 0,
        \]
        there must exist parties $i\in \{n+1-i^*,\ldots,n\}$ and $j\in [n-i^*]$ such that the surplus of the former decreases from $k^*-1$ to $k$ and that of the latter increases.
        Formally, if $\hat{s}_i$ is the surplus of party $i$ in step $k^*-1$ and $\hat{s}_j$ is the surplus of party $j$ in step $k^*-1$ (which are not necessarily equal to $s^{k^*-1}_i$ and $s^{k^*-1}_j$ due to reordering), it holds $s^{k^*}_i<\hat{s}_i$ and $s^{k^*}_j>\hat{s}_j$.
        Since $\hat{s}_\ell=s^{k^*}_\ell+(a^{k^*}_\ell-v^{k^*}_\ell)$ for each $\ell\in [n]$, these inequalities yield  $a^{k^*}_i<v^{k^*}_i$ and $a^{k^*}_j>v^{k^*}_j$.
        Since $v^{k^*}_\ell\in [0,1)$ and $a^{k^*}_\ell\in \{0,1\}$ for each $\ell\in [n]$, we conclude that $a^{k^*}_i=0<v^{k^*}_i$ and $a^{k^*}_j=1$.
        
        Suppose now that the statement is not true, i.e., $s^{k^*}_{n+1-i^*}<s^{k^*}_{n-i^*}-1$.
        Since $i\in \{n+1-i^*,\ldots,n\}$, $j \in [n-i^*]$, and the parties are ordered from larger to smaller surplus, this implies 
        \[
        	s^{k^*}_{i}<s^{k^*}_{j}-1 \Longleftrightarrow \hat{s}_{i}+(a^{k*}_i-v^{k^*}_i)<\hat{s}_{j}+(a^{k^*}_j-v^{k^*}_j) - 1 \Longrightarrow \hat{s}_{i}-v^{k^*}_i<\hat{s}_{j}-v^{k^*}_j.
        \]
        But since the greedy method assigns seats, in step $k^*$, to the parties~$\ell$ with the smallest value of $\hat{s}_\ell-v^{k^*}_\ell$ among those that receives positive votes, this contradicts the above statement that $a^{k^*}_i=0<v^{k^*}_i$ and $a^{k^*}_j=1$.
    \end{proof}
    
    We now reach an immediate contradiction when $i^*=1$, since in this case Property~\eqref{eq:deficit-eqv} implies $s^{k^*}_{n-1} > s^{k^*}_{n}+1$ and \Cref{claim:deficit-diff} states the opposite.
    When $i^*\geq 2$, we obtain
    \[
        s^{k^*}_{n+1-i^*}\geq s^{k^*}_{n-i^*}-1 > \sigma^{k^*}_{i^*}+\frac{i^*+1}{2}-1 = \frac{(i^*-1)\sigma^{k^*}_{i^*-1}+s^{k^*}_{n+1-i^*}}{i^*} + \frac{i^*-1}{2}.
    \]
    Indeed, the first inequality comes from \Cref{claim:deficit-diff} and the second one from Property~\eqref{eq:deficit-eqv}; the equality follows by splitting the average $\sigma^{k^*}_{i^*}$ of the first $i^*$ terms into the average $\sigma^{k^*}_{i^*-1}$ of the first $i^*-1$ terms and the $i^*$th term $s^{k^*}_{n+1-i^*}$.
    Rearranging terms, we obtain $s^{k^*}_{n+1-i^*}> \sigma^{k^*}_{i^*-1}+i^*/2$.
    But then, Property~\eqref{eq:deficit-eqv} implies $\gamma^{k^*}_{i^*-1}< \gamma^{k^*}_{i^*}<-n/2$,
    contradicting the choice of $i^*$ as the minimum $i\in [n-1]$ such that $\gamma^{k^*}_{i}< -n/2$ and $\gamma^{k^*}_{i}< \gamma^{k^*}_{i+1}$.
    
    We now address strict $1$-proportionality for $n=3$.
    We note that, in general, if we aim to prove strict proportionality in the cases above, we can proceed analogously.
    That is, we suppose towards a contradiction the existence of an $n$-dimensional instance $(v^t)_{t\in \NN}$ and a step $t\in \NN$ such that either $s^t_1\geq (n-1)/2$ or $s^t_n\leq -(n-1)/2$ holds, and define the values $\sigma^k_\ell$ and $\gamma^k_\ell$ as before, and fix $k^*$ and $i^*$ analogously, but now for the corresponding weak inequalities ($\gamma^{k^*}_{i^*}\geq n/2$ and $\gamma^{k^*}_{i^*}\geq \gamma^{k^*}_{i^*+1}$ for the first case; $\gamma^{k^*}_{i^*}\leq -n/2$ and $\gamma^{k^*}_{i^*}\leq \gamma^{k^*}_{i^*+1}$ for the second case).
    Properties~\eqref{eq:surplus-eqv} and~\eqref{eq:deficit-eqv} admit analogous versions with weak inequalities, whereas \Cref{claim:surplus-diff,claim:deficit-diff} remain true and can be proven in a completely analogous way, as they rely on the respective observations $\gamma^{k^*}_{i^*}>\gamma^{k^*-1}_{i^*}$ and $\gamma^{k^*}_{i^*}<\gamma^{k^*-1}_{i^*}$, which remain true.
    Furthermore, the contradictions for the case $i^*\geq 2$ can be reached in the same way as before.
    Indeed, for the case $s^t_1\geq (n-1)/2$ we obtain
    \[
    s^{k^*}_{i^*}\leq s^{k^*}_{i^*+1}+1 \leq \sigma^{k^*}_{i^*}-\frac{i^*+1}{2}+1 = \frac{(i^*-1)\sigma^{k^*}_{i^*-1}+s^{k^*}_{i^*}}{i^*} - \frac{i^*-1}{2},
    \]
    which implies $s^{k^*}_{i^*}\leq \sigma^{k^*}_{i^*-1}-i^*/2$ and thus $\gamma^{k^*}_{i^*-1}\geq \gamma^{k^*}_{i^*}\geq n/2$ due to Property~\eqref{eq:surplus-eqv},
    contradicting the choice of $i^*$.
    Similarly, for the case $s^t_n\leq -(n-1)/2$ we obtain
    \[
    s^{k^*}_{n+1-i^*}\geq s^{k^*}_{n-i^*}-1 \geq \sigma^{k^*}_{i^*}+\frac{i^*+1}{2}-1 = \frac{(i^*-1)\sigma^{k^*}_{i^*-1}+s^{k^*}_{n+1-i^*}}{i^*} + \frac{i^*-1}{2},
    \]
    which implies $s^{k^*}_{n+1-i^*}\geq \sigma^{k^*}_{i^*-1}+i^*/2$ and thus $\gamma^{k^*}_{i^*-1}\leq \gamma^{k^*}_{i^*}\leq -n/2$ due to Property~\eqref{eq:deficit-eqv}, contradicting the choice of $i^*$.
    
    When $i^*=1$, \Cref{claim:surplus-diff,claim:deficit-diff} together with the weak versions of Properties~\eqref{eq:surplus-eqv} and~\eqref{eq:deficit-eqv} do not yield a contradiction anymore.
    However, they now imply that $s^{k^*}_2=s^{k^*}_1-1$ when we assume $s^t_1\geq (n-1)/2$ or that $s^{k^*}_{n-1}=s^{k^*}_n+1$ when we assume $s^t_n\leq -(n-1)/2$.
    We will see that these conditions still yield a contradiction when $n=3$, so we fix this value.
    Note that, when we start from the assumption that $s^t_1\geq 1$ for some $t\in \NN$, $\gamma^{k^*}_1\geq 3/2$ is equivalent $s^{k^*}_1\geq 1$.
    Since $\sum_{i\in [3]}s^{k^*}_i = 0$, $s^{k^*}_2=s^{k^*}_1-1$ and $s^{k^*}_1\geq 1$ together imply $s^{k^*}_3\leq -1$ in this case.
    Similarly, when we start from the assumption that $s^t_3\leq -1$ for some $t\in \NN$, $\gamma^{k^*}_3\leq -3/2$ is equivalent to $s^{k^*}_3\leq -1$.
    Since $\sum_{i\in [3]}s^{k^*}_i = 0$, $s^{k^*}_2=s^{k^*}_3+1$ and $s^{k^*}_3\leq -1$ together imply $s^{k^*}_1\geq 1$ in this case.
    Thus, we conclude in either case that $s^{k^*}_1\geq 1 \Leftrightarrow s^{k^*}_3\leq -1$.
    
    Letting $\hat{s}_1$ denote the surplus of party $1$ in step $k^*-1$ and $\hat{s}_3$ the surplus of party $3$ in step $k^*-1$ (which are not necessarily equal to $s^{k^*-1}_1$ and $s^{k^*-1}_3$ due to reordering), we also know from the definition of $k^*$ that $\hat{s}_1<1$ and $\hat{s}_3>-1$, which will yield a similar contradiction as in the proofs of \Cref{claim:surplus-diff,claim:deficit-diff}.
    Indeed, $s^{k^*}_1>\hat{s}_1$ implies $a^{k^*}_1=1$ and $s^{k^*}_3<\hat{s}_3$ implies $a^{k^*}_3=0<v^{k^*}_3$.
    On the other hand, $s^{k^*}_1\geq 1$ implies $\hat{s}_1-v^{k^*}_1\geq 0$ and $s^{k^*}_3\leq -1$ implies $\hat{s}_3-v^{k^*}_3\leq -1$, so that $\hat{s}_3-v^{k^*}_3<\hat{s}_1-v^{k^*}_1$.
    But since the greedy method assigns seats, in step $k^*$, to the parties $i$ with the smallest value of $\hat{s}_i-v^{k^*}_i$ among those that receive positive votes, this contradicts the above statement that $a^{k^*}_1=1$ and $a^{k^*}_3=0<v^{k^*}_3$.
\end{proof}

\subsection{Proof of \Cref{lem:splitters}}\label{app:lem-splitters}

\lemSplitters*

\begin{proof}
    We let $n,t\in \NN,V^t,(X^k)_{k\in \NN},A^t,i,j$ be as in the statement.
    We define $v^{t+1}\in [0,1)^n$ by
    \[
        v^{t+1}_k = \begin{cases}
            \frac{1+(s^{t}_i-s^{t}_j)}{2} & \text{if } k=i,\\
            \frac{1-(s^{t}_i-s^{t}_j)}{2} & \text{if } k=j,\\
            0 & \text{otherwise,}
        \end{cases} \qquad \text{for all } k\in [n],
    \]
    whose entries sum up to $1$ and is, therefore, a valid vote vector.
    Since $H^{t+1}=\sum_{k\in [n]}v^{t+1}_k=1$ and  $v^{t+1}_k=0$ for every $k\notin \{i,j\}$, we have either $X^{t+1}=\{i\}$ or $X^{t+1}=\{j\}$.
    Since
    \[
        s^t_i-v^{t+1}_i = \frac{s^{t}_i+s^{t}_j-1}{2} = s^t_j-v^{t+1}_j,
    \]
    we have that in either case, some $k\in \{i,j\}$ is such that $s^{t+1}_k=\frac{s^{t}_i+s^{t}_j-1}{2}+1 = \frac{s^{t}_i+s^{t}_j+1}{2}$ and the other party $\ell\in \{i,j\}\setminus \{k\}$ is such that $s^{t+1}_\ell = \frac{s^{t}_i+s^{t}_j-1}{2}$.
    The claim follows.
\end{proof}

\subsection{Proof of \Cref{lem:prop-lb}}\label{app:lem-prop-lb}

\lemPropLB*

\begin{proof}
    We fix $n\in \NN$ and prove the lemma by induction over $n'$, the size of the subset of parties we consider. 
    Throughout this proof, whenever we have an $n'$-dimensional vote vector with entries in $P\subseteq [n]$, namely $v'\in [0,1]^{P}$, we refer to the vector $v\in [0,1]^{n}$ defined as 
    \[
        v_i = \begin{cases}
            v'_i & \text{if } i\in P\\
            0 & \text{otherwise,}
        \end{cases}
    \]
    as the \textit{$n$-dimensional extension} of $v'$. 
    This definition naturally carries over for the allocation vector.
    Note that, from our definition of an online apportionment method, all entries of these vote and allocation vectors that do not correspond to parties in $P$ are always fixed to $0$.
    
    Since we use induction steps from $n'$ to $n'+2$, we establish the validity of the lemma for two base cases, namely $n'=1$ and $n'=2$.
    The case $n'=1$ is trivial, as the conclusion of the lemma only requires $|s^t_i|\geq -\varepsilon$, and we know that $|s^t_i|\geq 0>-\varepsilon$.
    For the case $n'=2$, we fix $r, \varepsilon$, $(v^k)_{k\in[r]}$, $(X^t)_{t\in \NN}$, and $A^r$ as in the statement, we denote the parties by $P=\{p_1,p_2\}$, and we assume without loss of generality that $s^r_{p_1}\geq s^r_{p_2}$.
    If $s^r_1\geq 1/2-\varepsilon$ or $s^r_n\leq -(1/2-\varepsilon)$, we are done.
    Otherwise, it holds that~$s^{r}_1-s^{r}_2<1$, so we can consider a single step given by the $n$-dimensional extension of the $(p_1,p_2)$-splitter $\spl(p_1,p_2)$.
    \Cref{lem:splitters} implies that $s^{r+1}_{p_1}-s^{r+1}_{p_2}=1$, and thus either $s^{r+1}_{p_1}\geq 1/2$ or $s^{r+1}_{p_2} \leq -1/2$ holds.

    If $n\leq 2$, the previous cases suffice to conclude, so we assume the opposite in what follows.
    We assume that the statement of the lemma holds for some fixed $n'\in [n-2]$.
    We will make use of the step whose existence is guaranteed by this inductive hypothesis repeatedly, so we introduce some slightly more general notation for it.
    We consider tuples $(P,\varepsilon,(v^k)_{k\in [r]},A^r)$ such that $P=\{p_1,\ldots,p_{n'}\}$ is a set of $n'$ parties, $\varepsilon>0$ is a positive number, $(v^k)_{k\in [r]}$ is a partial $n$-dimensional instance, and $A^r=\sum_{k\in [r]} a^k$ is the aggregated allocation up to step $r$.
    Without loss of generality, we assume, as usual, that $s^{r}_{p_1}\geq s^r_{p_2}\geq \cdots \geq s^r_{p_{n'}}$.
    Given such a tuple, we denote the $n$-dimensional extension of the vote sequence satisfying the properties of the lemma by $\boost(P,\varepsilon,(v^k)_{k\in [r]},A^r)$, and refer to it as a \textit{$(P,\varepsilon)$-booster} (because it \textit{boosts} the surplus or deficit of some party in $P$).
    That is, the $(P,\varepsilon)$-booster $\boost(P,\varepsilon,(v^k)_{k\in [r]},A^r)$ is a partial $n$-dimensional instance $(v^k)_{k\in \{r+1,\ldots,t\}}$, with $t\geq r$ finite and $v^k\in [0,1]^n$ for $k\in \{r+1,\ldots,t\}$, such that for any allocation $(a^k)_{k\in \{r+1,\ldots,t\}}$ produced by an online apportionment method, with $a^k\in \{0,1\}^n$ for $k\in \{r+1,\ldots,t\}$, it holds that $s^{t}_{p_1}\geq (n'-1)/2-\varepsilon$ or $s^{t}_{p_{n'}} \leq -((n'-1)/2-\varepsilon)$ (potentially after reordering $p_1,\ldots,p_{n'}$).
        
    To prove the claim for $n'+2$, we fix $r, \varepsilon$, $V^r$, and $A^r$ as in the statement.
    We assume, as usual, that $s^r_{p_1}\geq s^r_{p_2}\geq \cdots \geq s^r_{p_{n'+2}}$.
    We apply the inductive hypothesis on the $n'$-dimensional instance that results from restricting to parties $p_2,\ldots,p_{n'+1}$, but with a smaller deviation of $\varepsilon/2$.
    That is, we consider the $n$-dimensional vote sequence 
    \[
        (v^k)_{k\in \{r+1,\ldots,t\}} = \boost\left(\{p_2,\ldots,p_{n'+1}\},\varepsilon/2,(v^k)_{k\in [r]},A^r\right)
    \]
    for some finite $t$.
    After reordering the parties in $P$ so that $s^t_{p_1}>s^t_{p_2}>\cdots >s^t_{p_{n'+2}}$, we know by the induction hypothesis that we have reached an instance with $s^{t}_{p_1} \geq (n'-1)/2-\varepsilon/2$ or $s^{t}_{p_{n'+2}} \leq -((n'-1)/2-\varepsilon/2)$.
    We again apply a $(\{p_2,\ldots,p_{n'+1}\},\varepsilon/2)$-booster, and then a third time if necessary.
    After applying this procedure at most three times, we now have obtained an instance, say at step $t'$, such that $s^{t'}_{p_2} \geq (n'-1)/2-\varepsilon/2$ or $s^{t'}_{p_{n'+1}} \leq -((n'-1)/2-\varepsilon/2)$, since each booster fixes a new party satisfying one of these two conditions.
        
    In the remainder of the proof, we repeatedly apply either a $(p_1,p_2)$-splitter or a $(p_{n'+1},p_{n'+2})$-splitter, and then a $\{p_2,\ldots,p_{n+1}\}$-booster.
    The splitters will increase the surplus of party $p_1$ or the deficit of party $p_{n'+2}$, at the cost of decreasing the surplus of $p_2$ or the deficit of $p_{n'+1}$, respectively.
    However, the inductive hypothesis allows us to boost either this surplus or deficit so that it surpasses $(n'-1)/2-\varepsilon/2$ and can be used for a splitter again.
    By repeating this procedure, we can make the surplus of party $p_1$ or the deficit of party $p_{n+2}$ arbitrarily close to $(n'-1)/2+1-\varepsilon/2$, and thus surpass $(n'+1)/2-\varepsilon$ after a finite number of steps.
    
        The argument can be formalized through the following algorithm.
        We initialize $\ell=0$, $t_\ell=t'$ and repeat the following procedure until $s^{t_\ell}_1 \geq (n'+1)/2-\varepsilon_1$ or $s^{t_\ell}_{n'+2} \leq (n'+1)/2-\varepsilon_1$:
        \begin{enumerate}[label=(\roman*)]
            \item If $s^{t_\ell}_{p_2} \geq (n'-1)/2-\varepsilon/2$, apply a $(p_1,p_2)$-splitter; i.e., take $v^{t_\ell+1}=\spl(p_1,p_2)$;
            \item else, if $s^{t_\ell}_{p_{n'+1}} \leq -((n'-1)/2-\varepsilon/2)$, apply a $(p_{n'+1},p_{n'+2})$-splitter; i.e.,  $v^{t_\ell+1}=\spl(p_{n'+1},p_{n'+2})$;
            \item apply a $\{p_2,\ldots,p_{n'+1}\}$-booster; i.e., take
            \[
                (v^k)_{k\in \{t_\ell+2,\ldots,t''\}} = \boost\left(\{p_2,\ldots,p_{n'+1}\},\varepsilon/2,(v^k)_{k\in [t_\ell+1]},A^{t_\ell+1}\right)
            \]
            for some finite $t''$;
            \item update $\ell\leftarrow \ell+1$ and $t_\ell\leftarrow t''$.
        \end{enumerate}

        Note that, after each iteration $\ell\geq 1$, the following invariant (which we showed for $t_0$) is preserved by the induction hypothesis: 
        $s^{t_\ell}_{p_2} \geq (n'-1)/2-\varepsilon/2$ or $s^{t_\ell}_{p_{n'+1}} \leq -((n'-1)/2-\varepsilon/2)$.
        Observe that $s^{t_\ell}_{p_1} < (n'+1)/2-\varepsilon$ and $s^{t_\ell}_{p_{n'+2}} > -((n'+1)/2-\varepsilon\big)$ as long as the algorithm does not terminate.
        Thus, whenever $s^{t_\ell}_{p_2} \geq (n'-1)/2-\varepsilon/2$ holds, we have $s^{t_\ell}_{p_1}-s^{t_\ell}_{p_2}<1$ and thus the $(p_1,p_2)$-splitter is well defined.
        Similarly,  whenever $s^{t_\ell}_{p_{n'+1}} \leq -((n'-1)/2-\varepsilon/2)$ holds, we have $s^{t_\ell}_{p_{n'+1}}-s^{t_\ell}_{p_{n'+2}}<1$ and thus the $(p_{n'+1},p_{n'+2})$-splitter is well defined.
        We distinguish between two types of iterations:
        \textit{type~I} are the iterations $\ell$ such that $s^{t_\ell}_{p_2} \geq (n'-1)/2-\varepsilon/2$ and the algorithm applies a $(p_1,p_2)$-splitter; analogously, \textit{type~II} are the iterations $\ell$ such that $s^{t_\ell}_{p_{n'+1}} \leq -((n'-1)/2-\varepsilon/2)$ and the algorithm applies a $(p_{n'+1},p_{n'+2})$-splitter.
        The following claim will allow us to conclude.
        \begin{claim}\label{claim:geom-bound}
        	Let $i(0),i(1),\ldots,i(L_1)\in \NN_0$ be the type I iterations and $j(0),j(1),\ldots,j(L_2)\in \NN_0$ be the type II iterations after $L_1+L_2+2$ iterations of the algorithm.
        	For every $\ell\in [L_1]_0$, it holds that $s^{t_{i(\ell)}}_{p_1}\geq (n'+1)/2-\varepsilon/2-1/2^\ell$.
        	For every $\ell\in [L_2]_0$, it holds that $s^{t_{j(\ell)}}_{p_{n'+2}}\leq -((n'+1)/2-\varepsilon/2-1/2^\ell)$.
        \end{claim}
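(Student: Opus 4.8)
The plan is to prove both statements simultaneously by induction on the iteration index $\ell$, tracking how each splitter application boosts the extreme surplus (for type I iterations) or extreme deficit (for type II iterations). I would set up the induction on $\ell$ for the type I claim; the type II claim is entirely symmetric (swap surpluses for deficits, party $p_1$ for $p_{n'+2}$, and $p_2$ for $p_{n'+1}$), so I would prove type I in detail and remark that type II follows by the analogous argument.

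For the base case $\ell=0$: the first type I iteration $i(0)$ satisfies $s^{t_{i(0)}}_{p_2}\geq (n'-1)/2-\varepsilon/2$ by the invariant established before the claim, and since the algorithm has not terminated we have $s^{t_{i(0)}}_{p_1}<(n'+1)/2-\varepsilon$, so in particular $0\le s^{t_{i(0)}}_{p_1}-s^{t_{i(0)}}_{p_2}<1$ and the $(p_1,p_2)$-splitter is well defined. By \Cref{lem:splitters}, after the splitter step the average of the two surpluses is preserved while they split by exactly one unit, so the larger of the two becomes $\tfrac12\big(s^{t_{i(0)}}_{p_1}+s^{t_{i(0)}}_{p_2}\big)+\tfrac12 \ge s^{t_{i(0)}}_{p_2}+\tfrac12 \ge (n'-1)/2-\varepsilon/2+1/2 = (n'+1)/2-\varepsilon/2-1/2^0$; relabeling so that $p_1$ denotes this party, and noting that the subsequent $\{p_2,\dots,p_{n'+1}\}$-booster in step (iii) does not touch party $p_1$ (it only involves parties $p_2,\ldots,p_{n'+1}$, whose $n$-dimensional extensions have zero entry at $p_1$), we get $s^{t_1}_{p_1}\ge (n'+1)/2-\varepsilon/2-1$, and more to the point the invariant $s^{t_{i(0)}}_{p_1}\ge (n'+1)/2-\varepsilon/2-1$ is what feeds the next step. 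For the inductive step, suppose $s^{t_{i(\ell-1)}}_{p_1}\ge (n'+1)/2-\varepsilon/2-1/2^{\ell-1}$. Between iterations $i(\ell-1)$ and $i(\ell)$ the surplus of $p_1$ is only affected by splitters and boosters not involving $p_1$ as the losing party — and I need to argue that $p_1$'s surplus does not decrease, which holds because $p_1$ is never the low end of a splitter (it is always the high end in type I, and is excluded entirely from the boosters and from the type II splitters between $p_{n'+1},p_{n'+2}$). At iteration $i(\ell)$ the invariant gives $s^{t_{i(\ell)}}_{p_2}\ge (n'-1)/2-\varepsilon/2$ and the splitter is well-defined; applying \Cref{lem:splitters} again, the high end becomes $\tfrac12\big(s^{t_{i(\ell)}}_{p_1}+s^{t_{i(\ell)}}_{p_2}\big)+\tfrac12 \ge \tfrac12\big((n'+1)/2-\varepsilon/2-1/2^{\ell-1}\big)+\tfrac12\big((n'-1)/2-\varepsilon/2\big)+\tfrac12 = (n'+1)/2-\varepsilon/2-1/2^{\ell}$, using $\tfrac12\cdot\tfrac{n'+1}{2}+\tfrac12\cdot\tfrac{n'-1}{2}+\tfrac12 = \tfrac{n'+1}{2}$ and $\tfrac12\cdot\tfrac1{2^{\ell-1}}=\tfrac1{2^{\ell}}$. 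Relabeling, this is exactly the claimed bound at $\ell$.

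The main obstacle I expect is the careful bookkeeping of the relabeling: \Cref{lem:splitters} does not say \emph{which} of the two parties ends up as the high end, so after each splitter the names $p_1,\dots,p_{n'+2}$ must be permuted to restore the sorted order, and one must verify that the party I am calling "$p_1$" across consecutive type I iterations is consistently the one carrying the growing surplus (equivalently, that the splitter output indeed keeps a party at surplus $\ge (n'+1)/2-\varepsilon/2-1/2^\ell$, whichever of the two it is). The clean way around this is to phrase the induction not in terms of a fixed party but as: "after iteration $i(\ell)$, the largest surplus among the parties in $P$ is at least $(n'+1)/2-\varepsilon/2-1/2^\ell$," and then observe that a splitter between the top two parties can only increase (or keep) the maximum, while a booster on the middle parties and a type II splitter on the bottom two parties likewise do not decrease the maximum surplus — so the maximum is monotone across all iterations and jumps by the stated geometric amount at each type I iteration. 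A symmetric statement about the minimum surplus handles the type II iterations. With this reformulation the relabeling disappears and the induction is the three-line computation above.
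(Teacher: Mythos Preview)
Your approach is essentially the paper's: induct on $\ell$, use \Cref{lem:splitters} to show a type~I splitter lifts the maximum surplus by roughly half the remaining gap, and invoke monotonicity of the maximum surplus across boosters and type~II iterations (which never touch the top party). Two slips to clean up. First, an off-by-one: the claim is about $s^{t_{i(\ell)}}_{p_1}$, the surplus at the \emph{start} of the $\ell$-th type~I iteration, so the splitter that produces the bound for index $\ell$ is the one from iteration $i(\ell-1)$, not $i(\ell)$ as you write; equivalently, your inductive computation actually establishes $s^{t_{i(\ell)+1}}_{p_1}\ge (n'+1)/2-\varepsilon/2-1/2^{\ell+1}$, and you then need monotonicity to pass to $t_{i(\ell+1)}$ (this is exactly how the paper orders the steps). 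Second, the base case needs no splitter computation at all: since $p_1$ is the maximum and iteration $i(0)$ is type~I, $s^{t_{i(0)}}_{p_1}\ge s^{t_{i(0)}}_{p_2}\ge (n'-1)/2-\varepsilon/2=(n'+1)/2-\varepsilon/2-1/2^0$; your displayed equality $(n'-1)/2-\varepsilon/2+1/2=(n'+1)/2-\varepsilon/2-1/2^0$ is off by $1/2$.
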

        \begin{proof}
        Let $i(1),i(2),\ldots,i(L_1)$ and $j(1),j(2),\ldots,j(L_2)$ be as in the statement.
        The proofs for type I and type II iterations are fully symmetric, so we only include the former.	
        
        We proceed by induction.
        The base case $\ell=0$ is trivial, since at the beginning of the algorithm we have $s^{t_{0}}_{p_1} \geq (n'-1)/2-\varepsilon/2$, and $s^{t_{i(0)}}_{p_1}\geq s^{t_{0}}_{p_1}$.
        
        Assume the claim holds for some $\ell \in [L_1-1]_0$; i.e., $s^{t_{i(\ell)}}_{p_1}\geq (n'+1)/2-\varepsilon/2-1/2^\ell$.
        We will prove that 
        \begin{equation}
        	s^{t_{i(\ell)+1}}_{p_1}\geq \frac{n'+1}{2}-\frac{\varepsilon}{2}-\frac{1}{2^{\ell+1}}.\label{ineq:claim-iter-spl}
        \end{equation}
        If this is true, this suffices to conclude.
        Indeed, $s^{t_q}_{p_1}$ does not decrease in type II iterations $q$, since they only involve the parties $\{p_2,\ldots,p_{n'+2}\}$ in each iteration and the maximum surplus can only remain the same or increase.
        Thus, Inequality~\eqref{ineq:claim-iter-spl} implies
        \[
        s^{t_{i(\ell+1)}}_{p_1}\geq s^{t_{i(\ell)+1}}_{p_1}\geq \frac{n'+1}{2}-\frac{\varepsilon}{2}-\frac{1}{2^{\ell+1}}.
        \]
        
        We now prove Inequality~\eqref{ineq:claim-iter-spl} using the inductive hypothesis $s^{t_{i(\ell)}}_{p_1}\geq (n'+1)/2-\varepsilon/2-1/2^\ell$.
        If $s^{t_{i(\ell)}}_{p_1}\geq (n'+1)/2-\varepsilon/2-1/2^{\ell+1}$, then we are done, because the maximum surplus cannot decrease with a $(p_1,p_2)$-splitter.
        Otherwise, we let $\delta\coloneqq s^{t_{i(\ell)}}_{p_1}-((n'+1)/2-\varepsilon/2-1/2^\ell)$ and note that $0\leq \delta<1/2^{\ell+1}$.
        On the other hand, we know from the condition of a type I iteration that $s^{t_{i(\ell)}}_{p_2} \geq (n'-1)/2-\varepsilon/2$. 
        Combining these two facts, we obtain
        \[
            s^{t_{i(\ell)}}_{p_1}-s^{t_{i(\ell)}}_{p_2} \leq \frac{n'+1}{2}-\frac{\varepsilon}{2}-\frac{1}{2^\ell} +\delta - \left( \frac{n'-1}{2}-\frac{\varepsilon}{2}\right) = 1-\frac{1}{2^\ell} +\delta.
        \]
        Hence, the application of the $(p_1,p_2)$-splitter yields, due to \Cref{lem:splitters},
        \begin{align*}
            s^{t_{i(\ell)+1}}_{p_1} \geq s^{t_{i(\ell)}}_{p_1} + \frac{1-(s^{t_{i(\ell)}}_{p_1}-s^{t_{i(\ell)}}_{p_2})}{2} & \geq \frac{n'+1}{2}-\frac{\varepsilon}{2}-\frac{1}{2^\ell} +\delta + \frac{\frac{1}{2^\ell} -\delta}{2}\\
            & = \frac{n'+1}{2}-\frac{\varepsilon}{2}+\frac{\delta}{2} -\frac{1}{2^{\ell+1}} \geq  \frac{n'+1}{2}-\frac{\varepsilon}{2} -\frac{1}{2^{\ell+1}}.\qedhere
        \end{align*}
        \end{proof}
        
        After $L+2$ iterations of the algorithm, for even $L>0$, we know that we have either applied $L/2+1$ type I iterations or $L/2+1$ type II iterations.
        Thus, \Cref{claim:geom-bound} implies that either $s^{L}_{p_1} \geq (n'-1)/2-\varepsilon$ or $s^{L}_{p_{n'+2}} \leq -\big((n'-1)/2-\varepsilon\big)$ holds if
        \[
            \frac{n'+1}{2}-\frac{\varepsilon}{2} - \frac{1}{2^{L/2}} \geq \frac{n'+1}{2}-\varepsilon \Longleftrightarrow L\geq -2\log(\varepsilon/2).\qedhere
        \]
\end{proof}

\section{Proofs Deferred from \Cref{sec:online}}

\subsection{Proof of \Cref{lem:flownetwork-properties}\label{app:lem:flownetwork-properties}}

\lemflownetworkproperties*

\begin{proof}
    The proof follows by induction over~$t\in[T]$.
    First, consider step~$t=1$.
    By definition,~$y^1$ yields a valid probability distribution over subsets of size~$H^{1}$, ensuring that no party receives more than one seat in step~$1$.
    Thus, for proving global quota, it suffices to show that any party~$i$ with~$v_i^1=0$ is allocated a seat with probability zero.
    Let~$i$ be such that~$v_i^1=0$.
    Since~$0=v_i^1=\sum_{j=1}^m \lambda_j z_{j,i}$ with~$\lambda_j\in[0,1]$ for every~$j$, it follows that~$\lambda_j=0$ whenever~$z_{j,i}=1$.
    Hence,~$y^1(V^0,A^0,v^1,S)=0$ for all subsets~$S$ containing~$i$, as required.
    To prove that Equation~\eqref{eq:EAPdef} is also satisfied, we fix a party~$i\in [n]$ and observe that
    \[
    \sum_{S\subseteq [n]:i\in S}y^1(V^0,A^0,v^1,S)=\sum_{j=1}^m \lambda_j z_{j,i} =v_i^1.
    \]

    Next, consider some~$t\in [T-1]$ and assume that the partial method~$(y^t)_{t\in[T-1]}$ satisfies the required properties up to step~$t\leq T-1$.
    We show that they are maintained for step~$t+1$.
    By definition,~$y^{t+1}$ yields a valid probability distribution over subsets of size~$H^{t+1}$, ensuring that no party receives more than one seat in step~$t+1$.
    Consider a party~$i$ with~$v_i^{t+1}=0$.
    Since~$f$ is a feasible flow, we have~$f((u,i))=0$ for every~$u\in U^t$, which implies~$z_i(u)=0$.
    Consequently, each~$z_{j,i}=0$ and~$i\notin Z_j$ for all~$j$.
    Hence,~$y^{t+1}(V^t,A^t,v^{t+1},S)=0$ for all subsets~$S$ containing~$i$, as required.

    To verify global quota, fix a cumulative allocation~$A^t$ occurring with strictly positive probability and let~$u=u(A^t)\subset[n]$ be the corresponding set of parties at their upper share of seats.
    First, suppose that~$i\in u$ and that the upper share of~$i$ does not increase from step~$t$ to~$t+1$, i.e.,~$\lceil \sum_{k=1}^t v_i^k\rceil=\lceil \sum_{k=1}^{t+1} v_i^k\rceil$.
    Then party~$i$ cannot be allocated a seat in step~$t+1$ without violating global quota.
    Since~$f((u,i))=0$, we again have~$z_i(u)=0$, which implies that~$z_{j,i}=0$ and~$i\notin Z_j$ for all~$j$.
    Therefore,~$y^{t+1}(V^t,A^t,v^{t+1},S)=0$ for all subsets~$S$ containing~$i$, as required.

    On the other hand, assume~$i\notin u$ and the lower quota increases, i.e.,~$\lfloor \sum_{k=1}^t v_i^k\rfloor+1=\lfloor \sum_{k=1}^{t+1} v_i^k\rfloor$.
    Then~$i$ must be assigned a seat in step~$t+1$, as otherwise global quota would be violated.
    In this case, the flow is~$f((u,i))=\pi(u)/H^{t+1}$, so~$z_i(u)=1$, which implies that~$z_{j,i}=1$ and~$i\in Z_j$ for all~$j$.
    Hence, every subset~$S$ in the support of~~$y^{t+1}(V^t,A^t,v^{t+1},\cdot)$ contains~$i$, as desired.

    It remains to show that~$y^{t+1}(V^t,A^t,v^{t+1},\cdot)$ satisfies Equation~\eqref{eq:EAPdef} for step~$t+1$.
    Recall that~$\calA^t$ is the set of all cumulative allocations~$A^t$ that occur with strictly positive probability.
    We need to show that the probability that party~$i$ is allocated a seat in step~$t+1$ equals~$v_i^{t+1}$.
    To compute this, observe that
    \begin{align*}
        \sum_{A^{t}\in\calA^{t}} \pi(y^{t},V^{t},A^{t}) \cdot a^{t+1}_i(V^{t},A^{t},v^{t+1})
        &=\sum_{u\in U^t} \sum_{A^t\in \calA^t:u=u(A^t)} \pi(y^{t},V^{t},A^{t}) \cdot a^{t+1}_i(V^{t},A^{t},v^{t+1}) \\
        &=\sum_{u\in U^t} \pi(u) \cdot z_i(u) \\
        &=\sum_{u\in U^t} \pi(u) \cdot \frac{f(u,i)\cdot H^{t+1}}{\pi(u)}\\
        &=\sum_{u\in U^t} f(u,i) \cdot H^{t+1}.
    \end{align*}
    By flow conservation at vertex~$i$, the total incoming flow equals the outgoing flow~$f((i,d))=v_i^{t+1}/H^{t+1}$, so the final expression evaluates to~$v_i^{t+1}$, as required.
\end{proof}

\subsection{Proof of \Cref{lem:flowpolytope}\label{app:lem:flowpolytope}}

\lemflowpolytope*

\begin{proof}
    Since the case~$n=2$ follows immediately from~$n=3$ by setting~$v_3^t=0$ for all~$t$, we focus on~$n=3$.
    For~$t=1$, the construction of~$y^1$ is precisely given in the construction of the partial network flow method.
    Thus, let~$t\geq 1$ and let a partial network flow method~$(y^k)_{k\in[t]}$ up to step~$t$ be given.
    We show that the corresponding flow network~$\text{FN}(V^t,A^t,v^{t+1},\pi)$ allows for a feasible flow of value~$1$ using a detailed case distinction based on
    \begin{itemize}
        \item the number of seats to be allocated in step~$t+1$ (i.e., whether~$H^{t+1}=1$ or~$H^{t+1}=2$),
        \item the number of parties that have been assigned their upper share of seats after step~$t$ (i.e., whether~$\vert u\vert=1$ or~$\vert u\vert=2$ for every~$u\in U^t$), and
        \item how many parties obtained enough votes in step~$t+1$ to increase their upper share of seats by~$1$.
    \end{itemize}

    This results in eight distinct cases, listed below.
    For each, we show that the corresponding flow network admits a feasible flow of value~$1$.
    We begin by making two general observations, using the fact that the partial network flow method~$(y^k)_{k\in[t]}$ satisfies Equation~\eqref{eq:EAPdef} for every~$k\in [t]$.
    For any party~$i\in [3]$ and set~$u\in U^t$, we have
    \begin{equation}\label{eq:P[u]}
    \begin{aligned}
        \pi(u)&=\sum_{k=1}^t v_i^k-\left\lfloor \sum_{k=1}^t v_i^k\right \rfloor
        \leq 1- \left( \left\lceil \sum_{k=1}^t v_i^k \right \rceil -\sum_{k=1}^t v_i^k \right) 
        &&\text{ if } u=\{i\}, \text{ and } \\
        \pi(u)&=1-\left( \sum_{k=1}^t v_i^k - \left\lfloor \sum_{k=1}^t v_i^k \right \rfloor \right)
        \geq \left\lceil \sum_{k=1}^t v_i^k \right \rceil -\sum_{k=1}^t v_i^k 
        &&\text{ if } i\notin u,~\vert u\vert =2.
    \end{aligned}
    \end{equation}
    The inequalities are strict if and only if~$\sum_{k=1}^t v_i^k\in \NN$.
    Furthermore, in any flow network, the vertex sets~$P_o=\{o\}$ and~$P_d=P\setminus\{d\}$ each induce a cut of capacity~$1$.
    Thus, any flow of value~$1$ must assign any edge adjacent to~$o$ or~$d$ a flow equal to the edge's upper capacity.

    The remainder of this proof is structured as follows:
    We first list the eight cases, then address each one separately, showing the existence of a feasible flow~$f$ of value~$1$.
    All relevant networks are depicted in \Cref{fig:case1,fig:case2}, where edges~$(u,i)$ with~$c((u,i))=0$ are omitted for clarity.

    \begin{enumerate}[label=\arabic*.]
        \item $H^{t+1}=1$
        \begin{enumerate}[label=\arabic{enumi}.\arabic*.]
            \item $\vert u \vert=1$ for all~$u\in U^t$
            \begin{enumerate}[label=\arabic{enumi}.\arabic{enumii}.\arabic*.]
                \item $\forall i\in [3] : v_i^{t+1}<\left\lceil \sum_{k=1}^t v^k_i\right\rceil - \sum_{k=1}^t v^k_i$
                \item $\exists !~i\in [3]: v_i^{t+1}\geq \left\lceil \sum_{k=1}^t v^k_i\right\rceil - \sum_{k=1}^t v^k_i>0$
            \end{enumerate}
            \item $\vert u \vert=2$ for all~$u\in U^t$
            \begin{enumerate}[label=\arabic{enumi}.\arabic{enumii}.\arabic*.]
                \item $\exists !~i\in [3]: v_i^{t+1}\geq \left\lceil \sum_{k=1}^t v^k_i\right\rceil - \sum_{k=1}^t v^k_i>0$
                \item $\exists !~i\in [3]: v_i^{t+1}< \left\lceil \sum_{k=1}^t v^k_i\right\rceil - \sum_{k=1}^t v^k_i$
            \end{enumerate}
        \end{enumerate}
        \item $H^{t+1}=2$
        \begin{enumerate}[label=\arabic{enumi}.\arabic*.]
            \item $\vert u \vert=1$ for all~$u\in U^t$
            \begin{enumerate}[label=\arabic{enumi}.\arabic{enumii}.\arabic*.]
                \item $\exists !~i\in [3]: v_i^{t+1}\geq \left\lceil \sum_{k=1}^t v^k_i\right\rceil - \sum_{k=1}^t v^k_i>0$
                \item $\exists !~i\in [3]: v_i^{t+1}< \left\lceil \sum_{k=1}^t v^k_i\right\rceil - \sum_{k=1}^t v^k_i$
            \end{enumerate}
            \item $\vert u \vert=2$ for all~$u\in U^t$
            \begin{enumerate}[label=\arabic{enumi}.\arabic{enumii}.\arabic*.]
                \item $\exists !~i\in [3]: v_i^{t+1}< \left\lceil \sum_{k=1}^t v^k_i\right\rceil - \sum_{k=1}^t v^k_i$
                \item $\forall i\in [3] : v_i^{t+1}\geq \left\lceil \sum_{k=1}^t v^k_i\right\rceil - \sum_{k=1}^t v^k_i>0$
            \end{enumerate}
        \end{enumerate}
     \end{enumerate}
     Note that this covers all possible cases due to the following observation:
     For each party~$i\in[n]$, we have
     \[
     \sum_{k=1}^t v^k_i - \left\lfloor \sum_{k=1}^t v^k_i\right\rfloor + v^{t+1}_i \in [0,2),
     \]
     while the sum evaluates to
     \[
     \sum_{i\in[3]} \left(\sum_{k=1}^t v^k_i - \left\lfloor \sum_{k=1}^t v^k_i\right\rfloor + v^{t+1}_i\right) = |u|+H^{t+1}.
     \]
     This implies that the number of parties for which the individual term is at least~$1$ lies within~$H^{t+1}+\vert u\vert-2$ and~$H^{t+1}+\vert u\vert-1$.

    \begin{figure}[t]
    \centering
    \begin{subfigure}[b]{0.48\textwidth}
    \centering
    \begin{tikzpicture}[
      node distance=1.8cm and 2.4cm,
      vertex/.style={draw, circle, minimum size=1cm, font=\small}, % custom style for nodes
      ->, >=Stealth,
      scale=0.65, transform shape
    ]
    
        % Nodes
        \node[vertex] (o) at (0,0) {$o$};
        \node[vertex] (a2) [right=of o] {$2$};
        \node[vertex] (a1) [above=of a2] {$1$};
        \node[vertex] (a3) [below=of a2] {$3$};
        \node[vertex] (b1) [right=of a1] {$1$};
        \node[vertex] (b2) [right=of a2] {$2$};
        \node[vertex] (b3) [right=of a3] {$3$};
        \node[vertex] (d) [right=of b2] {$d$};
        
        % Edges with plain labels
        \draw (o) -- node[above, sloped, font=\small] {$\pi(\{1\})$} (a1);
        \draw (o) -- node[above, sloped, font=\small] {$\pi(\{2\})$} (a2);
        \draw (o) -- node[below, sloped, font=\small] {$\pi(\{3\})$} (a3);
        
        \draw (b1) -- node[above, sloped, font=\small] {$v^{t+1}_1$} (d);
        \draw (b2) -- node[above, sloped, font=\small] {$v^{t+1}_2$} (d);
        \draw (b3) -- node[below, sloped, font=\small] {$v^{t+1}_3$} (d);
        
        % Feasible edges
        \draw (a1) -- (b2);
        \draw (a1) -- (b3);
        \draw (a2) -- (b1);
        \draw (a2) -- (b3);
        \draw (a3) -- (b1);
        \draw (a3) -- (b2);
        
    \end{tikzpicture}
    \caption{Case 1.1.1}
    \label{fig:case1.1.1}
    \end{subfigure}
    \begin{subfigure}[b]{0.48\textwidth}
    \centering
    \begin{tikzpicture}[
      node distance=1.8cm and 2.4cm,
      vertex/.style={draw, circle, minimum size=1cm, font=\small}, % custom style for nodes
      ->, >=Stealth,
      scale=0.65, transform shape
    ]
    
        % Nodes
        \node[vertex] (o) at (0,0) {$o$};
        \node[vertex] (a2) [right=of o] {$2$};
        \node[vertex] (a1) [above=of a2] {$1$};
        \node[vertex] (a3) [below=of a2] {$3$};
        \node[vertex] (b1) [right=of a1] {$1$};
        \node[vertex] (b2) [right=of a2] {$2$};
        \node[vertex] (b3) [right=of a3] {$3$};
        \node[vertex] (d) [right=of b2] {$d$};
        
        % Edges with plain labels
        \draw (o) -- node[above, sloped, font=\small] {$\pi(\{1\})$} (a1);
        \draw (o) -- node[above, sloped, font=\small] {$\pi(\{2\})$} (a2);
        \draw (o) -- node[below, sloped, font=\small] {$\pi(\{3\})$} (a3);
        
        \draw (b1) -- node[above, sloped, font=\small] {$v^{t+1}_1$} (d);
        \draw (b2) -- node[above, sloped, font=\small] {$v^{t+1}_2$} (d);
        \draw (b3) -- node[below, sloped, font=\small] {$v^{t+1}_3$} (d);
        
        % Feasible edges
        \draw (a1) -- (b1);
        \draw (a1) -- (b2);
        \draw (a1) -- (b3);
        \draw[red] (a2) -- (b1);
        \draw[gray] (a2) -- (b3);
        \draw[red] (a3) -- (b1);
        \draw[gray] (a3) -- (b2);

    \end{tikzpicture}
    \caption{Case 1.1.2}
    \label{fig:case1.1.2}
    \end{subfigure}
    
    \vspace{.5cm}
    
    \begin{subfigure}[b]{0.48\textwidth}
    \centering
    \begin{tikzpicture}[
      node distance=1.8cm and 2.4cm,
      vertex/.style={draw, circle, minimum size=1cm, font=\small}, % custom style for nodes
      ->, >=Stealth,
      scale=0.65, transform shape
    ]
    
        % Nodes
        \node[vertex] (o) at (0,0) {$o$};
        \node[vertex] (a2) [right=of o] {$1,3$};
        \node[vertex] (a1) [above=of a2] {$1,2$};
        \node[vertex] (a3) [below=of a2] {$2,3$};
        \node[vertex] (b1) [right=of a1] {$1$};
        \node[vertex] (b2) [right=of a2] {$2$};
        \node[vertex] (b3) [right=of a3] {$3$};
        \node[vertex] (d) [right=of b2] {$d$};
        
        % Edges with plain labels
        \draw (o) -- node[above, sloped, font=\small] {$\pi(\{1,2\})$} (a1);
        \draw (o) -- node[above, sloped, font=\small] {$\pi(\{1,3\})$} (a2);
        \draw (o) -- node[below, sloped, font=\small] {$\pi(\{2,3\})$} (a3);
        
        \draw (b1) -- node[above, sloped, font=\small] {$v^{t+1}_1$} (d);
        \draw (b2) -- node[above, sloped, font=\small] {$v^{t+1}_2$} (d);
        \draw (b3) -- node[below, sloped, font=\small] {$v^{t+1}_3$} (d);
        
        % Feasible edges
        \draw (a1) -- (b1);
        \draw (a1) -- (b3);
        \draw (a2) -- (b1);
        \draw (a2) -- (b2);
        \draw[red] (a3) -- (b1);

    \end{tikzpicture}
    \caption{Case 1.2.1 }
    \label{fig:case1.2.1}
    \end{subfigure}
    \begin{subfigure}[b]{0.48\textwidth}
    \centering
    \begin{tikzpicture}[
      node distance=1.8cm and 2.4cm,
      vertex/.style={draw, circle, minimum size=1cm, font=\small}, % custom style for nodes
      ->, >=Stealth,
      scale=0.65, transform shape
    ]
    
        % Nodes
        \node[vertex] (o) at (0,0) {$o$};
        \node[vertex] (a2) [right=of o] {$1,3$};
        \node[vertex] (a1) [above=of a2] {$1,2$};
        \node[vertex] (a3) [below=of a2] {$2,3$};
        \node[vertex] (b1) [right=of a1] {$1$};
        \node[vertex] (b2) [right=of a2] {$2$};
        \node[vertex] (b3) [right=of a3] {$3$};
        \node[vertex] (d) [right=of b2] {$d$};
        
        % Edges with plain labels
        \draw (o) -- node[above, sloped, font=\small] {$\pi(\{1,2\})$} (a1);
        \draw (o) -- node[above, sloped, font=\small] {$\pi(\{1,3\})$} (a2);
        \draw (o) -- node[below, sloped, font=\small] {$\pi(\{2,3\})$} (a3);
        
        \draw (b1) -- node[above, sloped, font=\small] {$v^{t+1}_1$} (d);
        \draw (b2) -- node[above, sloped, font=\small] {$v^{t+1}_2$} (d);
        \draw (b3) -- node[below, sloped, font=\small] {$v^{t+1}_3$} (d);
        
        % Feasible edges
        \draw[gray] (a1) -- (b2);
        \draw[red] (a1) -- (b3);
        \draw[red] (a2) -- (b2);
        \draw[gray] (a2) -- (b3);
        \draw (a3) -- (b1);
        \draw (a3) -- (b2);
        \draw (a3) -- (b3);

    \end{tikzpicture}
    \caption{Case 1.2.2}
    \label{fig:case1.2.2}
    \end{subfigure}
    \caption{Case 1 of the proof of \Cref{lem:flowpolytope} with~$H^{t+1}=1$.
    The upper capacity of an edge~$e=(u,i)$ is~$c(e)=\pi(u)$ if it is drawn and~$c(e)=0$, otherwise.
    Red edges have lower capacities~$\ell(e)=c(e)$.
    Grey edges carry no flow in the unique flow with value~$1$.
    }
    \label{fig:case1}
\end{figure}
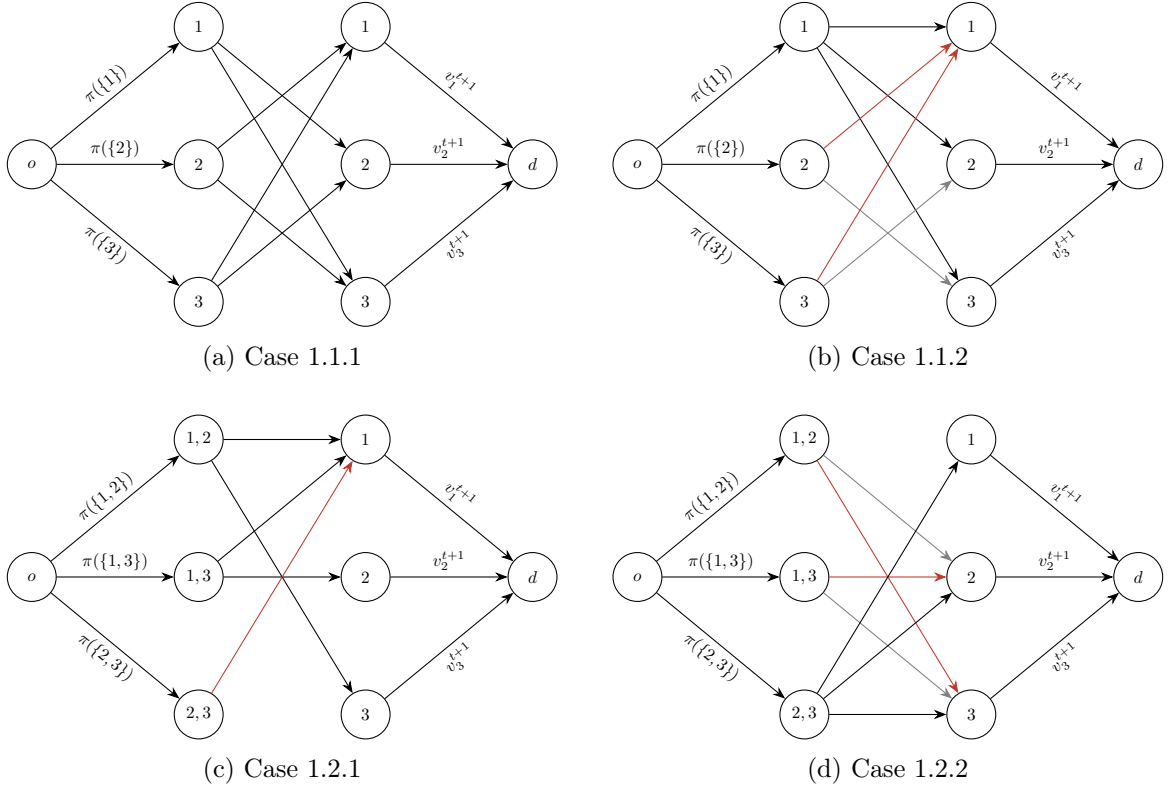
     
    \paragraph{Case~1.1.1.} 
    The flow network is illustrated in \Cref{fig:case1.1.1}.
    In this case, the flow~$f$ of value~$1$ is not unique.
    However, any feasible flow~$f$ with value~$1$ satisfies~$f(e)=c(e)$ for all
    \[
        e\in \{(o,\{i\}),(\{i\},i),(i,d): i\in [3]\}.
    \]
    The flow on the remaining edges must satisfy the following set of linear equalities
    \begin{align*}
        f((\{1\},2)+f((\{1\},3)&=\pi(\{1\}), \\
        f((\{2\},1)+f((\{2\},3)&=\pi(\{2\}), \\
        f((\{3\},1)+f((\{3\},2)&=\pi(\{3\}), \\
        f((\{2\},1)+f((\{3\},1)&=v_1^{t+1}, \\
        f((\{1\},2)+f((\{3\},2)&=v_2^{t+1}, \\
        f((\{1\},3)+f((\{2\},3)&=v_3^{t+1}, \\
        f(e)\geq 0.
    \end{align*}
    Since adding the first three equalities yields the same as adding the other three, the set of linear equalities is underdetermined, i.e., there is no unique solution.
    We show that the system has a feasible solution.
    
    To this end, we interpret the inner part of the network as a bipartite graph~$G\coloneqq(P_1\cup P_2, E)$ with vertex sets~$P_1\coloneqq \{\{1\},\{2\},\{3\}\}$ and~$P_2\coloneqq \{1,2,3\}$, and edge set~$E\coloneqq\{(\{i\},j) : i\neq j\}$.
    Each vertex~$\{i\}\in P_1$ has supply~$b(\{i\})\coloneqq\pi(\{i\})$ and each vertex~$i\in P_2$ has demand~$b(i)\coloneqq v_i^{t+1}$ for every~$i\in [3]$.
    Thus, each edge is adjacent to one supply and one demand vertex.
    Since edge capacities match the supply of the adjacent supply vertex, we can treat this as a transshipment instance without explicit edge capacities.

    We claim that this transshipment instance has a feasible solution.
    First, note that the total supply and demand coincide.
    Now it suffices to verify that for all~$P'\subseteq P_1$, the total supply~$\sum_{p\in P'}b(p)$ is at most the total demand~$\sum_{p\in N(P')} b(p)$ of its neighborhood~$N(P')$ (cf. \cite[Corollary 11.2.g]{schrijver2004combinatorial}).
    We focus on three cases.
    For~$P'=\{\{1\}\}$, we have
    \[
    \sum_{p\in P'}b(p)
    =\pi(\{1\})
    \leq 1-\left( \left\lceil \sum_{k=1}^t v_1^k \right \rceil -\sum_{k=1}^t v_1^k \right)
    \leq 1- v_1^{t+1}
    =v_2^{t+1}+v_3^{t+1}
    =\sum_{v\in N(P')} b(p).
    \]
    For~$P'=\{\{1\},\{2\}\}$, we obtain
    \[
    \sum_{p\in P'}b(p)
    =\pi(\{1\})+\pi(\{2\})
    \leq 1
    =v_1^{t+1}+v_2^{t+1}+v_3^{t+1}
    =\sum_{p\in N(P')} b(p),
    \]
    and finally for~$P'=\{\{1\},\{2\},\{3\}\}$, we have
    \[
    \sum_{p\in P'}b(p)
    =\pi(\{1\})+\pi(\{2\})+\pi(\{3\})
    =1
    =v_1^{t+1}+v_2^{t+1}+v_3^{t+1}
    =\sum_{p\in N(P')} b(p).
    \]
    By symmetry, the condition also holds for all other subsets~$P'$.
    Hence, a feasible transshipment exists, implying the existence of a feasible flow~$f$ of value~$1$ in the original network.
    
    \paragraph{Case~1.1.2.}
    There exists exactly one party~$i\in[3]$ with~$v_i^{t+1}\geq \lceil \sum_{k=1}^t v^k_i\rceil - \sum_{k=1}^t v^k_i$.
    Without loss of generality, assume~$i=1$.
    The flow network is illustrated in \Cref{fig:case1.1.2}.
    We define~$f$ as follows.
    \begin{align*}
        f(e)=
        \begin{cases}
            v_1^{t+1}-(1-\pi(\{1\})) & \text{ if } e=(\{1\},1),\\
            v_2^{t+1} & \text{ if } e=(\{1\},2),\\
            v_3^{t+1} & \text{ if } e=(\{1\},3),\\
            0 & \text{ if } e=(\{2\},3) \text{ or } e=(\{3\},2),\text{ and } \\
            c(e) & \text{ otherwise.}
        \end{cases}
    \end{align*}
    We now verify feasibility.   
    For~$e=(\{1\},1)$, since~$1-\pi(\{1\}) \leq v_1^{t+1} < 1$ we have~$f(e)\in [0,\pi(\{1\})]$.
    For~$e=(\{1\},j)$ with~$j\in\{2,3\}$, we have~$v_j^{t+1}\leq \pi(\{1\})$ since~$v_1^{t+1}+v_j^{t+1}\leq 1 \leq v_1^{t+1}+\pi(\{1\})$.
    Thus, all capacities are respected.
    Flow conservation at vertex $1$ holds because
    \[
    \sum_{u\in U^t} f((u,1))=v_1^{t+1}-(1-\pi(\{1\})) + \pi(\{2\})+\pi(\{3\}) =v_1^{t+1} = f((1,d)).
    \]
    Flow conservation at all other inner vertices is trivial, and since the outgoing flow from~$o$ is~$1$,~$f$ is a feasible flow of value~$1$.

    \paragraph{Case~1.2.1.}
    There exists exactly one party~$i\in[3]$ with~$v_i^{t+1}\geq \lceil \sum_{k=1}^t v^k_i\rceil - \sum_{k=1}^t v^k_i$.
    Without loss of generality, assume~$i=1$.
    The flow network is illustrated in \Cref{fig:case1.2.1}.
    We define~$f$ as follows.
    \begin{align*}
        f(e)=
        \begin{cases}
            \pi(\{1,2\})-v_3^{t+1} & \text{ if } e=(\{1,2\},1),\\
            v_3^{t+1} & \text{ if } e=(\{1,2\},3), \\
            \pi(\{1,3\})-v_2^{t+1} & \text{ if } e=(\{1,3\},1),\\
            v_2^{t+1} & \text{ if } e=(\{1,3\},2), \\
            c(e) & \text{ otherwise.}
        \end{cases}
    \end{align*}
    To verify feasibility, observe that~$v_2^{t+1}<\pi(\{1,3\})$ and~$v_3^{t+1}<\pi(\{1,2\})$ as implied by Property~\eqref{eq:P[u]}.
    Hence, all flow values remain within the capacity bounds.
    Flow conservation at vertex~$1$ holds because
    \[
    \sum_{u\in U^t} f((u,1))=\pi(\{1,2\})-v_3^{t+1}+\pi(\{1,3\})-v_2^{t+1}+\pi(\{2,3\})=1-v_2^{t+1}-v_3^{t+1}=v_1^{t+1}=f((1,d)).
    \]
    Flow conservation at all other inner vertices is trivial, and since the outgoing flow from~$o$ is~$1$,~$f$ is a feasible flow of value~$1$.

    \paragraph{Case~1.2.2.}
    There exists exactly one party~$i\in[3]$ with~$v_i^{t+1}< \lceil \sum_{k=1}^t v^k_i\rceil - \sum_{k=1}^t v^k_i$.
    Without loss of generality, assume~$i=1$.
    The flow network is illustrated in \Cref{fig:case1.2.2}.
    We define~$f$ as follows.
    \begin{align*}
        f(e)=
        \begin{cases}
            v_1^{t+1}& \text{ if } e=(\{2,3\},1),\\
            v_2^{t+1}-\pi(\{1,3\}) & \text{ if } e=(\{2,3\},2),\\
            v_3^{t+1}-\pi(\{1,2\}) & \text{ if } e=(\{2,3\},3),\\
            0 & \text{ if } e=(\{1,2\},2) \text{ or } e=(\{1,3\},3),\text{ and } \\
            c(e) & \text{ otherwise.}
        \end{cases}
    \end{align*}
    Flow conservation at vertex~$\{2,3\}$ follows from
    \[
    \sum_{j=1}^3 f((\{2,3\},j))
    =v_1^{t+1}+v_2^{t+1}+v_3^{t+1}-\pi(\{1,3\})-\pi(\{1,2\})\\
    =\pi(\{2,3\})
    =f((o,\{2,3\})).
    \]
    Flow conservation at all other inner vertices is trivial.
    The non-negativity of~$f$ follows from Property~\eqref{eq:P[u]}, which is satisfied with equality in this case since parties~$j\in \{2,3\}$ satisfy $v_j^{t+1}\geq \lceil \sum_{k=1}^t v^k_j\rceil - \sum_{k=1}^t v^k_j$ and therefore $\sum_{k=1}^t v^k_j\notin \NN$.
    With~$\sum_{j=1}^3 f((\{2,3\},j))=\pi(\{2,3\})$, the non-negativity implies~$f(e)\leq \pi(\{2,3\})$ for all~$e=(\{2,3\},j)$.
    Since the outgoing flow from~$o$ is~$1$,~$f$ is a feasible flow of value~$1$.

    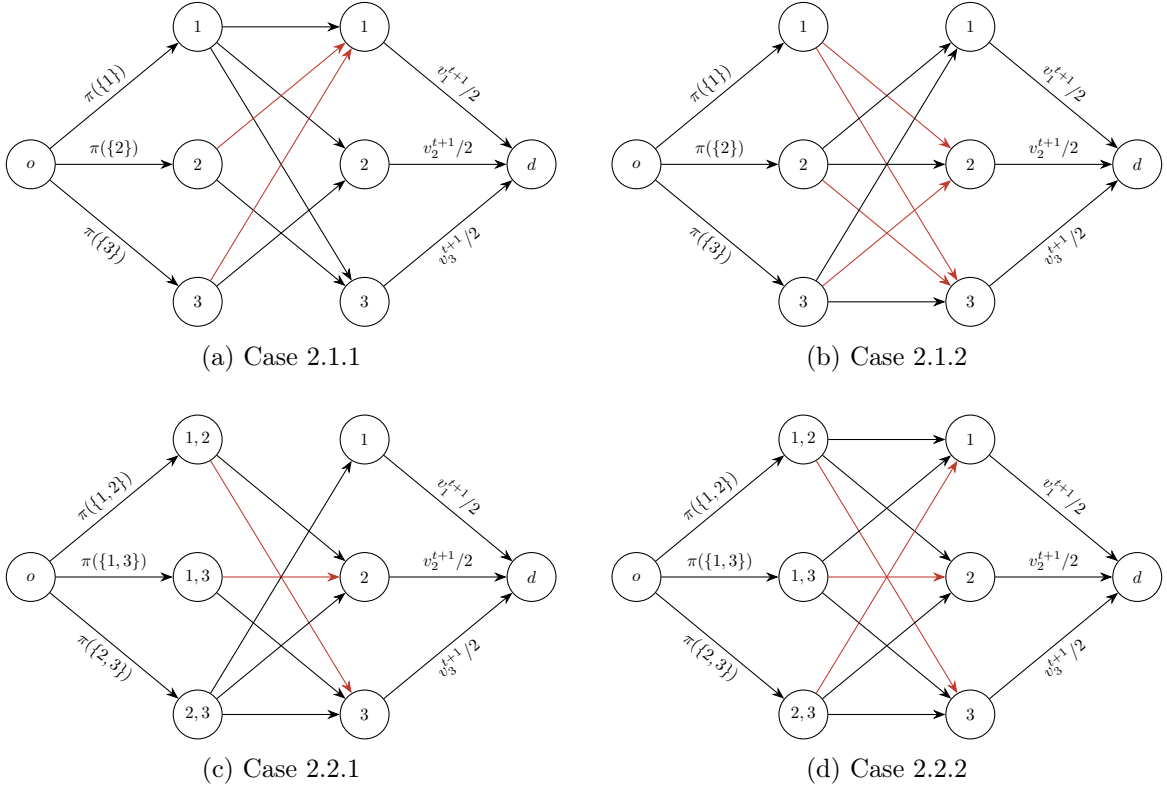
\begin{figure}[t]
    \centering
    \begin{subfigure}[b]{0.48\textwidth}
    \centering
    \begin{tikzpicture}[
      node distance=1.8cm and 2.4cm,
      vertex/.style={draw, circle, minimum size=1cm, font=\small}, % custom style for nodes
      ->, >=Stealth,
      scale=0.65, transform shape
    ]
    
        % Nodes
        \node[vertex] (o) at (0,0) {$o$};
        \node[vertex] (a2) [right=of o] {$2$};
        \node[vertex] (a1) [above=of a2] {$1$};
        \node[vertex] (a3) [below=of a2] {$3$};
        \node[vertex] (b1) [right=of a1] {$1$};
        \node[vertex] (b2) [right=of a2] {$2$};
        \node[vertex] (b3) [right=of a3] {$3$};
        \node[vertex] (d) [right=of b2] {$d$};
        
        % Edges with plain labels
        \draw (o) -- node[above, sloped, font=\small] {$\pi(\{1\})$} (a1);
        \draw (o) -- node[above, sloped, font=\small] {$\pi(\{2\})$} (a2);
        \draw (o) -- node[below, sloped, font=\small] {$\pi(\{3\})$} (a3);
        
        \draw (b1) -- node[above, sloped, font=\small] {$v^{t+1}_1/2$} (d);
        \draw (b2) -- node[above, sloped, font=\small] {$v^{t+1}_2/2$} (d);
        \draw (b3) -- node[below, sloped, font=\small] {$v^{t+1}_3/2$} (d);
        
        % Feasible edges
        \draw (a1) -- (b1);
        \draw (a1) -- (b2);
        \draw (a1) -- (b3);
        \draw[red] (a2) -- (b1);
        \draw (a2) -- (b3);
        \draw[red] (a3) -- (b1);
        \draw (a3) -- (b2);
        
    \end{tikzpicture}
    \caption{Case 2.1.1}
    \label{fig:case2.1.1}
    \end{subfigure}
    \begin{subfigure}[b]{0.48\textwidth}
    \centering
    \begin{tikzpicture}[
      node distance=1.8cm and 2.4cm,
      vertex/.style={draw, circle, minimum size=1cm, font=\small}, % custom style for nodes
      ->, >=Stealth,
      scale=0.65, transform shape
    ]
    
        % Nodes
        \node[vertex] (o) at (0,0) {$o$};
        \node[vertex] (a2) [right=of o] {$2$};
        \node[vertex] (a1) [above=of a2] {$1$};
        \node[vertex] (a3) [below=of a2] {$3$};
        \node[vertex] (b1) [right=of a1] {$1$};
        \node[vertex] (b2) [right=of a2] {$2$};
        \node[vertex] (b3) [right=of a3] {$3$};
        \node[vertex] (d) [right=of b2] {$d$};
        
        % Edges with plain labels
        \draw (o) -- node[above, sloped, font=\small] {$\pi(\{1\})$} (a1);
        \draw (o) -- node[above, sloped, font=\small] {$\pi(\{2\})$} (a2);
        \draw (o) -- node[below, sloped, font=\small] {$\pi(\{3\})$} (a3);
        
        \draw (b1) -- node[above, sloped, font=\small] {$v^{t+1}_1/2$} (d);
        \draw (b2) -- node[above, sloped, font=\small] {$v^{t+1}_2/2$} (d);
        \draw (b3) -- node[below, sloped, font=\small] {$v^{t+1}_3/2$} (d);
        
        % Feasible edges
        \draw[red] (a1) -- (b2);
        \draw[red] (a1) -- (b3);
        \draw (a2) -- (b1);
        \draw (a2) -- (b2);
        \draw[red] (a2) -- (b3);
        \draw (a3) -- (b1);
        \draw[red] (a3) -- (b2);
        \draw (a3) -- (b3);

    \end{tikzpicture}
    \caption{Case 2.1.2}
    \label{fig:case2.1.2}
    \end{subfigure}
    
    \vspace{.5cm}
    
    \begin{subfigure}[b]{0.48\textwidth}
    \centering
    \begin{tikzpicture}[
      node distance=1.8cm and 2.4cm,
      vertex/.style={draw, circle, minimum size=1cm, font=\small}, % custom style for nodes
      ->, >=Stealth,
      scale=0.65, transform shape
    ]
    
        % Nodes
        \node[vertex] (o) at (0,0) {$o$};
        \node[vertex] (a2) [right=of o] {$1,3$};
        \node[vertex] (a1) [above=of a2] {$1,2$};
        \node[vertex] (a3) [below=of a2] {$2,3$};
        \node[vertex] (b1) [right=of a1] {$1$};
        \node[vertex] (b2) [right=of a2] {$2$};
        \node[vertex] (b3) [right=of a3] {$3$};
        \node[vertex] (d) [right=of b2] {$d$};
        
        % Edges with plain labels
        \draw (o) -- node[above, sloped, font=\small] {$\pi(\{1,2\})$} (a1);
        \draw (o) -- node[above, sloped, font=\small] {$\pi(\{1,3\})$} (a2);
        \draw (o) -- node[below, sloped, font=\small] {$\pi(\{2,3\})$} (a3);
        
        \draw (b1) -- node[above, sloped, font=\small] {$v^{t+1}_1/2$} (d);
        \draw (b2) -- node[above, sloped, font=\small] {$v^{t+1}_2/2$} (d);
        \draw (b3) -- node[below, sloped, font=\small] {$v^{t+1}_3/2$} (d);
        
        % Feasible edges
        \draw (a1) -- (b2);
        \draw[red] (a1) -- (b3);
        \draw[red] (a2) -- (b2);
        \draw (a2) -- (b3);
        \draw (a3) -- (b1);
        \draw (a3) -- (b2);
        \draw (a3) -- (b3);

    \end{tikzpicture}
    \caption{Case 2.2.1}
    \label{fig:case2.2.1}
    \end{subfigure}
    \begin{subfigure}[b]{0.48\textwidth}
    \centering
    \begin{tikzpicture}[
      node distance=1.8cm and 2.4cm,
      vertex/.style={draw, circle, minimum size=1cm, font=\small}, % custom style for nodes
      ->, >=Stealth,
      scale=0.65, transform shape
    ]
    
        % Nodes
        \node[vertex] (o) at (0,0) {$o$};
        \node[vertex] (a2) [right=of o] {$1,3$};
        \node[vertex] (a1) [above=of a2] {$1,2$};
        \node[vertex] (a3) [below=of a2] {$2,3$};
        \node[vertex] (b1) [right=of a1] {$1$};
        \node[vertex] (b2) [right=of a2] {$2$};
        \node[vertex] (b3) [right=of a3] {$3$};
        \node[vertex] (d) [right=of b2] {$d$};
        
        % Edges with plain labels
        \draw (o) -- node[above, sloped, font=\small] {$\pi(\{1,2\})$} (a1);
        \draw (o) -- node[above, sloped, font=\small] {$\pi(\{1,3\})$} (a2);
        \draw (o) -- node[below, sloped, font=\small] {$\pi(\{2,3\})$} (a3);
        
        \draw (b1) -- node[above, sloped, font=\small] {$v^{t+1}_1/2$} (d);
        \draw (b2) -- node[above, sloped, font=\small] {$v^{t+1}_2/2$} (d);
        \draw (b3) -- node[below, sloped, font=\small] {$v^{t+1}_3/2$} (d);
        
        % Feasible edges
        \draw (a1) -- (b1);
        \draw (a1) -- (b2);
        \draw[red] (a1) -- (b3);
        \draw (a2) -- (b1);
        \draw[red] (a2) -- (b2);
        \draw (a2) -- (b3);
        \draw[red] (a3) -- (b1);
        \draw (a3) -- (b2);
        \draw (a3) -- (b3);

    \end{tikzpicture}
    \caption{Case 2.2.2}
    \label{fig:case2.2.2}
    \end{subfigure}
    \caption{Case 2 of the proof of \Cref{lem:flowpolytope} with~$H^{t+1}=2$.
    The upper capacity of an edge~$e=(u,i)$ is~$c(e)=\pi(u)/2$ if it is drawn and~$c(e)=0$, otherwise.
    Red edges have lower capacities~$\ell(e)=c(e)$.
    % Grey edges carry no flow in the unique flow with value~$1$.
    }
    \label{fig:case2}
\end{figure}
     
    \paragraph{Case~2.1.1.}  
    There exists exactly one party~$i\in[3]$ with~$v_i^{t+1}\geq \lceil \sum_{k=1}^t v^k_i\rceil - \sum_{k=1}^t v^k_i$.
    Without loss of generality, assume~$i=1$.
    The flow network is illustrated in \Cref{fig:case2.1.1}.
    We define~$f$ as follows.
    \begin{align*}
        f(e)=
        \begin{cases}
            (v_1^{t+1}-(1-\pi(\{1\})))/2 & \text{ if } e=(\{1\},1),\\
            (v_2^{t+1}-\pi(\{3\}))/2 & \text{ if } e=(\{1\},2),\\
            (v_3^{t+1}-\pi(\{2\}))/2 & \text{ if } e=(\{1\},3), \text{ and }\\
            c(e) & \text{ otherwise.}
        \end{cases}
    \end{align*}
    First we show flow conservation at vertex~$\{1\}$ by
    \begin{align*}
    \sum_{j=1}^3 f((\{1\},j))
    =\frac{v_1^{t+1}+v_2^{t+1}+v_3^{t+1}-\pi(\{2\})-\pi(\{3\})-(1-\pi(\{1\}))}{2}
    =\pi(\{1\})
    =f((o,\{1\})).
    \end{align*}
    Flow conservation at vertex $1$ holds because 
    \[
        \sum_{u\in U^t}f((u,1)) = \frac{v_1^{t+1}-(1-\pi(\{1\}))}{2} + \frac{\pi(\{2\})}{2}+\frac{\pi(\{3\})}{2} = \frac{v^{t+1}_1}{2} = f((1,d)).
    \]
    Flow conservation at all other inner vertices is trivially satisfied.
    
    To check the capacity constraints, consider~$e=(\{1\},1)$.
    Since~$v_1^{t+1}\geq 1-\pi(\{1\})$, we have~$f(e)\geq 0$.
    For~$e=(\{1\},2)$, note that
    \[
    v_2^{t+1}
    =2-v_1^{t+1}-v_3^{t+1}
    \geq 2-v_1^{t+1}-\left( \left\lceil \sum_{k=1}^t v^k_3\right\rceil - \sum_{k=1}^t v^k_3\right)
    \geq 1-v_1^{t+1}+\pi(\{3\})
    >\pi(\{3\}),
    \]where the first inequality follows from the case assumption, the second inequality from Property~\eqref{eq:P[u]}, and the third (strict) inequality from~$v_1^{t+1}<1$.
    Hence,~$f(e)>0$ and similar arguments yield~$f(e)>0$ for~$e=(\{1\},3)$.

    Next, we verify upper capacity constraints.
    For~$e=(\{1\},1)$, since~$v_1^{t+1}<1$, we have~$f(e)<\pi(\{1\})/2$.
    For~$e=(\{1\},2)$,
    \[
    f(e)=\frac{v_2^{t+2}-\pi(\{3\})}{2} \leq \frac{1-\pi(\{2\})-\pi(\{3\})}{2}=\frac{\pi(\{1\})}{2}=c(e),
    \]
    and analogously for~$e=(\{1\},3)$.
    Since the outgoing flow from~$o$ is~$1$,~$f$ is a feasible flow of value~$1$.

    \paragraph{Case~2.1.2.}
    There exists exactly one party~$i\in[3]$ with~$v_i^{t+1}< \lceil \sum_{k=1}^t v^k_i\rceil - \sum_{k=1}^t v^k_i$.
    Without loss of generality, assume~$i=1$.
    The flow network is illustrated in \Cref{fig:case2.1.2}.
    We define~$f$ as follows.
    \begin{align*}
        f(e)=
        \begin{cases}
            (1-v_2^{t+1})/2 & \text{ if } e=(\{2\},1),\\
            (v_2^{t+1}-(1-\pi(\{2\})))/2 & \text{ if } e=(\{2\},2),\\
            (1-v_3^{t+1})/2 & \text{ if } e=(\{3\},1),\\
            (v_3^{t+1}-(1-\pi(\{3\})))/2 & \text{ if } e=(\{3\},3), \text{ and }\\
            c(e) & \text{ otherwise.}
        \end{cases}
    \end{align*}
    We first verify flow conservation.
    Consider vertex~$\{2\}$ and observe that
    \[
    \sum_{j=1}^3 f((\{2\},j))
    =\frac{1-v_2^{t+1}+v_2^{t+1}-(1-\pi(\{2\}))+\pi(\{2\})}{2}    
    =\pi(\{2\})
    =f((o,\{2\})).
    \]
    An analogous computation applies to~$\{3\}$.
    For vertex~$1$, flow conservation holds because
    \[
    \sum_{j=1}^3 f((\{j\},1)
    =\frac{1-v_2^{t+1}+1-v_3^{t+1}}{2}
    =\frac{v_1^{t+1}}{2}
    =f((3,d)).
    \]
    For the remaining inner vertices, flow conservation is trivially satisfied.

    Next, we show that~$f$ obeys the capacity constraints.
    With~$1-\pi(\{j\})=\lceil\sum_{k=1}^t v^k_j\rceil - \sum_{k=1}^t v^k_j \leq v_j^{t+1}<1$ for~$j\in\{2,3\}$, all lower capacity constraints are satisfied.
    For~$e=(\{2\},1)$, observe that
    \[
    f(e)=\frac{1-v_2^{t+1}}{2}
    \leq \frac{1-(1-\pi(\{2\}))}{2}
    = \frac{\pi(\{2\})}{2}
    =c(e).
    \]
    An analogous computation yields~$f(e)\leq c(e)$ for~$e=(\{3\},1)$.
    For edges~$e=(\{2\},2)$ and~$e=(\{3\},3)$ the upper capacities are trivially satisfied since~$v_j^{t+1}<1$ for all~$j$.
    Since the outgoing flow from~$o$ is~$1$,~$f$ is a feasible flow of value~$1$.

    \paragraph{Case~2.2.1.}
    There exists exactly one party~$i\in[3]$ with~$v_i^{t+1}< \lceil \sum_{k=1}^t v^k_i\rceil - \sum_{k=1}^t v^k_i$.
    Without loss of generality, assume~$i=1$.
    The flow network is illustrated in \Cref{fig:case2.2.1}.
    We define~$f$ as follows.
    \begin{align*}
        f(e)=
        \begin{cases}
            v_1^{t+1}/2 & \text{ if } e=(\{2,3\},1),\\
            (v_2^{t+1}-(1-\pi(\{2,3\})))/2 & \text{ if } e=(\{2,3\},2),\\
            (v_3^{t+1}-(1-\pi(\{2,3\})))/2 & \text{ if } e=(\{2,3\},3), \text{ and }\\
            c(e) & \text{ otherwise.}
        \end{cases}
    \end{align*}
    We first verify flow conservation.
    For vertex~$\{2,3\}$ observe that
    \begin{align*}
    \sum_{j=1}^3 f((\{2,3\},j))
    &=\frac{v_1^{t+1}+v_2^{t+1}-(1-\pi(\{2,3\}))+v_3^{t+1}-(1-\pi(\{2,3\}))}{2}\\
    &=\frac{\pi(\{2,3\})}{2}
    =f((o,\{2,3\})).
    \end{align*}
    For the remaining inner vertices, flow conservation is trivially satisfied.

    Next, we show that~$f$ obeys the capacity constraints.
    For~$e=(\{2,3\},2)$, note that
    \[
    v_2^{t+1}
    =2-v_1^{t+1}-v_3^{t+1}
    \geq 2-v_3^{t+1}-\left( \left\lceil \sum_{k=1}^t v^k_1\right\rceil - \sum_{k=1}^t v^k_1 \right)
    \geq 2-v_3^{t+1}-\pi(\{2,3\})
    >1-\pi(\{2,3\}),
    \]
    where the first inequality follows from the case assumption, the second inequality from Property~\eqref{eq:P[u]}, and the third (strict) inequality from~$v_3^{t+1}<1$.
    An analogous computation applies to~$e=(\{2,3\},3)$, proving that~$\ell(e)\leq f(e)$ holds for all edges.
    For the upper capacity constraints, consider~$e=(\{2,3\},1)$.
    Property~\eqref{eq:P[u]} yields
    \[
    v_1^{t+1}
    <\left\lceil \sum_{k=1}^t v^k_1\right\rceil - \sum_{k=1}^t v^k_1
    \leq \pi(\{2,3\}),
    \]
    i.e.,~$f(e)<c(e)$.
    For~$e=(\{2,3\},2)$, we have~$(v_2^{t+1}-(1-\pi(\{2,3\}))/2<\pi(\{2,3\})/2$ since~$v_2^{t+1}<1$, and analogously for~$e=(\{2,3\},3)$.
    Since the outgoing flow from~$o$ is~$1$,~$f$ is a feasible flow of value~$1$.

    \paragraph{Case~2.2.2.}    
    The flow network is illustrated in \Cref{fig:case2.2.2}.
    This case is structurally similar to Case~1.1.1: the flow~$f$ is not uniquely determined.
    We begin by fixing the flow along the edges~$(\{1,2\},3),(\{1,3\},2)$, and $(\{2,3\},1)$, each of which lies on a unique~$(o,d)$-path.
    These edges must carry their respective lower bounds, which are~$\pi(\{1,2\})/2$,~$\pi(\{1,3\})/2$, and~$\pi(\{2,3\})/2$.

    Fixing these values reduces the residual problem to the flow network depicted in~\Cref{fig:case3}, where all remaining lower bounds are zero.
    The goal is now to route a residual flow of value
    \[
    1-\frac{\pi(\{1,2\}+\pi(\{1,3\})+\pi(\{2,3\})}{2}=\frac{1}{2}.
    \]

    To establish feasibility, we again reduce to a transshipment instance on a bipartite graph~$G\coloneqq(P_1\cup P_2, E)$ with vertex sets~$P_1\coloneqq \{\{1,2\},\{1,3\},\{2,3\}\}$, and~$P_2\coloneqq \{1,2,3\}$, and edge set~$E\coloneqq\{(\{i\},j) : i\neq j\}$.
    Each vertex~$\{i,j\}\in P_1$ has supply~$b(\{i,j\})\coloneqq\pi(\{i,j\})/2$ and each vertex~$i\in P_2$ has demand~$b(i)\coloneqq(v_i^{t+1}-\pi(\{j,k\}))/2$ where~$\{j,k\}=[3]\setminus\{i\}$.
    Thus, each edge is adjacent to one supply and one demand vertex.
    Since edge capacities match the supply of the adjacent supply vertex, we can treat this as a transshipment instance without explicit edge capacities.

    We claim that this transshipment instance has a feasible solution.
    First, note that the total supply and demand coincide.
    Now it suffices to verify that for all~$P'\subseteq P_1$, the total supply~$\sum_{p\in P'}b(p)$ is at most the total demand~$\sum_{p\in N(P')} b(p)$ of its neighborhood~$N(P')$.
    We focus on three cases.
    For~$P'=\{\{1,2\}\}$, we have
    \begin{align*}
        \sum_{p\in P'}b(p)
        &=\frac{\pi(\{1,2\}}{2}
        =\frac{1-\pi(\{1,3\})-\pi(\{2,3\})}{2}\\
        &=\frac{v_1^{t+1}+v_2^{t+1}+v_3^{t+1}-1-\pi(\{1,3\})-\pi(\{2,3\})}{2}\\
        &<\frac{v_1^{t+1}+v_2^{t+1}-\pi(\{1,3\})-\pi(\{2,3\})}{2}
        =\sum_{p\in N(P')} b(p).
    \end{align*}
    For~$P'=\{\{1,2\},\{1,3\}\}$, we obtain
    \[
    \sum_{p\in P'}b(p)
    =\frac{\pi(\{1,2\})+\pi(\{1,3\})}{2}
    \leq \frac{1}{2}
    =\sum_{p\in N(P')} b(p).
    \]
    Finally, for~$P'=\{\{1,2\},\{1,3\},\{2,3\}\}$, we have
    \[
    \sum_{p\in P'}b(p)
    =\frac{\pi(\{1,2\})+\pi(\{1,3\})+\pi(\{2,3\})}{2}
    =\frac{1}{2}
    =\sum_{p\in N(P')} b(p).
    \]
    By symmetry, the condition also holds for all other subsets~$P'$.
    Thus, the transshipment instance is feasible, which implies the existence of a feasible flow of value~$1/2$ in the residual network.
    Together with the fixed lower bound flow, we obtain a feasible flow of value~$1$, completing this case.
    
    \begin{figure}[t]
    \centering
    \begin{tikzpicture}[
      node distance=1.8cm and 2.4cm,
      vertex/.style={draw, circle, minimum size=1cm, font=\small}, % custom style for nodes
      ->, >=Stealth,
      scale=0.65, transform shape
    ]
    
        % Nodes
        \node[vertex] (o) at (0,0) {$o$};
        \node[vertex] (a2) [right=of o] {$1,3$};
        \node[vertex] (a1) [above=of a2] {$1,2$};
        \node[vertex] (a3) [below=of a2] {$2,3$};
        \node[vertex] (b1) [right=of a1] {$1$};
        \node[vertex] (b2) [right=of a2] {$2$};
        \node[vertex] (b3) [right=of a3] {$3$};
        \node[] (test) [right=of b2] {};
        \node[vertex] (d) [right=of test] {$d$};
        % \node[vertex] (d) [right=of b2] {$d$};
        
        % Edges with plain labels
        \draw (o) -- node[above, sloped, font=\small] {$\pi(\{1,2\})/2$} (a1);
        \draw (o) -- node[above, sloped, font=\small] {$\pi(\{1,3\})/2$} (a2);
        \draw (o) -- node[below, sloped, font=\small] {$\pi(\{2,3\})/2$} (a3);
        
        \draw (b1) -- node[above, sloped, font=\small] {$(v^{t+1}_1-\pi(\{2,3\}))/2$} (d);
        \draw (b2) -- node[above, sloped, font=\small] {$(v^{t+1}_2-\pi(\{1,3\}))/2$} (d);
        \draw (b3) -- node[below, sloped, font=\small] {$(v^{t+1}_3-\pi(\{1,2\}))/2$} (d);
        
        % Feasible edges
        \draw (a1) -- (b1);
        \draw (a1) -- (b2);
        % \draw[red] (a1) -- (b3);
        \draw (a2) -- (b1);
        % \draw[red] (a2) -- (b2);
        \draw (a2) -- (b3);
        % \draw[red] (a3) -- (b1);
        \draw (a3) -- (b2);
        \draw (a3) -- (b3);

    \end{tikzpicture}
    \caption{The residual flow network of Case 2.2.2 in the proof of \Cref{lem:flowpolytope}.
    % with~$H^{t+1}=2$.
    The upper capacity of an edge~$e=(u,i)$ is~$c(e)=\pi(u)/2$ if it is drawn and~$c(e)=0$, otherwise.
    % Red edges have lower capacities~$\ell(e)=c(e)$.
    % Grey edges carry no flow in the unique flow with value~$1$.
    }
    \label{fig:case3}
\end{figure}
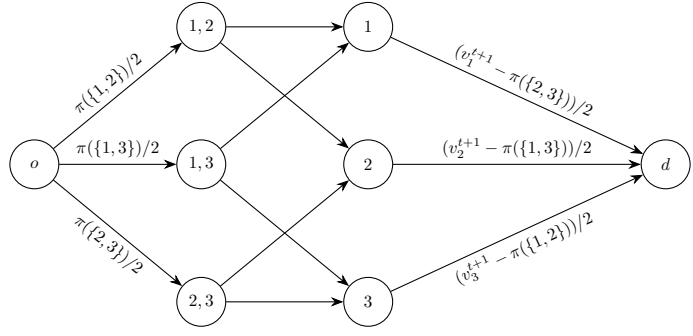
    This completes the case distinction.    
\end{proof}

\subsection{Proof of \Cref{prop:methods-one-to-one}\label{app:prop:methods-one-to-one}}

\lemmethodsonetoone*

\begin{proof}
    Consider an online apportionment method~$(y^t)_{t\in\NN}$ satisfying global quota and ex-ante proportionality.
    We need to show that one can obtain the same method using the network flow construction.
    First, consider step~$t=1$.
    Since~$\calA^0=\{0\}$ and~$V^0=0$, there is just one probability distribution~$y^1$ over the subsets of~$[n]$ of size~$H^1$.
    Let~$Z_1,\dots,Z_m\subset [n]$ be the support of~$y^1$ and let~$\lambda_1,\dots,\lambda_m\in[0,1]$ be the corresponding probabilities, i.e.,~$y^1(V^0,A^0,v^1,Z_j)=\lambda_j$ for every~$j$.
    Let~$z_j\in\{0,1\}^n$ denote the characteristic vector of~$Z_j$, i.e.,~$z_{j,i}=1$ if and only if~$i\in Z_j$.
    Since~$\vert Z_j \vert =H^1$ for every~$j$, the vector~$z_j$ has exactly~$H^1$ ones.
    Furthermore, ex-ante proportionality yields~$\sum_{j=1}^m \lambda_j\cdot z_{j}=v^1$.
    Thus, the probability distribution~$y^1$ yields a unique convex decomposition into~$n$-dimensional~$\{0,1\}$-vectors with exactly~$H^1$ ones.
    This convex decomposition can be chosen during the construction of a partial network flow method.

    Next, assume that up to step~$t$, the online apportionment method~$(y^k)_{k\in\NN}$ coincides with a partial network flow method~$(z^k)_{k\in [t]}$.
    Let~$\text{FN}(V^t,A^t,v^{t+1},\pi)$ be the flow network corresponding to the partial network flow method~$(z^k)_{k\in [t]}$ at step~$t+1$.
    We show how the method~$(y^k)_{k\in\NN}$ for step~$t+1$ yields a flow of value~$1$ in~$\text{FN}(V^t,A^t,v^{t+1},\pi)$.

    For each~$A^t\in\calA^t$, the method~$(y^t)_{t\in\NN}$ determines the probability~$\pi(y^t,V^t,A^t)$ with which the cumulative allocation~$A^t$ is attained.
    This uniquely determines the probability~$\pi(u(A^t))$ that exactly the parties in~$u(A^t)$ have been allocated their upper share of seats.
    Let~$\calA^t=\calA^t_1\cup\dots\cup\calA^t_m$ be a partition of~$\calA^t$ into disjoint subsets such that~$u(A^t)=u(\tilde{A}^t)$ if and only if~$A^t,\tilde{A}^t\in\calA^t_j$ for some~$j\in[m]$.
    Note that the flow network has exactly one vertex~$u_j$ for each~$\calA_j^t$.
    The probability that method~$(y^t)_{t\in\NN}$ assigns a seat to party~$i$ in step~$t+1$ conditioned on the cumulative allocation~$u_j$ is denoted by
    \[
    \pi_{u_j}(i)=\sum_{A^t\in \calA^t_j} \pi(y^t,V^t,A^t) \cdot \sum_{S\subseteq[n]:i\in S} y^{t+1}(V^t,A^t,v^{t+1},S).
    \]
    
    We define~$f$ as follows.
    For each edge~$(o,u_j)$ from the origin to a vertex~$u_j$, set~$f((o,u_j))=\pi(u_j)$;
    for each edge~$(u_j,i)$ from set~$u_j$ to party~$i$, set~$f((u_j,i))=\pi_{u_j}(i)/H^{t+1}$; and
    finally, for each edge~$(i,d)$ from a vertex~$i$ to the destination, set~$f((i,d))=v_i^{t+1}/H^{t+1}$.
    We claim that this defines a feasible flow of value~$1$.

    First, we show that~$f$ obeys the edge capacities.
    This is immediate for edges~$(o,u^t)$ and~$(i,d)$ as~$f$ equals the upper capacities.
    To verify feasibility on edges~$(u_j,i)$, we distinguish the three cases given by the flow construction.
    First, assume that~$i\in u_j$ and that the upper share of seats does not increase from step~$t$ to~$t+1$, i.e.,~$\lceil \sum_{k=1}^t v_i^k\rceil=\lceil \sum_{k=1}^{t+1} v_i^k\rceil$ or~$v_i^{t+1}=0$.
    We need to show that~$f((u_j,i))=0$.
    However, in that case, party~$i$ cannot be assigned a seat in step~$t+1$ without violating local or global quota.
    Since the method~$(y^t)_{t\in\NN}$ fulfills these properties, it assigns probability~$0$ to any set containing~$i$ for every~$A^t\in\calA^t_j$.
    Thus, we have~$\pi_{u_j}(i)=0$, implying~$f((u_j,i))=\pi_{u_j}(i)/H^{t+1}=0$.
    Next, assume~$i\notin u_j$ and the lower share increases from step~$t$ to~$t+1$, i.e.,~$\lfloor \sum_{k=1}^t v_i^k\rfloor+1=\lfloor \sum_{k=1}^{t+1} v_i^k\rfloor$.
    We need to show that~$f((u_j,i))=\pi(u_j)/H^{t+1}$.
    However, in that case, party~$i$ must receive a seat in step~$t+1$, as otherwise global quota would be violated.
    Since the method~$(y^t)_{t\in\NN}$ fulfills these properties, it only assigns positive probability to sets containing~$i$, which yields~$\sum_{S\subseteq[n]:i\in S}y^{t+1}(V^t,A^t,v^{t+1},S)=1$ for every~$A^t\in\calA^t_j$.
    Hence, we obtain
    \[
    f((u_j,i))=\frac{\pi_{u_j}(i)}{H^{t+1}}= \sum_{A\in \calA^t_j} \pi(y^t,V^t,A^t) \cdot \frac{1}{H^{t+1}}=\frac{\pi(u_j)}{H^{t+1}}.
    \]
    as desired.
    In all other cases, the method~$(y^t)_{t\in\NN}$ is less restricted and we have~$y^{t+1}(V^t,A^t,v^{t+1},S)\in[0,1]$ for any subset~$S\subseteq [n]$.
    We only need to show that~$f((u_j,i))\in[0,\pi(u_j)/H^{t+1}]$.
    This follows from
    \[
    f((u_j,i))=\frac{\pi_{u_j}(i)}{H^{t+1}}\leq \sum_{A\in \calA^t_j} \pi(y^t,V^t,A^t) \cdot \frac{1}{H^{t+1}}=\frac{\pi(u_j)}{H^{t+1}}.
    \]
    Thus,~$f$ obeys the capacity constraints.
   
    To verify flow conservation, we first consider an inner vertex~$u_j$.
    The incoming flow from~$o$ is~$f((o,u_j))=\pi(u_j)$.
    The outgoing flow is
    \begin{align*}
    \sum_{i=1}^n f((u_j,i))&=\sum_{i=1}^n \frac{\pi_{u_j}(i)}{H^{t+1}}\\
    &=
    \sum_{A\in \calA^t_j} \pi(y^t,V^t,A^t) \cdot \frac{1}{H^{t+1}}\cdot \left(\sum_{i=1}^n \sum_{S\subseteq[n]:i\in S} y^{t+1}(V^t,A^t,v^{t+1},S)\right)
    =\pi(u_j),
    \end{align*}
    where the last equality holds, since the method~$(y^t)_{t\in\NN}$ assigns exactly~$H^{t+1}$ seats in step~$t+1$.
    For an inner vertex~$i$,~$(y^t)_{t\in\NN}$ satisfying ex-ante proportionality yields
    \begin{align*}
    \sum_{u_j\in U^t} f((u_j,i))&= \sum_{u_j\in U^t}\frac{\pi_{u_j}(i)}{H^{t+1}}\\
    &=\sum_{u_j\in U^t} \frac{1}{H^{t+1}} \sum_{A\in \calA^t_j} \pi(y^t,V^t,A^t) \cdot \sum_{S\subseteq[n]:i\in S} y^{t+1}(V^t,A^t,v^{t+1},S)
    =\frac{v^{t+1}}{H^{t+1}}.
    \end{align*}
    Thus, the incoming flow matches the outgoing flow~$f((i,d))$, so flow conservation holds at every inner vertex.

    Since~$f$ obeys the capacity constraints of the flow network~$\text{FN}(V^t,A^t,v^{t+1},\pi)$ and attains flow conservation, it is a feasible flow.    
    Finally, the total flow leaving the origin~$o$ is~$\sum_{u_j\in U^t}\pi(u_j)=1$.
    Hence, the flow~$f$ has value~$1$ as claimed.
    This exact flow allows to extend the partial network flow method~$(z^k)_{k\in [t]}$ up to step~$t+1$ such that the online apportionment method~$(y^k)_{k\in\NN}$ coincides with the partial network flow method~$(z^k)_{k\in [t+1]}$.
    This finally proves that the online apportionment method~$(y^k)_{k\in\NN}$ is a network flow method.
\end{proof}

\newpage
\bibliographystyle{abbrvnat} 
\bibliography{literatur.bib}

\end{document}